\pdfoutput=1
\documentclass[12pt]{article}

 \usepackage[top=2.5cm,left=2.5cm,right=2.5cm,foot=1.5cm,bottom=2.5cm]{geometry}
 \usepackage{graphicx}
 \usepackage{amssymb}
 \usepackage{amsmath}
 \usepackage{amsthm}
 \usepackage{mathtools}
 \usepackage{bbold}
 \usepackage{amsmath, amsthm, amssymb}
 \usepackage{bbm}
 \usepackage{xcolor}
 \usepackage[authoryear,round]{natbib}
\usepackage{caption}
 \usepackage{subcaption,tikz}
 \usetikzlibrary{decorations.pathreplacing}

 \newtheorem{lemma}{Lemma}
 \newtheorem{theorem}{Theorem}
 
 \newtheorem{proposition}{Proposition}

 \newtheorem{corollary}{Corollary}

 \newcommand\norm[1]{\left\lVert#1\right\rVert}

 \usepackage{esint}
 \usepackage{hyperref}
 \definecolor{darkblue}{rgb}{0.0,0.0,0.6}
 \hypersetup{
 	colorlinks=true, %set true if you want colored links
 	linktoc=all,     %set to all if you want both sections and subsections linked
 	linkcolor=darkblue,  %choose some color if you want links to stand out
 	citecolor=darkblue,
 	urlcolor=darkblue,
 }
 \allowdisplaybreaks[1]
 \title{Equilibrium Stability and Uniqueness with a Large Number of Commodities and Patient Consumers}
\author{Xinyang Wang\thanks{Email: \href{mailto:xinyang.wang@itam.mx}{xinyang.wang@itam.mx}.}\\ \small The Center for Economic Research, ITAM,\\ \small Av. Camino Santa Teresa 930, Mexico City, 10700, Mexico}
 \date{}
 
 \usepackage[T1]{fontenc}
 \usepackage{tgpagella}

\begin{document}
 
\maketitle

\begin{abstract}

We show that a large effective number of commodities can be a source of equilibrium stability and uniqueness: expanding substitution opportunities strengthens aggregate substitution effects. We study finite dated-commodity exchange economies obtained by truncating a countably infinite-horizon environment with discounted, additively separable utilities. In this setting, the effective number of commodities is the discounted count of dated commodities, so greater patience makes distant commodities more relevant. With an appropriate normalization, equilibrium substitution effects accumulate at the rate of the effective number of commodities. When a preference diversification condition holds, equilibrium income effects grow at a lower rate. The condition is satisfied, for example, when agents have sparse or localized taste differences across commodities, or when their taste profiles become sufficiently heterogeneous as the commodity space expands. Hence, whenever the effective number of commodities is sufficiently large, every equilibrium is locally tâtonnement stable, which in turn implies equilibrium uniqueness.

\end{abstract}

\noindent\textbf{JEL Classification:} C62; D50; D51\\

\noindent\textbf{Keywords:} tâtonnement stability; equilibrium uniqueness; number of commodities; substitution effects; preference diversification

\section{Introduction}

T\^atonnement stability has been a central concern of general equilibrium theory since its early development. In a continuous-time t\^atonnement process, prices move in the direction of excess demand, and an equilibrium price is said to be locally t\^atonnement stable if every trajectory starting sufficiently close to it converges back to that equilibrium. Although \citet*{arrow1959toward} emphasized that t\^atonnement is not a fully satisfactory disequilibrium model of price adjustment, equilibrium stability remains an attractive property. It accords with the basic intuition that small perturbations should not drive the economy away from equilibrium and, through index theory, local stability at all equilibria implies equilibrium uniqueness, thereby giving a logical foundation to the common guess and verify approach to equilibrium analysis.

While it may be tempting to view equilibrium stability as a natural property of general equilibrium models, \citet*{scarf1960some} showed with a simple example that this need not hold. More fundamentally, the Sonnenschein--Mantel--Debreu theorem \citep*{sonnenschein1972market,mantel1974characterization,debreu1974excess} implies that, in finite economies, aggregate excess demand can display essentially arbitrary local behavior around equilibrium prices. Stability therefore cannot be expected without additional structure. Despite these negative results, many sufficient conditions for stability operate through a common logic: the law of demand holds locally at equilibrium. More specifically, at an interior equilibrium, the Jacobian of aggregate excess demand can be decomposed into a negative semi-definite substitution component and a potentially destabilizing income effect component. Stability therefore follows whenever income effects are too weak to overturn substitution effects.

The purpose of this paper is to show that a large effective number of commodities can itself be a source of equilibrium stability through the accumulation of substitution effects across commodity dimensions. Such commodity dimensions may arise naturally in dynamic environments or in models with differentiated commodities. I show that high commodity dimensionality can help general equilibrium models exhibit well-behaved equilibrium properties even when goods are not gross substitutes, and without imposing strong curvature bounds on utility functions or conditions that generate approximate quasilinearity, where each agent’s shadow value of money is nearly constant.

We establish our results in dated-commodity exchange economies by fixing an underlying countably infinite-horizon exchange economy and studying its truncations at date $N$, where agents' utilities take the form $U_i(x_i)=\sum_{n=0}^N \beta^n u_{in}(x_{in})$. Stability does not arise merely from a large number of commodities: the additional dated commodities must also remain economically relevant. This is where patience matters. We therefore define the effective number of commodities as $N_\beta=\sum_{n=1}^N \beta^n$, the discounted count of commodities. This index is large only when the horizon is long and consumers are patient. Our main result shows that, under some baseline assumptions and a preference diversification assumption, once $N_\beta$ is sufficiently large, every equilibrium is locally tâtonnement stable. Consequently, every equilibrium has index $+1$, and equilibrium uniqueness follows from the index theorem.

The main difficulty in exploiting a large number of commodities is that commodity expansion cuts in both directions. As the number of commodities grows, expenditure is spread across a wider range of goods, so the income effect on any individual good is diluted. At the same time, however, the substitution effect between any two goods also becomes small. To make matters worse, in our setting, the largest nonzero eigenvalue of the substitution matrix can also converge to zero, reflecting the discounted utility structure. It is therefore not enough to argue that the income effect of each commodity, or even the operator norm of the income effect matrix, vanishes in large commodity spaces.

Instead, we show that, under an appropriate normalization, the equilibrium substitution effect accumulates at the rate of the effective number of commodities. The intuition is that, although the substitution effect between any two goods becomes small, the number of substitution opportunities increases with the number of commodities. These effects therefore add up across commodities, so the aggregate substitution effect grows at the rate of $N_\beta$.

Although a large effective number of commodities strengthens substitution through accumulation, it does not by itself guarantee stability. To obtain the required asymptotic comparison between substitution and income effects, we employ a preference diversification assumption. It requires that, as the number of commodities grows, agents' marginal expenditure-share profiles in equilibrium do not become concentrated along a small number of common directions. This limits the rate at which the income effect can accumulate. The condition is especially plausible when growth in the number of commodities reflects product differentiation: agents may have sparse or localized taste differences across commodities, as in Hotelling-type environments, or their taste profiles may become sufficiently heterogeneous as the commodity space expands. In this respect, it is in the spirit of the classical literature that uses dispersion conditions to control income effects, including \citet*{grandmont1992transformations}, \citet*{quah1997law}, and \citet*{jerison1999dispersed}. Nevertheless, because substitution effects themselves accumulate in our setting, the required diversification is substantially milder.

This paper shows that a large number of commodities provides another route to equilibrium stability and uniqueness in general equilibrium models. Classical sufficient conditions for stability include gross substitutability, as in \citet*{arrow1959stability}; quasi-linearity, for instance as a consequence of the permanent income hypothesis, as in \citet*{bewley1980permanent1}; and curvature bounds on utility that ensure the law of demand, beginning with \citet*{milleron1974unicite} and \citet*{polterovich1978criteria}.

The main stability mechanism in our paper is that a large number of commodities generates a strong accumulation of substitution effects. In this respect, our discussion is close to \citet*{quah2000monotonicity,quah2004aggregate} and \citet*{dana1995extension}, who employ curvature restrictions on utility to produce dominant substitution effects, a terminology used in \citet*{mas1991uniqueness}. In these works, when agents are heterogeneous, such curvature conditions are usually paired with distributional restrictions on income or preferences in order to obtain the law of demand, an approach going back at least to \citet*{hildenbrand1983law}. By contrast, in our setting, the large number of commodities itself becomes a source of dominant substitution effects and thereby weakens the distributional conditions needed to obtain stability.

Our discussion complements the long tradition of using a large number of commodities to control income effects. 
\citet*{vives1987small} and \citet*{hayashi2008note} show that income effects become small when the number of commodities is large, under restrictions that keep prices uniformly bounded away from zero.  
Such restrictions do not directly apply to our general equilibrium setting, in which equilibrium prices typically decline along the horizon at the rate of $\beta^n$. Relatedly, \citet*{keenan2025global} argue that a large number of commodities can make demand diffuse, thereby controlling income effects. Their maintained lower-bound assumption on the aggregate substitution effect is not directly suited to our discounted dated-commodity setting, where the largest nonzero eigenvalue of the equilibrium substitution matrix typically converges to zero as the horizon grows. Without imposing price lower bounds, \citet*{weretka2018quasilinear} and \citet*{weretka2026welfare}
use patience to obtain approximate quasi-linearity: finite saving changes are spread over many future periods, making marginal utilities of money nearly invariant. Our approach allows shadow values of money to vary, and even to fail to converge. Stability comes instead from the accumulation of aggregate substitution effects.

The size of the commodity space can be central for economic analysis around stability and uniqueness. At the
low-dimensional end, \citet*{GeanakoplosWalsh2018} study uniqueness and
stability in two-good economies. At the opposite end,
\citet*{hayashi2013smallness} works with a continuum of commodities and
shows that the income effect of an arbitrarily small commodity vanishes in
the limit. Countable-commodity economies have also been studied: \citet{kehoe1991gross} formulate excess
demand on infinite, potentially unbounded, price sequences and obtain uniqueness results
under gross substitutability. Very recently,
\citet*{lorenzoniwerning2025tatonnement} revisit tâtonnement stability in a
dynamic general equilibrium model with explicit price setting. See
\citet*{toda2024recent} for a recent survey.

The rest of the paper is organized as follows. Section \ref{section:model} describes the economies we study and defines stability. Section \ref{section:assumptions} states the assumptions. Section \ref{section:mainresult} presents the main results. Section \ref{section:proof} proves the main theorem by showing that the substitution effect accumulates while the income effect remains controlled as the number of commodities grows. Section \ref{section:discussion} discusses our modeling choices and assumptions, provides examples, and compares our stability condition with classical conditions. Section \ref{sec:conclude} concludes.

\section{Model}\label{section:model}

\subsection{Economy}

We begin with an underlying infinite-horizon environment with countably many agents and countably many commodities, and then define the finite economies we study as truncations of this environment.

Agents are indexed by the natural numbers $\mathbb{N}$, and commodities are indexed by the nonnegative integers $0,1,\dots$. We interpret this as a dated-commodity environment, where commodity $n$ is the date-$n$ commodity. Date $0$ is today, and all other dates are future dates. We restrict attention to additively separable preferences. For each agent $i\in\mathbb N$ and date $n\ge 0$, let $u_{in}:\mathbb R_+\to\mathbb R$ denote agent $i$'s period-$n$ utility function, and let $\omega_{in}$ denote her date-$n$ endowment. We assume that each infinite-horizon endowment vector $(\omega_{i0},\omega_{i1},\dots)$ belongs to $\ell^\infty(\mathbb R_+)$.

For any discount factor $\beta\in(0,1)$, agent $i$'s utility over consumption streams $x_i=(x_{i0},x_{i1},\dots)\in \ell^\infty(\mathbb R_+)$ is
$$
U_i^\beta(x_i)=\sum_{n=0}^\infty \beta^n u_{in}(x_{in}),
$$
where $\beta$ is a common discount factor.\footnote{Section \ref{subsec:choiceofutilityform} discusses the modeling role of discounting.}

Our main interest is in regimes with both a large number of commodities and sufficiently patient consumers. For this reason, we index each finite economy by its terminal date $N\ge 1$ and the discount factor $\beta\in(0,1)$.

For each pair $(N,\beta)$, let $\mathcal E^{N,\beta}=\bigl(I_N,(U_i^{N,\beta},\omega_i^N)_{i\in I_N}\bigr)$
denote the corresponding finite economy, in which commodities are indexed by dates $0,1,\dots,N$. We let $I_N=\{1,2,\dots,I_N\}$ denote the set of agents in this truncated economy, and we use $I_N$ also to denote its cardinality. As $N$ varies, we allow the population size $I_N$ to vary with $N$, though this is not required. Agent $i$'s preferences on $\mathbb R_+^{N+1}$ are represented by
$$
U_i^{N,\beta}(x_i)=\sum_{n=0}^N \beta^n u_{in}(x_{in}),
$$
where $x_i=(x_{i0},x_{i1},\dots,x_{iN})\in\mathbb R_+^{N+1}$, and her truncated endowment vector is
$\omega_i^N=(\omega_{i0},\omega_{i1},\dots,\omega_{iN})\in\mathbb R_+^{N+1}$.

For each date $n=0,\dots,N$, aggregate endowment of commodity $n$ in economy $\mathcal E^{N,\beta}$ is denoted by
$$
\Omega_n^N=\sum_{i\in I_N}\omega_{in}.
$$

We define the effective number of commodities in $\mathcal E^{N,\beta}$ by
$$
N_\beta:=\sum_{n=1}^N \beta^n.
$$
This is the discounted count of future commodities in the economy. Our asymptotic analysis is carried out in regimes in which $N_\beta$ is large. Thus, what matters is not only that the number of commodities $N$ be large, but also that consumers be sufficiently patient, so that distant commodities continue to matter for agents' decisions. In particular, if $N_\beta$ becomes arbitrarily large, then necessarily $N$ becomes arbitrarily large and $\beta$ becomes arbitrarily close to one.

\subsection{Equilibrium and Stability}

We now define competitive equilibrium and local stability for a fixed economy $\mathcal E^{N,\beta}$. Throughout the paper, we suppress the superscript $(N,\beta)$ whenever no confusion can arise. In principle, all objects depend on the underlying economy under consideration.

We first define competitive equilibrium. Throughout, we normalize the price of commodity $0$ to one, and let $p=(p_1,\dots,p_N)\in\mathbb R_{++}^N$ denote the prices of future commodities. At price $p$, agent $i$ chooses a consumption bundle $x_i\in \mathbb{R}_+^{N+1}$ to maximize $U_i$ subject to the budget constraint $x_{i0}+\sum_{n=1}^N p_n x_{in}\le \omega_{i0}+\sum_{n=1}^N p_n\omega_{in}.$
Let $\xi_i(p)=(\xi_{i0}(p),\dots,\xi_{iN}(p))$ denote agent $i$'s Marshallian demand, the maximizer of agent $i$'s problem, in $\mathcal E^{N,\beta}$. Aggregate excess demand for future commodities is denoted by $z(p)=(z_1(p),\dots,z_N(p))$, where
$$
z_n(p):=\sum_{i\in I_N}\xi_{in}(p)-\Omega_n^N,
\qquad n=1,\dots,N.
$$

A competitive equilibrium of $\mathcal E^{N,\beta}$ is a pair $(\bar p,\bar x)$ such that $\bar p\in\mathbb R_+^N$ satisfies market clearing for all future commodities,
$z(\bar p)=0,$ and the allocation $\bar x=(\bar x_i)_{i\in I_N}$ satisfies $\bar x_i=\xi_i(\bar p)$ for every $i\in I_N$. The market clearing condition for date-$0$ market is implied by Walras' law.

We study the standard continuous-time tâtonnement process
$$
p'(t)=z(p(t)), \qquad t\ge 0.
$$
This process formalizes the idea that prices adjust in the direction of excess demand.

We say that an interior equilibrium $(\bar p,\bar x)$ is \emph{locally tâtonnement stable} if there exists an open neighborhood $P$ of $\bar p$ such that, for every initial condition $p(0)\in P$, the solution $p(t)$ converges to $\bar p$. By the standard linearization principle for smooth differential equations, local tâtonnement stability is implied if $Dz(\bar p)$ is negative definite, where
$Dz(\bar p)=\left(\frac{\partial z_n}{\partial p_m}(\bar p)\right)_{m,n=1}^N$
is the Jacobian matrix of aggregate excess demand at $\bar p$.

\section{Assumptions}\label{section:assumptions}

We impose five assumptions. The first four are baseline assumptions on endowments and preferences and the fifth is a preference diversification assumption. As the effective number of commodities grows, the former ensures that substitution effects accumulate, while the latter keeps the income effect under control.

To state the assumptions, for each agent $i\in \mathbb{N}$, we define two quantities. First, for each date $n\ge 0$, let agent $i$'s date-$n$ risk-tolerance function be
$$
r_{in}(x):=\frac{u_{in}'(x)}{-u_{in}''(x)}.
$$
This is the reciprocal of the Arrow-Pratt coefficient of absolute risk aversion for period-$n$ utility, and will be useful in characterizing how agent $i$'s date-$n$ demand responds to wealth changes. In addition, for each economy $\mathcal{E}^{N,\beta}$ and each dated-commodity $1\le n\le N$, we define its normalized commodity weight as 
$$
\pi_n^{N,\beta}:=\frac{\beta^n}{N_\beta}.
$$
The coefficient $\pi_n^{N,\beta}$ is date $n$'s share in the effective number of commodities $N_\beta$.

\subsection{Baseline Assumptions}

\begin{enumerate}
	\item[(A1)] For every $i\in \mathbb{N}$ and every $n\ge 0$, $u_{in}$ is twice continuously differentiable on $\mathbb R_{++}$, satisfies $u_{in}'(x)>0$ and $u_{in}''(x)<0$ for every $x>0$, and satisfies the Inada condition $\lim_{x\to 0}u_{in}'(x)=+\infty.$
	
	\item[(A2)] There exist constants $0<c_u<C_u<\infty$ such that, for every $i\in\mathbb N$, every $n\ge 0$, and every $x>0$,
	$$
	c_u \le \frac{u_{in}'(x)}{u_{i0}'(x)} \le C_u, \qquad \text{~and~}
	\qquad 
	c_u x \le r_{in}(x)\le C_u x.
	$$

	\item[(A3)] There exist constants $0<c_\Omega < C_\Omega<\infty$ such that for every  $N\ge 1$ and $0\le n\le N$, 
		$$0<c_\Omega \le \frac{\Omega_n^N}{I_N}\le C_\Omega <\infty.$$
	
\item[(A4)] There exist constants $0<c_W<C_W<\infty$ such that, for every $N\ge 1$ and every $i\in I_N$,
$$
\sum_{n=1}^N \beta^n \omega_{in} \ge c_W N_\beta, \qquad \text{~and~}
\qquad \omega_{in}\le C_W  \qquad \text{~for~every~} n=0,\dots,N.
$$
\end{enumerate}

Assumptions \textnormal{(A1)} and \textnormal{(A2)} impose restrictions on preferences. Assumption \textnormal{(A1)} collects standard regularity conditions. Assumption \textnormal{(A2)} imposes uniform bounds on marginal utility and risk tolerance. Both assumptions are satisfied by isoelastic utility functions with taste variations, where
$$
u_{in}(x)=\tau_{in}
\begin{cases}
	\dfrac{x^{1-1/\sigma_{i}}-1}{1-1/\sigma_{i}}, & \sigma_{i}>0,\ \sigma_{i}\neq 1,\\
	\log x, & \sigma_{i}=1,
\end{cases}
$$
and $\tau_{in}>0$ is interpreted as agent $i$'s taste on commodity $n$. In this case, $u_{in}'(x)=\tau_{in}x^{-1/\sigma_i}$, $u_{in}''(x)= - \frac{\tau_{in}}{\sigma_{i}}x^{-1/\sigma_{i}-1}$ and $r_{in}(x)=\sigma_{i}x$. Therefore, Assumptions \textnormal{(A1)} and \textnormal{(A2)} hold whenever the relevant taste	coefficients are uniformly bounded: $0<\underline{\tau}\le \tau_{in}\le \bar{\tau}<\infty$ for all $i\in\mathbb{N}$ and $n\ge 0$.

Assumptions \textnormal{(A3)} and \textnormal{(A4)} impose restrictions on endowments. Assumption \textnormal{(A3)} requires per-capita aggregate endowment at each date to remain uniformly bounded from above and away from zero. In particular, it rules out situations in which some dates become asymptotically negligible or asymptotically dominant. Assumption \textnormal{(A4)} requires each agent's total discounted endowment to be comparable to the effective number of commodities $N_\beta$ and individual endowments to remain bounded as the number of commodities grows. This rules out cases in which, for some agents, almost all relevant endowment is concentrated in only a few commodities or individual endowments become unbounded. Both assumptions are satisfied under a stronger interior-endowment assumption: there exist constants $0<c\le C<\infty$ such that
$c\le \omega_{in}\le C$ holds for all $i\in \mathbb{N}$ and $n\ge 0$.

\subsection{Preference Diversification Assumption}\label{subsec:diversification-assumption}

To control the income effect as the effective number of commodities grows, we impose a preference diversification assumption. The relevant object is the vector of marginal expenditure shares induced by a small change in future wealth.

For each agent $i$ in economy $\mathcal{E}^{N,\beta}$ and each future consumption bundle $x_i=(x_{i1},\dots,x_{iN}) \in\mathbb{R}_{++}^N$, define
$$
m_{in}(x_i):=
\frac{\beta^n u_{in}'(x_{in})\,r_{in}(x_{in})}
{\sum_{m=1}^N \beta^m u_{im}'(x_{im})\,r_{im}(x_{im})},
\qquad n=1,\dots,N.\footnote{Under \textnormal{(A1)} and \textnormal{(A2)}, we have $m_{in}(x_i)>0$ for every interior future consumption vector, so this quantity is well defined.}
$$
In \ref{appendix:marginal-spending-profile}, we show that $m_{in}(x_i)$ can be interpreted as agent $i$'s marginal expenditure share on commodity $n$ out of an additional unit of future wealth. The reason is that, when $x_i$ is an optimal consumption choice under price $p$, $m_{in}(x_i)=p_n\frac{\partial x_{in}}{\partial w_i}$, where $w_i$ denotes wealth spent on future commodities. Equivalently, it is the price-weighted slope of the Engel curve of commodity $n$ evaluated at the relevant future wealth.

For each allocation $x=(x_i)_{i\in I_N}$ with $x_i\in \mathbb{R}_{++}^N$ for all $i\in I_N$, let
$$
\bar m_n(x)=\frac{1}{I_N}\sum_{i\in I_N} m_{in}(x_i)
$$
denote the population average of marginal expenditure shares on commodity $n$. Define the relative deviation of agent $i$'s marginal expenditure share on commodity $n$ from this population average by
\begin{equation}
	\rho_{in}(x)=\frac{m_{in}(x_i)-\bar m_n(x)}{\bar m_n(x)}.
	\label{eqn:relativedeviation}
\end{equation}

Let $\rho_i(x)=(\rho_{i1}(x),\dots,\rho_{iN}(x))$, and write $\bar\rho_i=\rho_i(\bar x)$ for an equilibrium allocation $\bar x$. The vector $\bar\rho_i$ captures agent $i$'s normalized expenditure-share profile, where normalization is by relative deviation from the population average. When we wish to emphasize the underlying economy, we write $\bar\rho_i^{N,\beta}=\rho_i(\bar x^{N,\beta})$
for an equilibrium allocation $\bar x^{N,\beta}$ of economy $\mathcal E^{N,\beta}$.

We impose an assumption to prevent these normalized expenditure-share profiles from staying strongly aligned across agents at equilibrium. To measure such alignment, for any two vectors in $\mathbb{R}^N$, define
\begin{equation}
	\langle x,y\rangle_{N,\beta}=\sum_{n=1}^N \pi_n^{N,\beta} x_ny_n=\frac{1}{N_\beta}\sum_{n=1}^N \beta^n x_n y_n.
	\label{eqn:n-product}
\end{equation}
This inner product measures weighted average alignment across dates, where the weights $\pi_n^{N,\beta}$ reflect the economic relevance of date-$n$ commodity. We denote the induced norm by
\begin{equation}
	\norm{x}_{N,\beta}=\langle x,x\rangle_{N,\beta}^{1/2}=\left(\sum_{n=1}^N \pi_n^{N,\beta} x_n^2\right)^{1/2}.
	\label{eqn:n-norm}
\end{equation}

We now state the preference diversification assumption:
\begin{enumerate}
	\item[(A5)] For any family of equilibria $\{(\bar p^{N,\beta},\bar x^{N,\beta})\}$ such that $(\bar p^{N,\beta},\bar x^{N,\beta})$ is an equilibrium of $\mathcal E^{N,\beta}$ for each $(N,\beta)$,
	$$
	\frac{1}{I_N^2}\sum_{i,j\in I_N}
	\left|\left\langle \bar\rho_i^{N,\beta},\bar\rho_j^{N,\beta}\right\rangle_{N,\beta}\right|
	\to 0
	\qquad\text{as } N_\beta\to\infty.
	$$
\end{enumerate}

Assumption \textnormal{(A5)} is an equilibrium-level preference diversification condition. It requires that, along any family of equilibria, the average weighted alignment of agents' normalized marginal expenditure-share profiles vanishes as the effective number of commodities grows. Equivalently, it rules out the case in which many agents' relative deviations from the population average remain concentrated along only a small number of common directions. In this sense, the condition is a local, high-dimensional analogue of the idea that dispersion in individual demand behavior can regularize aggregate demand.\footnote{This perspective is related to the classical aggregation literature on the law of demand and the weak axiom, including \citet*{hildenbrand1983law}, \citet*{grandmont1987distributions,grandmont1992transformations}, \citet*{freixas1987engel}, \citet*{jerison1999dispersed}, and \citet*{quah1997law}. Unlike the conditions in these papers, the present condition is local and is used only to control the income-effect term in the large-$N_\beta$ limit.}

Assumption \textnormal{(A5)} is tractable in structured environments. For instance, in the logarithmic case $u_{in}(x)=\tau_{in}\log x$, the marginal expenditure-share profile $m_i(x_i)$ is independent of $x_i$. Thus, the condition becomes a primitive restriction on taste-deviation profiles and admits a no-concentration interpretation, as Section \ref{subsection:assumptiondiscussion} shows. We also give a non-logarithmic example, with the aid of additional structure on endowments, in Section \ref{subsec:comparisons}. 

This preference diversification condition may arise naturally in two scenarios: when taste differences are sparse or localized across commodities, as in Hotelling-type environments, or when taste profiles become sufficiently heterogeneous as the commodity space grows; Section \ref{subsection:examples} discusses both cases. The former scenario is compatible with a fixed population size, $I_N=I$, because each agent's normalized deviation profile can become small in the weighted commodity space. The latter scenario is more plausible in economies with a large number of agents: when the number of commodities is large, a sufficiently rich class of agents can make the average pairwise alignment of taste-deviation profiles small. Our formulation allows such economies.

Finally, Assumption \textnormal{(A5)} can be weakened formally: as the proof of Proposition \ref{prop:M-upper-bound} shows, Theorem \ref{thm:main} and Corollary \ref{cor:global-uniqueness} continue to hold if \textnormal{(A5)} is replaced by the following weaker condition:
\begin{enumerate}
	\item[(A5')] For any family of equilibria $\{(\bar p^{N,\beta},\bar x^{N,\beta})\}$, where $(\bar p^{N,\beta},\bar x^{N,\beta})$ is an equilibrium of $\mathcal E^{N,\beta}$ for each $(N,\beta)$,
	$$
	\frac{1}{I_N^2}\sum_{i,j\in I_N}
	\langle t_i^{N,\beta},t_j^{N,\beta}\rangle_{N,\beta}
	\langle \bar\rho_i^{N,\beta},\bar\rho_j^{N,\beta}\rangle_{N,\beta}
	\to 0
	\qquad\text{as } N_\beta\to\infty,
	$$
	where $t_i^{N,\beta}=(\bar x_{in}^{N,\beta}-\omega_{in})_{n=1}^N$ is agent $i$'s equilibrium net-trade profile.
\end{enumerate}
Condition \textnormal{(A5')} controls the cross-alignment between equilibrium net-trade profiles and equilibrium marginal-expenditure-share deviation profiles. It is implied by \textnormal{(A5)}, because equilibrium net trades are uniformly bounded under the baseline assumptions, but it is considerably weaker. We state and discuss Assumption \textnormal{(A5)} because it has a more transparent interpretation as a preference-diversification condition. Section \ref{subsection:necessity} shows why some assumption of this kind is needed for our analysis: in a two-type economy where preferences fail to diversify as the number of commodities becomes large, the income effect can grow at the same rate as the substitution effect.

\section{Equilibrium Stability and Uniqueness}\label{section:mainresult}

Our main result states that every competitive equilibrium is locally tâtonnement stable when the effective number of commodities $N_\beta$ is sufficiently large. Indeed, we prove a slightly stronger result: the Jacobian of aggregate excess demand at any equilibrium price is negative definite.

\begin{theorem}\label{thm:main}
	Suppose Assumptions \textnormal{(A1)}--\textnormal{(A5)} hold. Then there exists a constant $N^*$ such that, whenever $N_\beta>N^*$, every equilibrium price $\bar p$ of $\mathcal E^{N,\beta}$ is locally tâtonnement stable. Moreover, $Dz(\bar p)$ is negative definite.
\end{theorem}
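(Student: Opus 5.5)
We will work with the Slutsky decomposition of the aggregate Jacobian at an interior equilibrium $\bar p$ (interiority follows from the Inada condition in (A1)). Writing $t_i=(\bar x_{in}-\omega_{in})_{n=1}^N$ for agent $i$'s net trade, we have
$$
Dz(\bar p)=\sum_{i\in I_N} S_i(\bar p)-\sum_{i\in I_N}\partial_w\xi_i\, t_i^\top ,
$$
where $S_i$ is the (future-commodity block of the) Slutsky matrix. Rather than test negative definiteness in the Euclidean inner product, we test it in a rescaled one. Set $D=\mathrm{diag}(\bar p_n)$ and, for $v\in\mathbb R^N$, consider the quadratic form $v^\top D\,Dz(\bar p)\,D v$. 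Negative definiteness of this form for all $v\ne0$ is equivalent to that of $Dz(\bar p)$. For additively separable utility, each $S_i$ has the explicit form
$$
S_i=\lambda_i^{-1}\bigl(\mathrm{diag}(a_{in})-\tfrac{a_ia_i^\top}{\sum_m a_{im}}\bigr),\qquad a_{in}=\beta^n u_{in}'r_{in}/\bar p_n^2 ,
$$
after including date $0$ as a numeraire. Using (A2), $u'_{in}r_{in}\asymp u'_{in}x$. Using (A3)--(A4), equilibrium prices satisfy $\bar p_n\asymp\beta^n$ (up to uniform constants), equilibrium consumption is bounded above and below, and marginal utilities of income $\lambda_i$ are uniformly comparable.

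\textbf{Step 1 (substitution accumulates).} Show that, after rescaling by $I_N^{-1}$, the substitution form satisfies
$$
-v^\top D\Big(\sum_i S_i\Big)D v\ \ge\ c\,I_N\, N_\beta\,\bigl(\|v\|_{N,\beta}^2-\text{(projection onto the budget direction)}\bigr).
$$
The diagonal part gives $\sum_n \beta^n v_n^2\asymp N_\beta\|v\|_{N,\beta}^2$. The rank-one correction removes only the component along the vector $(1)$, weighted by $m_i$. Since date $0$ is also a good absorbing the correction (its weight enters the denominator), the Slutsky form remains uniformly negative on all of $\mathbb R^N$, with a loss of only a bounded factor. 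I would handle this via Cauchy--Schwarz with the extra date-$0$ mass to get a clean bound $\ge c\, I_N N_\beta\|v\|_{N,\beta}^2$. The key point is that the diagonal weights scale as $\beta^n$, so the sum produces $N_\beta$ rather than the vanishing largest eigenvalue.

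\textbf{Step 2 (income effect is $o(I_N N_\beta)$).} Write $\bar p_n\partial_w\xi_{in}=\theta_i m_{in}$, where $\theta_i\in(0,1]$ is the fraction of marginal wealth spent on the future. Decompose $m_{in}=\bar m_n(1+\bar\rho_{in})$, and note that $\bar m_n\asymp\beta^n/N_\beta$ by (A2). The income-effect form is then
$$
\sum_i \theta_i (v\cdot D\,\partial_w\xi_i)(t_i\cdot Dv).
$$
The part coming from the common profile $\bar m$ aggregates to $(\sum_i\theta_i t_i)$. Since $\sum_i t_i=0$ by market clearing, this term reduces to heterogeneity in $\theta_i$. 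I would either show $\theta_i$ is asymptotically common, or absorb this term into the date-$0$ slack in Step 1. The remaining part is $\sum_i\theta_i\langle \bar\rho_i\circ\cdot,\,v\rangle\langle t_i,v\rangle$-type terms. Bounding it via Cauchy--Schwarz over $i$ gives a quantity of order $N_\beta\|v\|^2_{N,\beta}\bigl(\sum_{i,j}\langle t_i,t_j\rangle\langle\bar\rho_i,\bar\rho_j\rangle\bigr)^{1/2}$. This is $o(I_N N_\beta)\|v\|^2_{N,\beta}$ by (A5'), which is implied by (A5) since net trades are uniformly bounded. Getting precisely this Gram-matrix form, rather than a cruder bound, is the main obstacle. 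I would write the income term as $\mathrm{tr}(G_\rho G_t)$-style via the operator $\sum_i (\bar m\circ\bar\rho_i)t_i^\top$ and bound its norm by its Hilbert--Schmidt norm in the $\langle\cdot,\cdot\rangle_{N,\beta}$ geometry.

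\textbf{Step 3 (uniformity and conclusion).} All constants in Steps 1--2 must be uniform over equilibria. The statement requires a single $N^*$, so I would argue by contradiction: take a sequence of equilibria with $N_\beta\to\infty$ at which $Dz$ fails to be negative definite. Applying (A5) along that family yields a contradiction with Step 1 dominating Step 2. Hence for $N_\beta>N^*$ the Jacobian $Dz(\bar p)$ is negative definite, and local tâtonnement stability follows by linearization.
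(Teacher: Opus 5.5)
Your Step 2, which isolates the $\bar\rho_i$-part of the income term and bounds it by the Hilbert--Schmidt norm of $\sum_i(\cdot)\,t_i^\top$ to get the Gram product in (A5'), is essentially what the paper does, and so is your Step 3 uniformity argument. The genuine gap is in Step 1. The claimed global bound $-v^\top D(\sum_i S_i)Dv\ge c\,I_NN_\beta\|v\|_{N,\beta}^2$ for all $v\in\mathbb R^N$ is false. After the $D$-rescaling, agent $i$'s substitution form is
\[
v^\top DS_iDv=-\bar r_i^0\sum_{n=1}^N\bar m_{in}\bigl(v_n-\Lambda_i\bigr)^2-\frac{\bar r_{i0}\,\bar r_i^0}{\bar r_i}\Lambda_i^2,\qquad \Lambda_i=\sum_{m=1}^N\bar m_{im}v_m .
\]
(Your displayed formula for $S_i$ has the wrong sign and drops the price weights in the rank-one term. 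The correct rescaled matrix is $-\mathrm{diag}(c_i)+c_ic_i^\top/(\bar r_{i0}+\sum_kc_{ik})$ with $c_{in}=\bar p_n\bar r_{in}$.) At $v=\mathbf 1$, i.e.\ $q=\bar p$, the variance term vanishes. Then $-v^\top D(\sum_iS_i)Dv=\sum_i\bar r_{i0}\bar r_i^0/\bar r_i\le C_rI_N$, whereas $\|\mathbf 1\|_{N,\beta}=1$. Your Cauchy--Schwarz step with the date-$0$ mass only produces the factor $\bar r_{i0}/\bar r_i$ (of order $1/N_\beta$) in front of $\sum_n c_{in}v_n^2$ (of order $N_\beta\|v\|^2_{N,\beta}$). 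So the loss is a factor of order $N_\beta$, not a bounded one: along the budget direction the substitution effect is only of order $I_N$. Your Step 2 bound is $(o(N_\beta)+C)I_N\|v\|^2_{N,\beta}$, where the constant $C$ comes from the $\theta_i$-heterogeneity term. A single global norm comparison therefore cannot close the argument, and absorbing that $O(I_N)$ term into the date-$0$ slack is not justified.

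What is missing is a separate treatment of the proportional direction. Write $q=\alpha\bar p+u$ with a normalization on $u$. The paper uses $\Psi(u)=0$, chosen so that every mixed term involving $\alpha\Lambda_i(u)$ cancels. Then four facts combine:
\begin{itemize}
\item[(a)] The full $\alpha^2$ coefficient (substitution plus income along $\bar p$) is $-A$ with $A\ge c_AI_N$; date-$0$ market clearing makes the income part only $O(I_N/N_\beta)$.
\item[(b)] $S(u)\ge c_SI_NN_\beta\|v\|_{N,\beta}^2$ on $\ker\Psi$. This needs an extra argument that the normalization forces the $\pi$-variance of $v$ to dominate $\|v\|_{N,\beta}^2$. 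It works because $\Psi(u)=0$ reads $\sum_nd_nv_n=0$ for weights with $\sum_nd_n\ge c_dI_N$ and $|d_n|\le C_dI_N\pi_n^{N,\beta}$.
\item[(c)] The remaining term $M(u)$ is $o(N_\beta)I_N\|v\|^2_{N,\beta}$ by your Step 2 argument.
\item[(d)] The leftover mixed term satisfies $|\alpha R(u)|\le C_RI_N|\alpha|\,\|v\|_{N,\beta}$.
\end{itemize}
Negativity then follows from the discriminant of $-c_AI_N\alpha^2+C_RI_N|\alpha|\,\|v\|_{N,\beta}-\tfrac{c_S}{2}N_\beta I_N\|v\|^2_{N,\beta}$, which is negative once $N_\beta>C_R^2/(2c_Ac_S)$. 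Without this split, Step 1 cannot be established as stated, and your three steps do not combine into a proof.
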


Theorem \ref{thm:main} has an immediate implication for equilibrium uniqueness: whenever the Jacobian of aggregate excess demand at an equilibrium price $\bar{p}$ is negative definite, the equilibrium is regular and has index $+1$. Hence, equilibrium uniqueness follows as a corollary.

\begin{corollary}\label{cor:global-uniqueness}
	Suppose Assumptions \textnormal{(A1)}--\textnormal{(A5)} hold. If $N_\beta>N^*$, where $N^*$ is as in Theorem \ref{thm:main}, then $\mathcal E^{N,\beta}$ admits a unique equilibrium.
\end{corollary}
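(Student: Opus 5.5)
The corollary follows from Theorem~\ref{thm:main} by a standard index-theorem argument, so I would prove it that way rather than redo any asymptotics.

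Fix $(N,\beta)$ with $N_\beta>N^*$. By Theorem~\ref{thm:main}, at every equilibrium price $\bar p$ the Jacobian $Dz(\bar p)$ is negative definite. A negative definite $N\times N$ matrix has eigenvalues with negative real parts, so it is nonsingular. Hence every equilibrium is regular, and
\[
\operatorname{sign}\det\bigl(-Dz(\bar p)\bigr)=+1 .
\]
So every equilibrium has index $+1$. It also follows that the set of equilibria is discrete.

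Next, I would verify the boundary behavior the index theorem needs for the normalized excess demand $z$ on $\mathbb R^N_{++}$, where commodity $0$ is the numeraire. Under (A1) and (A4), each agent has strictly positive wealth at any interior price, and demand is continuous. The Inada condition then gives the standard boundary condition. If $p^k\to\bar q\in\partial\mathbb R^N_{+}$ with some future price $p_n\to 0$, then $z_n\to+\infty$. If $p_n\to\infty$ for some $n$, the relative price of good $0$ vanishes, so demand for good $0$ explodes. Walras' law then forces negative excess demand in some future market, and that market must have a diverging price. This rules out equilibria escaping to infinity and gives the outward-pointing (or inward-pointing) behavior the degree argument requires.

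With these two facts, the equilibrium set is compact. Being discrete, it is finite. The Poincar\'e--Hopf/index theorem (in the form of Dierker, or Mas-Colell's version for excess demand with this boundary behavior) then gives
\[
\sum_{\bar p:\,z(\bar p)=0}\operatorname{index}(\bar p)=+1 .
\]
Since every term equals $+1$, there is exactly one equilibrium. Existence is also guaranteed, since the sum is nonzero.

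The main obstacle is the boundary/compactness step. Equilibrium prices in this model decline like $\beta^n$, so I would not use a uniform price box. Instead, I would work on the simplex formulation with all $N+1$ prices and invoke the standard Debreu--Mas-Colell degree result for economies with strictly monotone, Inada, smooth preferences and positive endowment value. Under (A1)--(A4) that result applies directly for each fixed finite economy. No uniformity in $N$ is needed here, because $N^*$ already comes from Theorem~\ref{thm:main}.
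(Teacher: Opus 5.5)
Your proposal is correct and follows the paper's proof. Theorem~\ref{thm:main} makes $Dz(\bar p)$ negative definite at every equilibrium, so each equilibrium is regular with index $\operatorname{sgn}\det[-Dz(\bar p)]=+1$, and Dierker's index theorem then forces exactly one equilibrium. Your boundary discussion, which the paper leaves to the cited theorem, has one slip: the standard condition is that $\max_n z_n$ diverges when some price vanishes, not that $z_n$ diverges for each good whose price vanishes (this can fail, e.g., with isoelastic utility and $\sigma>1$). Since you ultimately invoke the standard simplex result, this does not affect the argument.
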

\begin{proof}
	Fix an equilibrium price vector $\bar p$ of $\mathcal E^{N,\beta}$. Theorem \ref{thm:main} states that $Dz(\bar p)$ is negative definite whenever $N_\beta>N^*$. Thus, $\bar p$ is regular and has index $+1$, where the index at $\bar{p}$ is defined by $\operatorname{ind}(\bar p)=\operatorname{sgn}\det[-D z(\bar p)]$. Since $\bar p$ was arbitrary, every equilibrium has index $+1$. By the index theorem, the sum of equilibrium indices is $+1$; see \citet{dierker1972number}. Therefore, $\mathcal E^{N,\beta}$ admits a unique equilibrium price vector.
\end{proof}

\section{Proof of Main Theorem}\label{section:proof}

In this section, we prove the main theorem by showing that, at any equilibrium price $\bar p$, the Jacobian $Dz(\bar p)$ is negative definite whenever the effective number of commodities is sufficiently large. The argument proceeds in three steps. First, we show that the baseline assumptions imply uniform bounds on equilibrium objects. Second, we show that substitution effects naturally accumulate at rate $N_\beta$. Third, we use Assumption \textnormal{(A5)} to show that the income effect grows at a lower order than $N_\beta$. As a consequence, when the effective number of commodities is sufficiently large, the substitution effect dominates the income effect, which yields stability.

As in the model description, we suppress the superscript $(N,\beta)$ whenever no confusion can arise. Throughout the proof, unless otherwise specified, we consider an arbitrary competitive equilibrium $(\bar p,\bar x)$ of a generic economy $\mathcal E^{N,\beta}$, where $N\in\mathbb N$ and $\beta\in(0,1)$. All constants appearing in the bounds below are independent of the agent $i$, the commodity $n$, the patience parameter $\beta$, the truncation horizon $N$, and the equilibrium under consideration.

\subsection{Slutsky Decomposition}

We fix $N\ge 1$ and $\beta\in(0,1)$, and also fix an equilibrium price $\bar p$ of the truncated economy $\mathcal E^{N,\beta}$.  First, we observe that every equilibrium price $\bar p$ must be interior.

\begin{lemma}\label{lem:equilibrium-interior}
	Suppose Assumption \textnormal{(A1)} holds. Then every equilibrium price of $\mathcal E^{N,\beta}$ is interior.
\end{lemma}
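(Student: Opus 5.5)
The plan is to argue by contradiction: suppose some equilibrium price $\bar p\in\mathbb R_+^N$ has $\bar p_n=0$ for some $n\in\{1,\dots,N\}$, and show that the market for commodity $n$ cannot clear. Recall that the price of commodity $0$ is normalized to one, so every agent's wealth $\omega_{i0}+\sum_{m=1}^N \bar p_m\omega_{im}$ is at least $\omega_{i0}\ge 0$. Note first that the equilibrium definition requires $\bar x_i=\xi_i(\bar p)$, so the argument should show that agent $i$'s utility maximization problem has no solution at $\bar p$. That contradicts the existence of the equilibrium allocation, which is the cleanest route.

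The key step is a monotonicity argument using (A1). Since $u_{in}'(x)>0$ for all $x>0$, the function $u_{in}$ is strictly increasing on $\mathbb R_{++}$. Because $\beta^n>0$, agent $i$'s utility $U_i^{N,\beta}$ is strictly increasing in $x_{in}$. If $\bar p_n=0$, then commodity $n$ is free. Take any feasible bundle $x_i$ for agent $i$ and raise $x_{in}$ to $x_{in}+1$. This leaves expenditure unchanged, so the new bundle is still affordable, and it strictly raises utility.

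Hence no bundle maximizes utility, and $\xi_i(\bar p)$ does not exist. This contradicts $\bar x_i=\xi_i(\bar p)$. Equivalently, if one prefers to phrase it through excess demand, demand for commodity $n$ is unbounded, so $z_n(\bar p)=0$ is impossible since $\Omega_n^N<\infty$.

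The only delicate point is the boundary of the consumption set: $u_{in}$ is only assumed $C^2$ on $\mathbb R_{++}$, and is defined on $\mathbb R_+$. I will handle this by noting that if $x_{in}=0$ is allowed, then $u_{in}(1)>u_{in}(0)$ still holds. Indeed, $u_{in}$ is increasing on $(0,\infty)$. If $u_{in}(0)$ is finite, continuity at $0$ gives the inequality; if $u_{in}(0)$ is not finite, the Inada condition together with concavity still yields it. In either case, increasing $x_{in}$ by one unit gives a strict improvement from any starting bundle.

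The Inada condition is not really needed for the interior price claim itself; it will matter later for interior consumption. So I expect no real obstacle here beyond stating the non-existence of demand carefully.
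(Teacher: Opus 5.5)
Your proposal is correct and is essentially the paper's proof: if $\bar p_n=0$, adding any amount of the free commodity $n$ to the candidate optimal bundle keeps it affordable and strictly raises utility, which contradicts $\bar x_i=\xi_i(\bar p)$. Your extra boundary remark is harmless (the paper simply asserts that $u_{in}$ is strictly increasing on $\mathbb R_+$), but Inada and concavity play no role there: once $u_{in}$ is taken to be continuous at $0$, monotonicity on $\mathbb R_{++}$ alone gives $u_{in}(0)=\lim_{x\downarrow 0}u_{in}(x)=\inf_{x>0}u_{in}(x)<u_{in}(1)$.
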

\begin{proof}
	See \ref{proof:interiorequilibrium}.
\end{proof}

The next lemma rewrites each individual's Slutsky decomposition using risk tolerances.

\begin{lemma}\label{lem:substitution-income-decomposition}
	Suppose Assumption \textnormal{(A1)} holds and $(\bar p,\bar x)$ is a competitive equilibrium of $\mathcal E^{N,\beta}$. For each agent $i\in I_N$ and each $m,n=1,\dots,N$,
	\begin{equation}\label{eq:exchange-slutsky}
		\frac{d\xi_{in}}{dp_m}(\bar p)= S_i(m,n)+ M_i(m,n),
	\end{equation}
	where $S_i$ is the substitution matrix and $M_i$ is the corresponding income-effect matrix, defined by
	$$
	S_i(m,n):=\frac{\bar r_{im}\bar r_{in}}{\bar r_i}-\frac{\bar r_{in}}{\bar p_n}\mathbf 1_{\{m=n\}},
	\qquad
	M_i(m,n):=\frac{\bar r_{in}}{\bar r_i}(\omega_{im}-\bar x_{im}).
	$$
	Here, $	\bar r_{in}:=\frac{u_{in}'(\bar x_{in})}{-u_{in}''(\bar x_{in})}$ is agent $i$'s equilibrium risk tolerance for commodity $n$, and $\bar r_i:=\bar r_{i0}+\sum_{n=1}^N \bar p_n \bar r_{in}$ is the price-weighted sum of agent $i$'s equilibrium risk tolerances.
	\end{lemma}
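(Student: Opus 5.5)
The plan is to derive the decomposition directly from the first-order conditions of agent $i$'s problem at $\bar p$ by implicit differentiation, and to use the risk-tolerance substitution $u_{in}''=-u_{in}'/r_{in}$ to put the result in the stated form. By Lemma \ref{lem:equilibrium-interior}, $\bar p\in\mathbb R^N_{++}$. By (A1), strict monotonicity makes the budget bind, and the Inada condition makes the demand $\xi_i(\bar p)=\bar x_i$ strictly interior. Writing $p_0=1$, the optimum is therefore characterized by
$$
\beta^n u_{in}'(x_{in})=\lambda_i p_n\quad (n=0,\dots,N),\qquad \sum_{k=0}^N p_k x_{ik}=\sum_{k=0}^N p_k\omega_{ik},
$$
with multiplier $\lambda_i>0$.

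\textbf{Step 1 (smoothness of demand).} I apply the implicit function theorem to this system of $N+2$ equations in $(x_{i0},\dots,x_{iN},\lambda_i)$, viewed as a function of $p$ near $\bar p$. The Jacobian in the unknowns is the standard bordered Hessian: the diagonal block $\operatorname{diag}(\beta^n u_{in}'')$, which is negative definite by $u_{in}''<0$, bordered by the price vector $(p_0,\dots,p_N)$, which is nonzero. This bordered matrix is nonsingular. Hence $\xi_i$ and $\lambda_i$ are $C^1$ near $\bar p$, so the derivatives $d\xi_{in}/dp_m$ exist. I expect this justification to be the only step needing some care, and it is routine.

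\textbf{Step 2 (differentiate the first-order conditions).} Fix $m\in\{1,\dots,N\}$ and differentiate the $n$-th condition in $p_m$:
$$
\beta^n u_{in}''\,\frac{dx_{in}}{dp_m}=p_n\frac{d\lambda_i}{dp_m}+\lambda_i\mathbf 1_{\{m=n\}}.
$$
Using $\beta^n u_{in}''=-\beta^n u_{in}'/\bar r_{in}=-\lambda_i p_n/\bar r_{in}$, this becomes
$$
\frac{dx_{in}}{dp_m}=-\bar r_{in}\,\frac{1}{\lambda_i}\frac{d\lambda_i}{dp_m}-\frac{\bar r_{in}}{\bar p_n}\mathbf 1_{\{m=n\}}.
$$
This holds for $n=0$ as well, where the indicator vanishes.

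\textbf{Step 3 (pin down the multiplier).} Differentiate the budget identity in $p_m$; the endowment enters because income depends on prices:
$$
x_{im}+\sum_{k=0}^N p_k\frac{dx_{ik}}{dp_m}=\omega_{im}.
$$
Substituting Step 2 and using $\bar r_i=\sum_{k=0}^N \bar p_k\bar r_{ik}$ gives
$$
-\frac{1}{\lambda_i}\frac{d\lambda_i}{dp_m}\,\bar r_i-\bar r_{im}=\omega_{im}-\bar x_{im},
\qquad\text{so}\qquad
-\frac{1}{\lambda_i}\frac{d\lambda_i}{dp_m}=\frac{\bar r_{im}+\omega_{im}-\bar x_{im}}{\bar r_i}.
$$
Plugging this back into Step 2 yields
$$
\frac{d\xi_{in}}{dp_m}(\bar p)=\frac{\bar r_{im}\bar r_{in}}{\bar r_i}-\frac{\bar r_{in}}{\bar p_n}\mathbf 1_{\{m=n\}}+\frac{\bar r_{in}}{\bar r_i}(\omega_{im}-\bar x_{im}).
$$
The first two terms are $S_i(m,n)$ and the last is $M_i(m,n)$, which is \eqref{eq:exchange-slutsky}.

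As a consistency check, one can confirm that $S_i$ is the Hicksian substitution matrix: $S_i\bar p=0$ over all $N+1$ goods, and it is symmetric. This is not needed for the lemma.
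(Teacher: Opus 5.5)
Your proof is correct and takes essentially the same route as the paper: implicit differentiation of the first-order conditions and the budget identity, justified by the implicit function theorem on the bordered Hessian system, together with the substitution $\beta^n u_{in}''=-\lambda_i p_n/\bar r_{in}$. The only difference is bookkeeping. The paper parametrizes the system by $(p,w_i)$, computes the fixed-wealth price derivative and the wealth derivative $\bar r_{in}/\bar r_i$ separately, and recombines them by the chain rule using $\partial w_i/\partial p_m=\omega_{im}$, whereas you differentiate the budget with endowment income in a single step.
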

\begin{proof}
	See \ref{proof:sluskydecom}.
\end{proof}

Applying Lemma \ref{lem:substitution-income-decomposition}, we can express the individual substitution and income effects under price change  $q$ as follows.

\begin{lemma}\label{lem:substitutionandincomeeffect}
	For each $i\in I_N$ and any price perturbation $q\in\mathbb R^N$, we have
	\begin{equation}
		q^T S_i q
		=
		-\sum_{n=1}^N \frac{\bar r_{in}}{\bar p_n}\bigl(q_n-\bar p_n\Lambda_i(q)\bigr)^2
		-
		\frac{\bar r_{i0}\bar r_i^0}{\bar r_i}\Lambda_i(q)^2,
		\label{eq:substitution-quadratic}
	\end{equation}
	and
	\begin{equation}
		q^T M_i q
		=
		\frac{\bar r_i^0}{\bar r_i}\Lambda_i(q)\sum_{m=1}^N (\omega_{im}-\bar x_{im})q_m,
		\label{eq:income-quadratic}
	\end{equation}
	where
	\begin{equation}
		\bar r_i^0:=\sum_{n=1}^N \bar p_n \bar r_{in},
		\qquad
		\bar{m}_{in}:=\frac{\bar p_n \bar r_{in}}{\bar r_i^0},
		\qquad
		\Lambda_i(q):=\frac{\sum_{n=1}^N \bar r_{in}q_n}{\bar r_i^0}
		=
		\sum_{n=1}^N \bar{m}_{in}\frac{q_n}{\bar p_n}.
		\label{eqn:barm_in}
	\end{equation}
\end{lemma}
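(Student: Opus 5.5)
The proof is a direct algebraic computation starting from the entries of $S_i$ and $M_i$ given in Lemma \ref{lem:substitution-income-decomposition}. I will expand the quadratic forms and then complete the square. The only tool needed is the identity $\bar r_i=\bar r_{i0}+\bar r_i^0$, which follows from the definitions in Lemmas \ref{lem:substitution-income-decomposition} and \ref{lem:substitutionandincomeeffect}.

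\emph{Income term.} Using the definition of $M_i$,
\[
q^TM_iq=\sum_{m,n}q_m\frac{\bar r_{in}}{\bar r_i}(\omega_{im}-\bar x_{im})q_n=\frac{1}{\bar r_i}\Bigl(\sum_n\bar r_{in}q_n\Bigr)\sum_m(\omega_{im}-\bar x_{im})q_m .
\]
Substituting $\sum_n\bar r_{in}q_n=\bar r_i^0\Lambda_i(q)$ gives \eqref{eq:income-quadratic}. The second expression for $\Lambda_i(q)$ in \eqref{eqn:barm_in} holds because $\bar r_{in}q_n/\bar r_i^0=\bar m_{in}q_n/\bar p_n$.

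\emph{Substitution term.} Expanding $S_i$ directly,
\[
q^TS_iq=\frac{(\bar r_i^0\Lambda_i)^2}{\bar r_i}-\sum_n\frac{\bar r_{in}}{\bar p_n}q_n^2 ,
\]
where $\Lambda_i$ abbreviates $\Lambda_i(q)$. Next I expand the right-hand side of \eqref{eq:substitution-quadratic}:
\[
\sum_n\frac{\bar r_{in}}{\bar p_n}(q_n-\bar p_n\Lambda_i)^2=\sum_n\frac{\bar r_{in}}{\bar p_n}q_n^2-2\Lambda_i\sum_n\bar r_{in}q_n+\Lambda_i^2\sum_n\bar r_{in}\bar p_n .
\]
The last two terms equal $-2\bar r_i^0\Lambda_i^2+\bar r_i^0\Lambda_i^2=-\bar r_i^0\Lambda_i^2$. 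So the right-hand side of \eqref{eq:substitution-quadratic} becomes
\[
-\sum_n\frac{\bar r_{in}}{\bar p_n}q_n^2+\bar r_i^0\Lambda_i^2-\frac{\bar r_{i0}\bar r_i^0}{\bar r_i}\Lambda_i^2 .
\]
By the identity $\bar r_i=\bar r_{i0}+\bar r_i^0$, the coefficient of $\Lambda_i^2$ is $\bar r_i^0(\bar r_i-\bar r_{i0})/\bar r_i=(\bar r_i^0)^2/\bar r_i$. This matches the direct expansion, so \eqref{eq:substitution-quadratic} holds.

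The only point needing care is that all quantities are well defined. By Lemma \ref{lem:equilibrium-interior} and (A1), $\bar p$ is interior and $\bar x$ is interior. Hence $\bar r_{in}>0$, and therefore $\bar r_i^0>0$ and $\bar r_i>0$. Beyond this check, the main work is the bookkeeping in the completed square, where the $-2$ and $+1$ multiples of $\bar r_i^0\Lambda_i^2$ must combine with the $\bar r_{i0}$ term exactly as above.
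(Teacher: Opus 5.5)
Your proposal is correct and is essentially the paper's proof. You expand both quadratic forms, rewrite $\sum_n \bar r_{in}q_n$ as $\bar r_i^0\Lambda_i(q)$, reduce the completed square to $\sum_n \frac{\bar r_{in}}{\bar p_n}q_n^2-\bar r_i^0\Lambda_i(q)^2$, and use $\bar r_i=\bar r_{i0}+\bar r_i^0$ to match the $\Lambda_i(q)^2$ coefficients. The only differences are cosmetic: you check that the right-hand side reduces to the direct expansion, whereas the paper substitutes into the left-hand side, and your well-definedness check ($\bar r_{in},\bar r_i^0,\bar r_i>0$ via interiority and (A1)) is a harmless addition.
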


\begin{proof}
	See \ref{proof:substitutionincomeeffect}.
\end{proof}

Here, $\bar r_i^0$ is agent $i$'s price-weighted sum of future equilibrium risk tolerances, and $\bar m_{in}=m_{in}(\bar{x}_{i})$ is the equilibrium expenditure share of an additional unit of future wealth allocated to commodity $n$; see \ref{appendix:marginal-spending-profile}. In particular, $(\bar m_{in})_{n=1}^N$ is a probability vector. The term $\Lambda_i(q)\bar p$ is the component of $q$ proportional to the equilibrium price vector from agent $i$'s perspective, while $q-\Lambda_i(q)\bar p$ is the corresponding price distortion.

We now study the Jacobian of aggregate excess demand at $\bar p$. It is well-known that, due to the compensated law of demand, substitution effects respond differently to proportional price changes and price distortions. For this reason, we decompose $q=\alpha \bar p+u$, where $\alpha \bar p$ is a proportional price change and $u$ is the remaining price distortion. Furthermore, we choose this decomposition such that $\Psi(u)=0$, where the linear functional $\Psi:\mathbb R^N\to\mathbb R$ is defined by
$$
\Psi(q):=\sum_{i\in I_N}\frac{2\bar r_{i0}+\omega_{i0}-\bar x_{i0}}{\bar r_i}\,\bar r_i^0\Lambda_i(q).
$$
This normalization is chosen to simplify the decomposition of  $q^T Dz(\bar{p}) q$. The following proposition justifies the role of this decomposition. 

\begin{proposition}\label{prop:quadratic-decomposition}
	Suppose $q=\alpha\bar p+u$, where $\alpha\in\mathbb R$ and $u$ satisfies $\Psi(u)=0$. Then
	\begin{equation}
		q^T Dz(\bar{p}) q=-A \alpha^2+ R(u) \alpha-S(u)+M(u),
		\label{eqn:AISdecomposition}
	\end{equation}
	where
	\begin{equation}
		A:=\sum_{i\in I_N}\frac{\bar r_i^0}{\bar r_i}\bigl(\bar r_{i0}+\omega_{i0}-\bar x_{i0}\bigr),
		\label{eqn:A}
	\end{equation}
	\begin{equation}
		S(u):=\sum_{i\in I_N}\sum_{n=1}^N \frac{\bar r_{in}}{\bar p_n}\bigl(u_n-\bar p_n\Lambda_i(u)\bigr)^2,
		\label{eqn:S}
	\end{equation}
	\begin{equation}
		R(u):=
		\sum_{i\in I_N} \frac{\bar r_{i0}}{\bar r_i}\left(\sum_{m=1}^N
		(\bar x_{im}-\omega_{im})u_m
		\right),
		\label{eqn:H}
	\end{equation}
	\begin{equation}
		M(u):=\sum_{i\in I_N}\frac{\bar r_i^0}{\bar r_i}\Lambda_i(u)
		\left[
		\sum_{m=1}^N (\omega_{im}-\bar x_{im})u_m-\bar r_{i0}\Lambda_i(u)
		\right].
		\label{eqn:M}
	\end{equation}
	Moreover, if $A>0$, then $\Psi(\bar p)>0$. Consequently, every price change $q\in\mathbb R^N$ admits a unique decomposition of the form $q=\alpha\bar p+u$ with $\Psi(u)=0$.
\end{proposition}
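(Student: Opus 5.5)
The plan is to sum the individual quadratic forms of Lemma \ref{lem:substitutionandincomeeffect} over agents, since $q^T Dz(\bar p) q=\sum_{i\in I_N}\bigl(q^T S_i q+q^T M_i q\bigr)$ by Lemma \ref{lem:substitution-income-decomposition}. The decomposition $q=\alpha\bar p+u$ is then substituted, and the resulting terms are sorted by their power of $\alpha$. Two facts drive the computation.

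\begin{itemize}
\item $\Lambda_i$ is linear with $\Lambda_i(\bar p)=\sum_n \bar m_{in}=1$. Hence $\Lambda_i(q)=\alpha+\Lambda_i(u)$ and $q_n-\bar p_n\Lambda_i(q)=u_n-\bar p_n\Lambda_i(u)$.
\item Write $T_i(u):=\sum_m(\omega_{im}-\bar x_{im})u_m$. Agent $i$'s binding budget constraint gives $T_i(\bar p)=\bar x_{i0}-\omega_{i0}$. Market clearing for future goods gives $\sum_i T_i(u)=0$ for every $u$.
\end{itemize}

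First I would handle the substitution part. In \eqref{eq:substitution-quadratic}, the first sum no longer involves $\alpha$, and summed over $i$ it is exactly $-S(u)$. The second term equals $-\frac{\bar r_{i0}\bar r_i^0}{\bar r_i}(\alpha+\Lambda_i(u))^2$.

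Next, using $T_i(q)=\alpha(\bar x_{i0}-\omega_{i0})+T_i(u)$, the income part \eqref{eq:income-quadratic} becomes
\[
\frac{\bar r_i^0}{\bar r_i}\bigl(\alpha+\Lambda_i(u)\bigr)\bigl(\alpha(\bar x_{i0}-\omega_{i0})+T_i(u)\bigr).
\]
I then expand both pieces.

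\begin{itemize}
\item \emph{The $\alpha^2$ coefficient} is $-\frac{\bar r_i^0}{\bar r_i}(\bar r_{i0}+\omega_{i0}-\bar x_{i0})$. Summed over $i$, this gives $-A$.
\item \emph{The $\alpha$-independent remainder} is $\frac{\bar r_i^0}{\bar r_i}\Lambda_i(u)\bigl[T_i(u)-\bar r_{i0}\Lambda_i(u)\bigr]$. Summed over $i$, this is $M(u)$.
\item \emph{The linear coefficient} is $-\frac{\bar r_i^0}{\bar r_i}(2\bar r_{i0}+\omega_{i0}-\bar x_{i0})\Lambda_i(u)+\frac{\bar r_i^0}{\bar r_i}T_i(u)$.
\end{itemize}

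The step needing care is identifying the linear coefficient with $R(u)$. After summing over $i$, the first piece of the linear coefficient is exactly $-\Psi(u)$, which vanishes by the normalization. For the second piece, I would use $\bar r_i=\bar r_{i0}+\bar r_i^0$, i.e. $\bar r_i^0/\bar r_i=1-\bar r_{i0}/\bar r_i$. This gives
\[
\sum_i\frac{\bar r_i^0}{\bar r_i}T_i(u)=\sum_iT_i(u)-\sum_i\frac{\bar r_{i0}}{\bar r_i}T_i(u).
\]
The first sum is zero by market clearing. The second, with its sign, is $\sum_i\frac{\bar r_{i0}}{\bar r_i}\sum_m(\bar x_{im}-\omega_{im})u_m=R(u)$. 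This establishes \eqref{eqn:AISdecomposition}.

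For the last claim, evaluate $\Psi$ at $\bar p$ using $\Lambda_i(\bar p)=1$:
\[
\Psi(\bar p)=\sum_i\frac{\bar r_i^0}{\bar r_i}(2\bar r_{i0}+\omega_{i0}-\bar x_{i0})=A+\sum_i\frac{\bar r_{i0}\bar r_i^0}{\bar r_i}.
\]
The final sum is strictly positive, since all risk tolerances and prices are positive by (A1) and Lemma \ref{lem:equilibrium-interior}. Hence $A>0$ implies $\Psi(\bar p)>0$.

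Uniqueness of the decomposition then follows from linearity of $\Psi$. The condition $\Psi(q-\alpha\bar p)=0$ forces $\alpha=\Psi(q)/\Psi(\bar p)$, and $u:=q-\alpha\bar p$ then satisfies $\Psi(u)=0$.
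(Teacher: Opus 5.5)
Your proposal is correct and follows essentially the same route as the paper: substitute $q=\alpha\bar p+u$ into the summed individual quadratic forms from Lemma \ref{lem:substitutionandincomeeffect}, use $\Lambda_i(\bar p)=1$ and the budget identity, and collect terms by powers of $\alpha$; then the linear coefficient becomes $\alpha R(u)$ after dropping the $\Psi(u)$ piece and applying $\bar r_i^0=\bar r_i-\bar r_{i0}$ with market clearing. Your explicit formula $\alpha=\Psi(q)/\Psi(\bar p)$ is simply a concrete form of the paper's direct-sum argument $\mathbb R^N=\operatorname{span}\{\bar p\}\oplus\ker\Psi$.
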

\begin{proof}
	See \ref{proof:quadratic-decomposition}.
\end{proof}

Here, $S(u)$ captures the main part of the substitution effect generated by the price distortion component $u$, $M(u)$ captures the main part of the income effect generated by $u$, $-A\alpha^2$ captures the substitution effect along the proportional price change component,\footnote{Even when future prices change proportionally, good $0$ remains the numéraire with price normalized to $1$. Thus, the overall price change is not fully proportional across all goods, so a residual substitution effect remains.} and $R(u)\alpha$ is the remaining mixed term.

\subsection{Equilibrium Bounds}

We next state the equilibrium bounds implied by Assumptions \textnormal{(A1)}--\textnormal{(A4)}. These bounds will be used repeatedly in the following proof. In particular, we show that equilibrium future prices are uniformly comparable to the discount profile $(\beta^n)_{n=1}^N$, while equilibrium consumption and risk tolerances remain uniformly bounded.

\begin{proposition}\label{prop:equilibrium-bounds}
	Under \emph{(A1)}--\emph{(A4)}, there exist positive constants
	$c_p,C_p,c_x,C_x,c_r,C_r$ and a threshold $N_0\in\mathbb{N}$, independent of $i$, $n$, $N$, $\beta$, and the equilibrium, such that whenever $N_\beta\ge N_0$, at every equilibrium $(\bar p,\bar x)$ of $\mathcal E^{N,\beta}$:
	\begin{enumerate}
		\item[(i)] For every date $n=1,\dots,N$,
		$$
		c_p\,\beta^n \le \bar p_n \le C_p\,\beta^n.
		$$
		\item[(ii)] For every agent $i\in I_N$ and every date $n=0,1,\dots,N$,
		$$
		c_x \le \bar x_{in} \le C_x
		\qquad\text{and}\qquad
		c_r \le \bar r_{in} \le C_r.
		$$
	\end{enumerate}
\end{proposition}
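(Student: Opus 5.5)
The plan is to work with the first-order conditions and \emph{discount-normalized prices} $q_n:=\bar p_n/\beta^n$. By Lemma \ref{lem:equilibrium-interior} and the Inada condition in (A1), every equilibrium consumption is interior. Hence there are multipliers $\lambda_i>0$ with $u_{i0}'(\bar x_{i0})=\lambda_i$ and $\beta^n u_{in}'(\bar x_{in})=\lambda_i\bar p_n$, that is,
$$
u_{in}'(\bar x_{in})=q_n\,u_{i0}'(\bar x_{i0}).
$$

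The first step is a power-type comparison lemma for marginal utilities, derived from the risk-tolerance bound in (A2). Since $\frac{d\log u_{in}'(x)}{d\log x}=-x/r_{in}(x)\in[-1/c_u,-1/C_u]$, integrating gives, for $y\ge x>0$,
$$
(y/x)^{-1/c_u}\le u_{in}'(y)/u_{in}'(x)\le (y/x)^{-1/C_u}.
$$
Combining this with the first bound in (A2) lets us replace $u_{in}'$ by $u_{i0}'$ at the cost of a factor in $[c_u,C_u]$. From the first-order condition we then obtain $u_{i0}'(\bar x_{in})/u_{i0}'(\bar x_{i0})\in[q_n/C_u,\,q_n/c_u]$. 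Inverting the power bounds yields functions $\underline f,\overline f:(0,\infty)\to(0,\infty)$, both of the form $K\max/\min\{q^{-C_u},q^{-c_u}\}$ and strictly decreasing from $\infty$ to $0$, such that
$$
\underline f(q_n)\,\bar x_{i0}\le \bar x_{in}\le \overline f(q_n)\,\bar x_{i0}
$$
for every agent and date, uniformly in $i,n,N,\beta$. I expect this step to be the main technical obstacle. It is the only place where the curvature assumptions enter, and the case distinction $q_n\gtrless 1$ has to be handled carefully so that the constants remain uniform.

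Second, I will prove (i) by aggregating. Summing over $i$ and using market clearing, $\sum_i\bar x_{in}=\Omega_n^N$ for $n\ge1$ and (by Walras' law) for $n=0$, gives $\underline f(q_n)\Omega_0^N\le \Omega_n^N\le \overline f(q_n)\Omega_0^N$. By (A3), $\Omega_n^N/\Omega_0^N\in[c_\Omega/C_\Omega,\,C_\Omega/c_\Omega]$. Since $\overline f(q)\to0$ as $q\to\infty$ and $\underline f(q)\to\infty$ as $q\to0$, this forces $q_n\in[c_p,C_p]$ with constants depending only on $c_u,C_u,c_\Omega,C_\Omega$. Consequently $\bar x_{in}/\bar x_{i0}\in[\kappa,K]$ for fixed constants $0<\kappa\le K$.

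Third, I will obtain individual bounds from the budget constraint $\bar x_{i0}+\sum_n\bar p_n\bar x_{in}=\omega_{i0}+\sum_n\bar p_n\omega_{in}$. Using (i) and the comparison $\bar x_{in}\asymp\bar x_{i0}$, the left side lies in $[\bar x_{i0}(1+c_p\kappa N_\beta),\ \bar x_{i0}(1+C_pKN_\beta)]$. By (A4), the right side lies in $[c_pc_WN_\beta,\ C_W(1+C_pN_\beta)]$. Choosing $N_0\ge1$ so that $1+N_\beta\le 2N_\beta$ whenever $N_\beta\ge N_0$, we get $\bar x_{i0}\in[c,C]$ uniformly; this is where the threshold $N_0$ is needed, for the lower bound. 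Multiplying by $[\kappa,K]$ gives $c_x\le\bar x_{in}\le C_x$ for all $n$. Finally, the risk-tolerance bound $c_ux\le r_{in}(x)\le C_ux$ in (A2) yields $c_r=c_uc_x$ and $C_r=C_uC_x$, which completes (ii).
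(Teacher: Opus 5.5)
Your proof is correct. The first half follows the paper's mechanism: a power-type comparison from the risk-tolerance bound, combined with marginal-utility comparability, then aggregation over agents and (A3) to pin down $q_n=\bar p_n/\beta^n$. The paper only phrases this contrapositively, with fixed ratios $\gamma<1<\Gamma$ instead of your explicit functions $\underline f,\overline f$.

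You genuinely differ in the consumption bounds. The paper proceeds in four steps:
\begin{itemize}
\item it normalizes $u_{i0}'(1)=1$ and derives absolute envelopes $L(x)\le u_{in}'(x)\le U(x)$;
\item it establishes the wealth bounds $c_pc_WN_\beta\le W_i\le C_W+C_pC_WN_\beta$;
\item it bounds each shadow value $\lambda_i=u_{i0}'(\bar x_{i0})$ by contradiction: if $\lambda_i$ were too large, every $\bar x_{in}$ would be small and could not exhaust wealth, and if too small, every $\bar x_{in}$ would be large;
\item it reads off the consumption bounds from the envelopes.
\end{itemize}

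You instead note that once $q_n\in[c_p,C_p]$, your comparison already gives the scale-free within-agent bound $\bar x_{in}\in[\kappa,K]\,\bar x_{i0}$. The budget identity then becomes a one-variable inequality that pins $\bar x_{i0}$ directly.

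Your route is shorter and needs no utility normalization and no second set of envelope functions; it reuses one comparison lemma twice. It also shows that the threshold only keeps $N_\beta$ away from zero, so $N_0=1$ suffices. In exchange, the paper's route yields uniform bounds on the shadow values $\lambda_i$ themselves (under its normalization), an economically interpretable by-product that the proposition does not record.
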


\begin{proof}
	See \ref{proof:equilibrium-bounds}.
\end{proof}

An immediate consequence of this proposition is that both $\bar r_i^0$ and $\bar r_i$ are uniformly comparable to the effective number of commodities $N_\beta$. Formally, for every $i\in I_N$,
\begin{equation}
	c_p c_r N_\beta \le \bar r_i^0 \le C_p C_r N_\beta,
	\qquad
	c_p c_r N_\beta \le \bar r_i \le C_r+C_p C_r N_\beta.
	\label{eq:derived-risk-bounds}
\end{equation}

The next two lemmas record a consequence of the equilibrium bounds that ensures the decomposition in Proposition \ref{prop:quadratic-decomposition} is well-defined and the mixed term $R(u)$ is bounded.

\begin{lemma}\label{lem:A-positive}
	Under \emph{(A1)}--\emph{(A4)}, there exist constants $c_A>0$ and $N_1 \ge N_0$ such that
	\begin{equation}
		A\ge c_A I_N,
		\label{eq:A-positive}
	\end{equation}
	whenever $N_\beta\ge N_1$. Here, $N_0$ is the constant specified in Proposition \ref{prop:equilibrium-bounds}. 
\end{lemma}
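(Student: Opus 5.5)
The plan is to exploit date-$0$ market clearing together with the fact that the weights $\bar r_i^0/\bar r_i$ are uniformly close to one when $N_\beta$ is large. Individually, the term $\bar r_{i0}+\omega_{i0}-\bar x_{i0}$ need not be positive, since an agent may be a large net buyer of the date-$0$ good. Summed across agents, however, the net-trade parts cancel, and what remains is $\sum_i \bar r_{i0}$, which is of order $I_N$ by Proposition \ref{prop:equilibrium-bounds}.

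\textbf{Step 1: rewrite the weights.} Since $\bar r_i=\bar r_{i0}+\bar r_i^0$, I write
$$
\frac{\bar r_i^0}{\bar r_i}=1-\varepsilon_i,\qquad \varepsilon_i:=\frac{\bar r_{i0}}{\bar r_i}\in(0,1).
$$
Expanding $A$ then gives
$$
A=\sum_{i\in I_N}(1-\varepsilon_i)\bar r_{i0}+\sum_{i\in I_N}(\omega_{i0}-\bar x_{i0})-\sum_{i\in I_N}\varepsilon_i(\omega_{i0}-\bar x_{i0}).
$$

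\textbf{Step 2: use market clearing.} At a competitive equilibrium, Walras' law implies the date-$0$ market clears, so $\sum_i(\bar x_{i0}-\omega_{i0})=0$. Hence the middle sum vanishes.

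\textbf{Step 3: bound the remaining terms.} Take $N_\beta\ge N_0$, so that Proposition \ref{prop:equilibrium-bounds} and \eqref{eq:derived-risk-bounds} apply. These give
$$
\varepsilon_i\le \frac{C_r}{c_pc_rN_\beta}=:\varepsilon(N_\beta),
$$
uniformly in $i$, and this bound tends to $0$ as $N_\beta\to\infty$. Combining this with $c_r\le\bar r_{i0}$, $|\omega_{i0}-\bar x_{i0}|\le C_W+C_x$ (from (A4) and Proposition \ref{prop:equilibrium-bounds}(ii)), I obtain
$$
A\ge I_N\Bigl[(1-\varepsilon(N_\beta))c_r-\varepsilon(N_\beta)(C_W+C_x)\Bigr].
$$

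\textbf{Step 4: choose the threshold.} I choose $N_1\ge N_0$ large enough that $\varepsilon(N_\beta)(c_r+C_W+C_x)\le c_r/2$ whenever $N_\beta\ge N_1$. This yields \eqref{eq:A-positive} with $c_A=c_r/2$. All constants come from Proposition \ref{prop:equilibrium-bounds} and (A4), so they are independent of $i$, $N$, $\beta$, and the equilibrium.

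The only real subtlety is Step 2, namely recognizing that the individual terms can be negative and that aggregation through market clearing is needed. Without it, a termwise bound fails, because $\bar x_{i0}-\omega_{i0}$ need not be small relative to $\bar r_{i0}$. Once the cancellation is used, the rest is routine bookkeeping with the uniform bounds.
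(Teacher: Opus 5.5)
Your proposal is correct and follows essentially the same route as the paper. Both write $\bar r_i^0/\bar r_i = 1-\bar r_{i0}/\bar r_i$ and eliminate the aggregate date-$0$ net trade by market clearing. Both then bound the remaining $O(1/N_\beta)$ correction uniformly via Proposition~\ref{prop:equilibrium-bounds} and (A4), arriving at $c_A=c_r/2$.
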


\begin{proof}
	See \ref{proof:A-positive}.
\end{proof}

\begin{lemma}\label{lem:R-order-one}
	Under \emph{(A1)}--\emph{(A4)}, there exists a constant $C_R>0$ such that for every $u\in\mathbb R^N$, writing $v_n:=u_n/\bar p_n$, one has
	\begin{equation}
		|R(u)|\le C_R I_N \norm{v}_{N,\beta},
		\label{eq:R-order-one}
	\end{equation}
	provided $N_\beta\ge N_0$, where $N_0$ is the constant specified in Proposition \ref{prop:equilibrium-bounds}. 
\end{lemma}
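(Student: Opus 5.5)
Since $R(u)$ is linear in $u$, the plan is to rewrite it in terms of $v_n=u_n/\bar p_n$ and the weighted inner product $\langle\cdot,\cdot\rangle_{N,\beta}$, and then apply Cauchy--Schwarz agent by agent. Substituting $u_m=\bar p_m v_m$ gives
$$
R(u)=\sum_{i\in I_N}\frac{\bar r_{i0}}{\bar r_i}\sum_{m=1}^N \bar p_m(\bar x_{im}-\omega_{im})v_m .
$$
Define $a_{im}:=\bar p_m(\bar x_{im}-\omega_{im})/\beta^m$. Then the inner sum equals $\sum_m \beta^m a_{im}v_m = N_\beta\,\langle a_i,v\rangle_{N,\beta}$, and Cauchy--Schwarz in this inner product gives $|\sum_m \bar p_m(\bar x_{im}-\omega_{im})v_m|\le N_\beta\norm{a_i}_{N,\beta}\norm{v}_{N,\beta}$.

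The next step is to bound each factor uniformly using Proposition \ref{prop:equilibrium-bounds}, which applies because $N_\beta\ge N_0$.
\begin{itemize}
\item By part (i), $\bar p_m/\beta^m\le C_p$.
\item By part (ii) and (A4), $|\bar x_{im}-\omega_{im}|\le C_x+C_W$.
\end{itemize}
Together these give $|a_{im}|\le C_p(C_x+C_W)$ for every $m$. Since $\sum_m\pi_m^{N,\beta}=1$, it follows that $\norm{a_i}_{N,\beta}\le C_p(C_x+C_W)$.

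For the prefactor, part (ii) gives $\bar r_{i0}\le C_r$, and \eqref{eq:derived-risk-bounds} gives $\bar r_i\ge c_pc_rN_\beta$. Hence
$$
\frac{\bar r_{i0}}{\bar r_i}\,N_\beta\le \frac{C_r}{c_pc_r}.
$$
Combining the two bounds, each agent's term is at most $\frac{C_r C_p(C_x+C_W)}{c_pc_r}\norm{v}_{N,\beta}$. Summing over the $I_N$ agents yields \eqref{eq:R-order-one} with $C_R:=C_rC_p(C_x+C_W)/(c_pc_r)$.

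The one point that needs care is keeping track of the powers of $N_\beta$. The $1/N_\beta$ decay of $\bar r_{i0}/\bar r_i$ must exactly cancel the factor $N_\beta$ produced when the unnormalized sum $\sum_m\beta^m(\cdot)$ is rewritten as a $\pi$-weighted inner product. This cancellation works only because prices are comparable to $\beta^m$, so that the net-trade vector expressed in $\beta$-units is bounded in the normalized norm. Everything else is routine.
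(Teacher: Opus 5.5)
Your proof is correct and follows essentially the same route as the paper. Both substitute $u_m=\bar p_m v_m$, use $\bar p_m\le C_p\beta^m$, bounded net trades and $\bar r_{i0}/\bar r_i=O(1/N_\beta)$, and then apply Cauchy--Schwarz in the $\pi$-weighted norm. The only cosmetic differences are that you apply Cauchy--Schwarz agent by agent before summing, whereas the paper first aggregates into coefficients $\alpha_n=\bar p_n\sum_i\frac{\bar r_{i0}}{\bar r_i}(\bar x_{in}-\omega_{in})$ and applies it once, and that you bound net trades with the individual endowment bound in (A4) rather than the aggregate bound in (A3).
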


\begin{proof}
	See \ref{proof:R-order-one}.
\end{proof}

\subsection{Accumulation of Substitution Effects}

In this subsection, we show that the distortion-driven substitution effect $S(u)$  accumulates at the order of $N_\beta$. 

The first step is to write $S(u)$ as a weighted pairwise-difference form. The weight $w_{mn}$ between $m$ and $n$ is large when many agents have large marginal expenditure shares on both commodities $m$ and $n$. In that case, a larger gap between normalized price distortion $v_m$ and $v_n$ contributes more to the substitution effect.

\begin{lemma}\label{lem:S-structure}
	For every $u\in\mathbb R^N$, write $v_n=\frac{u_n}{\bar p_n}$ for every $n=1,\dots,N$. We have
	\begin{equation}
		S(u)=\frac12\sum_{m,n=1}^N w_{mn}(v_m-v_n)^2,
		\label{eq:S-graph}
	\end{equation}
	where
	\begin{equation}
		w_{mn}:=\sum_{i\in I_N}\bar r_i^0\,\bar m_{im}\bar m_{in}
		=
		\sum_{i\in I_N}\frac{(\bar p_m\bar r_{im})(\bar p_n\bar r_{in})}{\bar r_i^0}.
		\label{eq:w-mn}
	\end{equation}
\end{lemma}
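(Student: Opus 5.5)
The identity is purely algebraic, so I will verify it agent by agent and then sum over $i\in I_N$. Fix an agent $i$ and write $u_n=\bar p_n v_n$. By \eqref{eqn:barm_in}, $\Lambda_i(u)=\sum_n \bar m_{in} v_n$, which is the $\bar m_i$-weighted mean of $v$. Since $\bar m_{in}=\bar p_n\bar r_{in}/\bar r_i^0$, we have $\bar r_{in}/\bar p_n=\bar r_i^0\bar m_{in}/\bar p_n^2$. Substituting these into the summand of \eqref{eqn:S} gives
\begin{equation*}
\frac{\bar r_{in}}{\bar p_n}\bigl(u_n-\bar p_n\Lambda_i(u)\bigr)^2=\frac{\bar r_{in}}{\bar p_n}\bar p_n^2\bigl(v_n-\Lambda_i(u)\bigr)^2=\bar r_i^0\,\bar m_{in}\bigl(v_n-\Lambda_i(u)\bigr)^2 .
\end{equation*}
Summing over $n$ shows that agent $i$'s contribution to $S(u)$ is $\bar r_i^0$ times the variance of $v$ under the probability vector $(\bar m_{in})_{n=1}^N$. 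This vector sums to one by the definition of $\bar r_i^0$.

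Next I will use the elementary identity for a probability vector $\mu$:
\begin{equation*}
\sum_n \mu_n\Bigl(v_n-\sum_k\mu_k v_k\Bigr)^2=\frac12\sum_{m,n}\mu_m\mu_n(v_m-v_n)^2 .
\end{equation*}
To prove it, expand the right side: it equals $\frac12\bigl(2\sum_m\mu_m v_m^2-2(\sum_m\mu_m v_m)^2\bigr)$, using $\sum_n\mu_n=1$. Expanding the left side gives the same expression, $\sum_n\mu_n v_n^2-(\sum_n\mu_nv_n)^2$.

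Applying this identity with $\mu=\bar m_i$, multiplying by $\bar r_i^0$, and summing over $i$, then exchanging the finite sums over $i$ and $(m,n)$, gives
\begin{equation*}
S(u)=\frac12\sum_{m,n}\Bigl(\sum_i \bar r_i^0\bar m_{im}\bar m_{in}\Bigr)(v_m-v_n)^2,
\end{equation*}
which is \eqref{eq:S-graph} with $w_{mn}$ as in \eqref{eq:w-mn}. The second expression for $w_{mn}$ follows by substituting $\bar m_{im}=\bar p_m\bar r_{im}/\bar r_i^0$ once more, so that $\bar r_i^0\bar m_{im}\bar m_{in}=(\bar p_m\bar r_{im})(\bar p_n\bar r_{in})/\bar r_i^0$.

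No step is a genuine obstacle. The only point needing care is recognizing $\Lambda_i(u)$ as the $\bar m_i$-mean of $v$. This is what turns each agent's summand into a weighted variance, and it relies on $\sum_n\bar m_{in}=1$.
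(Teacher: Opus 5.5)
Your proposal is correct and follows the paper's own proof. Like the paper, you rewrite each agent's term as $\bar r_i^0$ times the $\bar m_i$-weighted variance of $v$ (using $\Lambda_i(u)=\sum_n \bar m_{in}v_n$ and $\bar p_n\bar r_{in}=\bar r_i^0\bar m_{in}$), apply the pairwise-difference variance identity for a vector summing to one, and sum over agents.
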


\begin{proof}
	Recall $\Lambda_i(u)=\sum_{n=1}^N \bar m_{in}v_n$ and $	\bar p_n\bar r_{in}=\bar r_i^0\bar m_{in}.$ Hence, agent $i$'s term in $S(u)$ is
	$$
	\sum_{n=1}^N \frac{\bar r_{in}}{\bar p_n}\bigl(u_n-\bar p_n\Lambda_i(u)\bigr)^2
	=
	\bar r_i^0\sum_{n=1}^N \bar m_{in}\left(v_n-\sum_{m=1}^N \bar m_{im}v_m\right)^2.
	$$
	Since $(\bar m_{in})_{n=1}^N$ is a probability vector, the variance identity yields
	$$
	\sum_{n=1}^N \bar m_{in}\left(v_n-\sum_{m=1}^N \bar m_{im}v_m\right)^2
	=
	\frac12\sum_{m,n=1}^N \bar m_{im}\bar m_{in}(v_m-v_n)^2.
	$$
	Summing over $i\in I_N$ proves \eqref{eq:S-graph}.
\end{proof}

The next lemma shows that these weights $w_{mn}$ uniformly dominate the
benchmark weight product $\pi_m^{N,\beta}\pi_n^{N,\beta}$. 

\begin{lemma}\label{lem:w-lower-bound}
	Under \emph{(A1)}--\emph{(A4)}, there exist constants $0<c_w<C_w<
	\infty$ such that for every $N\ge 1$ and every $m,n=1,\dots,N$,
	\begin{equation}
		c_w\,I_N N_\beta\,\pi_m^{N,\beta}\pi_n^{N,\beta} \le w_{mn}\le C_w \,I_N N_\beta\,\pi_m^{N,\beta}\pi_n^{N,\beta},
		\label{eq:w-lower-bound}
	\end{equation}
	provided $N_\beta\ge N_0$, where $N_0$ is the constant specified in Proposition \ref{prop:equilibrium-bounds}. 
\end{lemma}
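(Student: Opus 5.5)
The plan is to bound each summand of $w_{mn}=\sum_{i\in I_N}(\bar p_m\bar r_{im})(\bar p_n\bar r_{in})/\bar r_i^0$ above and below using only the equilibrium bounds of Proposition \ref{prop:equilibrium-bounds} and their consequence \eqref{eq:derived-risk-bounds}. Throughout we assume $N_\beta\ge N_0$, so that all these bounds hold at every equilibrium.

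First, the numerator factors. For every agent $i$ and every date $n$, Proposition \ref{prop:equilibrium-bounds}(i) and (ii) give
$$
c_pc_r\,\beta^n\le \bar p_n\bar r_{in}\le C_pC_r\,\beta^n .
$$
Hence
$$
(c_pc_r)^2\beta^m\beta^n\le (\bar p_m\bar r_{im})(\bar p_n\bar r_{in})\le (C_pC_r)^2\beta^m\beta^n .
$$
This also covers $m=n$.

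Second, the denominator. By \eqref{eq:derived-risk-bounds}, $c_pc_rN_\beta\le \bar r_i^0\le C_pC_rN_\beta$. Combining with the numerator bounds, each summand satisfies
$$
\frac{c_pc_r}{C_pC_r}\,c_pc_r\,\frac{\beta^m\beta^n}{N_\beta}\le \frac{(\bar p_m\bar r_{im})(\bar p_n\bar r_{in})}{\bar r_i^0}\le \frac{C_pC_r}{c_pc_r}\,C_pC_r\,\frac{\beta^m\beta^n}{N_\beta}.
$$

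Third, rewrite in terms of the normalized weights. Since $\pi_m^{N,\beta}\pi_n^{N,\beta}=\beta^m\beta^n/N_\beta^2$, we have $\beta^m\beta^n/N_\beta=N_\beta\,\pi_m^{N,\beta}\pi_n^{N,\beta}$. Summing the summand bounds over the $I_N$ agents gives \eqref{eq:w-lower-bound} with
$$
c_w=\frac{(c_pc_r)^2}{C_pC_r},\qquad C_w=\frac{(C_pC_r)^2}{c_pc_r}.
$$
Both constants are independent of $i,n,N,\beta$ and the equilibrium, because the constants of Proposition \ref{prop:equilibrium-bounds} are.

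There is no real obstacle here. The substantive work is already contained in Proposition \ref{prop:equilibrium-bounds}, in particular the uniform comparability $\bar p_n\asymp\beta^n$. The only points to check are that the lower bound on $\bar r_i^0$ in \eqref{eq:derived-risk-bounds} is strictly positive, which it is since $N_\beta>0$, and that the bounds hold uniformly over agents, so that summing over $i$ yields exactly the factor $I_N$.
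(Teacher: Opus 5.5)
Your proposal is correct and follows the paper's proof essentially line for line. You bound $\bar p_n\bar r_{in}$ by $\beta^n$ and $\bar r_i^0$ by $N_\beta$ using Proposition~\ref{prop:equilibrium-bounds} and \eqref{eq:derived-risk-bounds}, sum over agents, and rewrite via $\pi_m^{N,\beta}\pi_n^{N,\beta}=\beta^m\beta^n/N_\beta^2$, arriving at the same constants $c_w=(c_pc_r)^2/(C_pC_r)$ and $C_w=(C_pC_r)^2/(c_pc_r)$.
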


\begin{proof}
	Using Proposition \ref{prop:equilibrium-bounds} and \eqref{eq:derived-risk-bounds}, we have $
	c_rc_p\beta^n\le \bar p_n\bar r_{in}\le C_rC_p\beta^n$ and $c_pc_r N_\beta\le \bar r_i^0\le C_pC_r N_\beta$. Therefore, using the definition of $w_{mn}$ in \eqref{eq:w-mn}, we have
	$$
	\frac{c_r^2c_p^2}{C_pC_r}\sum_{i\in I_N}\frac{\beta^m\beta^n}{N_\beta}\le w_{mn}
	\le
	\frac{C_r^2C_p^2}{c_pc_r}\sum_{i\in I_N}\frac{\beta^m\beta^n}{N_\beta}.
	$$
	Recall $\pi_n^{N,\beta}=\beta^n/N_\beta$ and take $c_w=\frac{c_r^2c_p^2}{C_pC_r}$ and $C_w=\frac{C_r^2C_p^2}{c_pc_r}$, the lemma follows.
\end{proof}

We now prove our key observation about the stabilizing force: as the effective number of commodities grows, the substitution effect $S(u)$ accumulates at the order $N_\beta$. Note that when $v=\mathbb{1}$, by Lemma \ref{lem:S-structure}, $S(u)=0$. Thus, the normalization that $u$ is a price distortion direction satisfying $\Psi(u)=0$ plays a role in the following proposition.

\begin{proposition}\label{prop:S-lower-bound}
	Suppose \emph{(A1)}--\emph{(A4)} hold. Then, there exists a constant $c_S>0$ and an integer $N_2\ge N_1$ such that for every decomposition $q=\alpha\bar p+u$ with $\Psi(u)=0$ and $v_n=u_n/\bar p_n$, one has
	\begin{equation}
		S(u)\ge c_S\,I_N N_\beta \norm{v}_{N,\beta}^2,
		\label{eq:S-lower-bound}
	\end{equation}
	whenever $N_\beta\ge N_2$. Here, $N_1$ is the constant specified in Lemma \ref{lem:A-positive}.
\end{proposition}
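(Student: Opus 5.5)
The plan is to combine the pairwise-difference representation of Lemma \ref{lem:S-structure} with the weight bounds of Lemma \ref{lem:w-lower-bound}. This shows that $S(u)$ controls the weighted variance of $v$ at rate $I_N N_\beta$. The normalization $\Psi(u)=0$ is then used to show that the weighted mean of $v$ is dominated by that variance. Throughout we take $N_\beta\ge N_1$, so that Lemma \ref{lem:A-positive} and Proposition \ref{prop:equilibrium-bounds} are both available.

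\textbf{Step 1 (variance bound).} By Lemma \ref{lem:S-structure} and the lower bound in \eqref{eq:w-lower-bound},
$$
S(u)\ge \frac{c_w}{2} I_N N_\beta \sum_{m,n=1}^N \pi_m^{N,\beta}\pi_n^{N,\beta}(v_m-v_n)^2 .
$$
Since $\sum_n \pi_n^{N,\beta}=1$, the right-hand sum equals $2\norm{v-\bar v\mathbb 1}_{N,\beta}^2$, where $\bar v:=\langle v,\mathbb 1\rangle_{N,\beta}$. Writing $v'=v-\bar v\mathbb 1$, this gives $S(u)\ge c_w I_N N_\beta\norm{v'}_{N,\beta}^2$. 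We also have the orthogonal identity $\norm{v}_{N,\beta}^2=\bar v^2+\norm{v'}_{N,\beta}^2$. It therefore suffices to show $|\bar v|\le C'\norm{v'}_{N,\beta}$ for a uniform constant $C'$; then $c_S=c_w/(1+C'^2)$ works.

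\textbf{Step 2 (controlling the mean via $\Psi(u)=0$).} Set $c_i:=(2\bar r_{i0}+\omega_{i0}-\bar x_{i0})/\bar r_i$. Because $(\bar m_{in})_n$ is a probability vector, $\Lambda_i(u)=\sum_n \bar m_{in}v_n=\bar v+\Lambda_i'$, where $\Lambda_i':=\sum_n\bar m_{in}v'_n$. The condition $\Psi(u)=0$ then reads
$$
\bar v\sum_i c_i\bar r_i^0=-\sum_i c_i\bar r_i^0\Lambda_i' .
$$
The left coefficient is $\Psi(\bar p)$, since $\Lambda_i(\bar p)=1$. It equals $A+\sum_i \bar r_i^0\bar r_{i0}/\bar r_i\ge A\ge c_A I_N$ by Lemma \ref{lem:A-positive}.

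For the right side, Proposition \ref{prop:equilibrium-bounds}, (A4) and \eqref{eq:derived-risk-bounds} give that $|2\bar r_{i0}+\omega_{i0}-\bar x_{i0}|$ is uniformly bounded and $\bar r_i^0/\bar r_i\le C_pC_r/(c_pc_r)$. Hence $|c_i\bar r_i^0|\le C_1$ uniformly. Moreover, $\bar m_{in}=\bar p_n\bar r_{in}/\bar r_i^0\le \frac{C_pC_r}{c_pc_r}\pi_n^{N,\beta}$. By Cauchy--Schwarz with weights $\pi^{N,\beta}$, this yields $|\Lambda_i'|\le C_2\sum_n\pi_n^{N,\beta}|v'_n|\le C_2\norm{v'}_{N,\beta}$. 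Combining, $c_AI_N|\bar v|\le I_N C_1C_2\norm{v'}_{N,\beta}$, so $C'=C_1C_2/c_A$.

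\textbf{Main obstacle.} I expect Step 2 to be the delicate point. The only nontrivial input is that the $I_N$ in the lower bound for $\Psi(\bar p)$ matches the $I_N$ in the upper bound on the right side. This is where positivity of $A$ (Lemma \ref{lem:A-positive}) and the uniform $O(1)$ bounds on $c_i\bar r_i^0$ are essential. It is also why the threshold $N_2$ is taken to be at least $N_1$; choosing $N_2=N_1$ suffices.
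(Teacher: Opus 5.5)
Your proposal is correct and follows essentially the same route as the paper. You combine the variance bound from Lemma~\ref{lem:S-structure} and Lemma~\ref{lem:w-lower-bound} with the constraint $\Psi(u)=0$ and the estimate $\Psi(\bar p)\ge c_A I_N$, which together bound the $\pi$-weighted mean of $v$ by its weighted dispersion. The differences are cosmetic: you bound the constraint agent by agent through $\Lambda_i'$ rather than date by date through the weights $d_n=\sum_i c_i\bar r_i^0\bar m_{in}$, and you drop the nonnegative term $\sum_i \bar r_i^0\bar r_{i0}/\bar r_i$ in $\Psi(\bar p)$, which lets you take $N_2=N_1$.
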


\begin{proof}
	For every $N\ge 1$ and $n=1,\dots,N$, define $d_n=\sum_{i\in I_N}\frac{2\bar r_{i0}+\omega_{i0}-\bar x_{i0}}{\bar r_i}\,\bar p_n\bar r_{in}.$
	Since $u_n=\bar p_n v_n$ and $u\in\ker\Psi$, we have
	\begin{equation}
		\sum_{n=1}^N d_n v_n=0.
		\label{eq:d-v-orthogonality}
	\end{equation}
	
	By Proposition \ref{prop:equilibrium-bounds}, the quantities $\bar r_{i0}$ and $\bar x_{i0}$ are uniformly bounded. By Assumption \textnormal{(A4)}, $\omega_{i0}$ is also uniformly bounded. Hence, there exists a constant $C>0$ such that
	$|2\bar r_{i0}+\omega_{i0}-\bar x_{i0}|\le C$ for all $i\in I_N,\ N\ge 1$. By Proposition \ref{prop:equilibrium-bounds} and \eqref{eq:derived-risk-bounds}, we obtain
	$$
	|d_n|
	\le
	C\sum_{i\in I_N}\frac{\bar p_n\bar r_{in}}{\bar r_i}
	\le
	C_d I_N\frac{\beta^n}{N_\beta}
	=
	C_d I_N \pi_n^{N,\beta}
	\qquad\text{for every } N\ge N_0, n=1,\dots, N,
	$$
	for some constant $C_d>0$ independent of $n$ and $N$.
	
	Next, summing $d_n$ over $n$ yields
	$$
	\sum_{n=1}^N d_n
	=
	\sum_{i\in I_N}\frac{\bar r_i^0}{\bar r_i}\bigl(2\bar r_{i0}+\omega_{i0}-\bar x_{i0}\bigr)
	=
	A+\sum_{i\in I_N}\frac{\bar r_i^0}{\bar r_i}\bar r_{i0}.
	$$
	By Lemma \ref{lem:A-positive}, whenever $N_\beta\ge N_1$, $A\ge c_AI_N>0$. Moreover, by Proposition \ref{prop:equilibrium-bounds}, whenever $N_\beta$ is sufficiently large, one has $\bar r_i^0/\bar r_i\ge 1/2$ and $\bar r_{i0}\ge c_r$. Hence, there exist $N_2\ge N_1$ and $c_d>0$ such that for any $N_\beta\ge N_2$, $\sum_{n=1}^N d_n\ge c_d I_N.$
	
	Let $\bar v_\pi:=\sum_{n=1}^N \pi_n^{N,\beta} v_n$ denote the $\pi$-average of $v$.
	Using \eqref{eq:d-v-orthogonality}, we have
	$$
	\bar v_\pi
	=
	\frac{\sum_{n=1}^N d_n \bar v_\pi}{\sum_{n=1}^N d_n}
	=
	\frac{\sum_{n=1}^N d_n(\bar v_\pi-v_n)}{\sum_{n=1}^N d_n}.
	$$
	Using the bounds on $d_n$, we obtain
	$$
	|\bar v_\pi|
	\le
	\frac{\sum_{n=1}^N |d_n||v_n-\bar v_\pi|}{\sum_{n=1}^N d_n}
	\le
	\frac{C_d}{c_d}\sum_{n=1}^N \pi_n^{N,\beta} |v_n-\bar v_\pi|
	\le
	\frac{C_d}{c_d}\left(\sum_{n=1}^N \pi_n^{N,\beta} (v_n-\bar v_\pi)^2\right)^{1/2},
	$$
	where the last step uses Cauchy--Schwarz and $\sum_{n=1}^N \pi_n^{N,\beta}=1$. Therefore,
	$$
	\norm{v}_{N,\beta}^2=\sum_{n=1}^N \pi_n^{N,\beta} v_n^2
	=
	\sum_{n=1}^N \pi_n^{N,\beta} (v_n-\bar v_\pi)^2+\bar v_\pi^2
	\le
	\left(1+\frac{C_d^2}{c_d^2}\right)\sum_{n=1}^N \pi_n^{N,\beta} (v_n-\bar v_\pi)^2.
	$$
	Hence, we proved that there exists $c_v>0$ such that
	\begin{equation}
		\sum_{n=1}^N \pi_n^{N,\beta} (v_n-\bar v_\pi)^2
		\ge
		c_v\norm{v}_{N,\beta}^2
		\qquad\text{whenever } N_\beta\ge N_2.
		\label{eq:v-variance-lower}
	\end{equation}
	By Lemma \ref{lem:S-structure} and Lemma \ref{lem:w-lower-bound},
	$$
	S(u)
	\ge
	\frac{c_w I_N N_\beta}{2}\sum_{m,n=1}^N \pi_m^{N,\beta}\pi_n^{N,\beta}(v_m-v_n)^2.
	$$
	Since $(\pi_n^{N,\beta})_{n=1}^N$ is a probability vector, the variance identity yields
	$$
	\frac12\sum_{m,n=1}^N \pi_m^{N,\beta}\pi_n^{N,\beta}(v_m-v_n)^2
	=
	\sum_{n=1}^N \pi_n^{N,\beta} (v_n-\bar v_\pi)^2.
	$$
	Therefore, by \eqref{eq:v-variance-lower}, whenever $N_\beta\ge N_2$,
	$$
	S(u)\ge c_w I_N N_\beta \sum_{n=1}^N \pi_n^{N,\beta} (v_n-\bar v_\pi)^2\ge c_SI_NN_\beta \norm{v}_{N,\beta}^2,
	$$
	for some constant $c_S>0$, as claimed.
\end{proof}

\subsection{Control of Income Effects}
In this subsection, we show that the diversification assumption \textnormal{(A5)} controls the distortion-driven income effect so that $M(u)$ grows sublinearly in $N_\beta$.

\begin{proposition}\label{prop:M-upper-bound}
	Suppose \emph{(A1)}--\emph{(A5)} hold. Then there exists a function $o:\mathbb{R}\rightarrow \mathbb{R}$ satisfying $\lim_{x\rightarrow \infty} o(x)/x=0$ such that, for every $N\ge N_0$ and every decomposition $q=\alpha\bar p+u$ with $\Psi(u)=0$ and $v_n:=u_n/\bar p_n$, one has
	\begin{equation}
		|M(u)|\le o(N_\beta) I_N \norm{v}_{N,\beta}^2.
		\label{eq:M-upper-bound}
	\end{equation}
	Here, the function $o$ is independent of the choice of equilibria $(\bar{p}^{N,\beta}, \bar{x}^{N,\beta})$, and $N_0$ is the constant specified in Proposition \ref{prop:equilibrium-bounds}. 
\end{proposition}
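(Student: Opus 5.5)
The plan is to bound $M(u)$ term by term after splitting each agent's $\Lambda_i(u)$ into a population-common component and an idiosyncratic component. The common component is killed by market clearing. The idiosyncratic component is controlled by (A5) through a Gram-matrix argument. Throughout I write $v_n=u_n/\bar p_n$ and $t_{im}=\bar x_{im}-\omega_{im}$, use Proposition \ref{prop:equilibrium-bounds} and \eqref{eq:derived-risk-bounds}, and assume $N_\beta\ge N_0$.

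\emph{Step 1: the quadratic piece.} The term $-\sum_i(\bar r_i^0/\bar r_i)\bar r_{i0}\Lambda_i(u)^2$ is harmless. Indeed $\bar m_{in}\le C\pi_n^{N,\beta}$ by the bounds on $\bar p_n\bar r_{in}$ and $\bar r_i^0$, so Cauchy--Schwarz gives $|\Lambda_i(u)|=|\sum_n\bar m_{in}v_n|\le C\norm{v}_{N,\beta}$. Together with $\bar r_i^0/\bar r_i\le1$ and $\bar r_{i0}\le C_r$, this piece is $O(I_N\norm{v}_{N,\beta}^2)$, which is $o(N_\beta)I_N\norm{v}_{N,\beta}^2$.

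\emph{Step 2: rewrite the trade term.} Write $\sum_m(\omega_{im}-\bar x_{im})u_m=-N_\beta\langle \tilde t_i,v\rangle_{N,\beta}$, where $\tilde t_{im}:=t_{im}\bar p_m/\beta^m$. This vector is uniformly bounded by Proposition \ref{prop:equilibrium-bounds} and (A4). Next set $\bar m_n:=\bar m_n(\bar x)$ and $\tilde v_n:=v_n\bar m_n/\pi_n^{N,\beta}$, so that $\norm{\tilde v}_{N,\beta}\le C\norm{v}_{N,\beta}$. Using $\bar m_{in}=\bar m_n(1+\bar\rho_{in})$ I split
\[
\Lambda_i(u)=\bar\Lambda+\delta_i,\qquad \bar\Lambda:=\sum_n\bar m_nv_n,\qquad \delta_i:=\langle\bar\rho_i,\tilde v\rangle_{N,\beta}.
\]
With $a_i:=\bar r_i^0/\bar r_i$, the remaining term of $M(u)$ becomes $-N_\beta\bar\Lambda\sum_i a_i\langle\tilde t_i,v\rangle_{N,\beta}-N_\beta\sum_i a_i\delta_i\langle\tilde t_i,v\rangle_{N,\beta}$.

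\emph{Step 3: the common part.} Market clearing gives $\sum_i t_{im}=0$ for each $m$, hence $\sum_ia_i\tilde t_{im}=\sum_i(a_i-1)\tilde t_{im}$. Since $1-a_i=\bar r_{i0}/\bar r_i\le C/N_\beta$, we get $|\sum_i a_i\tilde t_{im}|\le CI_N/N_\beta$. Combined with $|\bar\Lambda|\le C\norm{v}_{N,\beta}$, the common part is $O(I_N\norm{v}_{N,\beta}^2)$.

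\emph{Step 4: the idiosyncratic part (main obstacle).} This is the only place where (A5) enters. By Cauchy--Schwarz over agents,
\[
\Bigl|\sum_ia_i\delta_i\langle\tilde t_i,v\rangle\Bigr|\le\Bigl(\sum_i\delta_i^2\Bigr)^{1/2}\bigl(CI_N\norm{v}^2_{N,\beta}\bigr)^{1/2}.
\]
Let $K_{ij}:=\langle\bar\rho_i,\bar\rho_j\rangle_{N,\beta}$. Then $\sum_i\delta_i^2=\sum_i\langle\bar\rho_i,\tilde v\rangle^2\le\lambda_{\max}(K)\norm{\tilde v}^2_{N,\beta}$, because $K$ and the operator $\tilde v\mapsto\sum_i\langle\bar\rho_i,\tilde v\rangle\bar\rho_i$ share their nonzero spectrum. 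The entries $|K_{ij}|$ are uniformly bounded, since $\bar\rho_{in}$ is bounded by the two-sided bounds on $\bar m_{in}/\pi_n$. Hence
\[
\lambda_{\max}(K)\le\Bigl(\sum_{ij}K_{ij}^2\Bigr)^{1/2}\le C\Bigl(\sum_{ij}|K_{ij}|\Bigr)^{1/2}=CI_N\varepsilon^{1/2},
\]
where $\varepsilon:=I_N^{-2}\sum_{ij}|K_{ij}|$. The idiosyncratic part is therefore at most $CN_\beta I_N\varepsilon^{1/4}\norm{v}^2_{N,\beta}$.

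\emph{Step 5: uniformity.} It remains to make $\varepsilon$ vanish uniformly over equilibria. Define
\[
\varepsilon(x):=\sup\{\varepsilon^{N,\beta}(\bar p,\bar x):N_\beta\ge x,\ (\bar p,\bar x)\text{ an equilibrium of }\mathcal E^{N,\beta}\}.
\]
If $\varepsilon(x)\not\to0$, picking near-maximizing equilibria would build a family that violates (A5). Setting $o(x):=C(1+x\,\varepsilon(x)^{1/4})$ then works and is independent of the equilibrium selection. The delicate points I expect are the spectral bound in Step 4 and this uniformity argument. Note that the constraint $\Psi(u)=0$ is not actually needed in this proof.
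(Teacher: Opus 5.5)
Your proof is correct, and its overall structure matches the paper's. The $\bar r_{i0}\Lambda_i(u)^2$ piece is $O(I_N\norm{v}_{N,\beta}^2)$. Market clearing removes the population-average part $\bar\Lambda$ of $\Lambda_i(u)$: you absorb $1-a_i=\bar r_{i0}/\bar r_i$ into the common part, while the paper peels off $\eta_i=\bar r_{i0}/\bar r_i$ first and then uses $\sum_i E_i(u)=0$, which is only bookkeeping. Uniformity follows from the same sup-over-equilibria contradiction with (A5). You are also right that $\Psi(u)=0$ plays no role; the paper's proof does not use it either.

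The genuine difference is the bound on the idiosyncratic term. The paper never separates trades from deviations. It writes $\sum_i\langle\bar\rho_i,\tilde v\rangle_{N,\beta}\langle t_i,\hat v\rangle_{N,\beta}$ as a bilinear form with kernel $B_{mn}=\sum_i\sqrt{\pi_m^{N,\beta}}\,t_{im}\,\sqrt{\pi_n^{N,\beta}}\,\bar\rho_{in}$. It then applies Cauchy--Schwarz to the double sum over dates, so that $\sum_{m,n}B_{mn}^2=\sum_{i,j}\langle t_i,t_j\rangle_{N,\beta}\langle\bar\rho_i,\bar\rho_j\rangle_{N,\beta}=I_N^2\Delta_{N,\beta}$. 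This gives the rate $\sqrt{\kappa(N_\beta)}$ rather than your $\varepsilon^{1/4}$. Because the bound runs through the joint alignment of net trades and deviation profiles, it is exactly what lets the paper weaken (A5) to (A5'). Your Cauchy--Schwarz over agents uses the net trades only through their uniform bound, so it cannot reach (A5').

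What your route buys is transparency on the preference side. All it needs is $\lambda_{\max}(K)/I_N\to 0$, that is, the top eigenvalue of $D^{1/2}\Sigma D^{1/2}$ from Section~\ref{subsection:assumptiondiscussion} must vanish, with $\Sigma$ built from the $\bar\rho_i$. This turns the no-concentration reading of (A5) into a literal statement about a single worst direction. Given the uniform bound on $\bar\rho_{in}$, this top-eigenvalue condition is in fact equivalent to (A5), since $\sum_{i,j}K_{ij}^2\le\lambda_{\max}(K)\operatorname{tr}K$ and $\operatorname{tr}K\le CI_N$.
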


\begin{proof}
	At any equilibrium $(\bar{p},\bar{x})$, we write the net trade $t_{in}=\bar x_{in}-\omega_{in}$, and define
	$$
	\Delta_{N,\beta}:=
	\frac{1}{I_N^2}\sum_{i,j\in I_N}
	\langle t_i,t_j\rangle_{N,\beta}\,
	\langle \bar\rho_i,\bar\rho_j\rangle_{N,\beta}.
	$$
	The key observation is that $\Delta_{N,\beta}$ converges to zero when $N_\beta$ goes to infinity, uniformly across all equilibrium selections.
	
	To see this, by Proposition \ref{prop:equilibrium-bounds} and Assumption \textnormal{(A4)}, there exists a constant $C_t>0$, independent of $i$, $n$, and $N$, such that
	$|t_{in}|\le C_t$ for all $i\in I_N,\ n=1,\dots,N.$	Therefore, for every $i,j\in I_N$, whenever $N_\beta\ge N_0$, 
	$$
	\bigl|\langle t_i,t_j\rangle_{N,\beta}\bigr|
	\le
	\frac{1}{N_\beta}\sum_{n=1}^N \beta^n |t_{in}||t_{jn}|
	\le
	\frac{C_t^2}{N_\beta}\sum_{n=1}^N \beta^n
	= C_t^2.
	$$
	Therefore, 
	$$
	\Delta_{N,\beta}
	\le
	C_t^2
	\frac{1}{I_N^2}\sum_{i,j\in I_N}\bigl|\langle \bar\rho_i,\bar\rho_j\rangle_{N,\beta}\bigr| \le C_t^2\kappa(N_\beta),
	$$
	where
	$$\kappa(K)=\sup\left\{\frac{1}{I_N^2}\sum_{i,j\in I_N} |\langle \bar{\rho}_i,\bar{\rho}_j \rangle_{N,\beta}|: N_\beta>K,(\bar{p},\bar{x}) \text{~is ~an~equilibrium~of~}\mathcal{E}^{N,\beta}\right\}.$$
	Note that $\Delta_{N,\beta}$ can depend on the equilibrium selection and, here, we bound it by a quantity $C_t^2\kappa(N_\beta)$ which is independent of the equilibrium selection. It remains to show  $\kappa(K)\rightarrow 0$ as $K\rightarrow \infty$. Suppose this does not hold. Then,  we can find a sequence of equilibria $(\bar{p}^{(k)},\bar{x}^{(k)})$ for $\mathcal{E}^{N^{(k)},\beta^{(k)}}$ such that $N_\beta^{(k)}=\sum_{n=1}^{N^{(k)}}(\beta^{(k)})^n$ goes to infinity and the average alignment  $\frac{1}{I_{N^{(k)}}^2}\sum_{i,j\in I_{N^{(k)}}} |\langle \bar{\rho}_i^{(k)},\bar{\rho}_j^{(k)} \rangle_{N^{(k)},\beta^{(k)}}|\ge c$ for some $c>0$, where $\bar\rho_i^{(k)}$ is defined at the corresponding equilibrium. Then, Assumption \textnormal{(A5)} is violated. Contradiction.
	
	Now, we use this to bound the income effect term $M(u).$ Recall $\bar{m}_n=\frac{1}{I_N} \sum_{i\in I_N} \bar{m}_{in}$, where $\bar{m}_{in}$ is the equilibrium marginal expenditure share of agent $i$ defined in \eqref{eqn:barm_in}. We  can write
	$\bar\Lambda(u):= \frac{1}{I_N} \sum_{i\in I_N} \Lambda_i(u)= \sum_{n=1}^N \bar m_n v_n.$ Then, by the definition of $\bar\rho_{in}$ in \eqref{eqn:relativedeviation}, 
	$$
	\Lambda_i(u)-\bar\Lambda(u)
	=
	\sum_{n=1}^N \bar m_n\bar\rho_{in}v_n
	=
	\langle \bar\rho_i,\tilde v\rangle_{N,\beta},
	$$
	$$
	\sum_{m=1}^N (\omega_{im}-\bar x_{im})u_m
	=
	-\sum_{m=1}^N t_{im}\bar p_m v_m
	=
	-N_\beta\langle t_i,\hat v\rangle_{N,\beta},
	$$
	where $\tilde v_n:=\frac{N_\beta \bar m_n}{\beta^n}v_n,$ and $\hat v_n:=\frac{\bar p_n}{\beta^n}v_n.$ By Proposition \ref{prop:equilibrium-bounds}, \eqref{eqn:barm_in} and \eqref{eq:derived-risk-bounds}, there exist constants $0<c_m < C_m<\infty$ and $0<c_p< C_p<\infty$, such that
	$$
	c_m\frac{\beta^n}{N_\beta}\le \bar m_n\le C_m\frac{\beta^n}{N_\beta},
	\qquad
	c_p\beta^n\le \bar p_n\le C_p\beta^n.
	$$
	Therefore, both $\tilde v_n$ and $\hat v_n$ are uniformly comparable to $v_n$. Formally, taking $C=\max \{C_p,C_m\}$, we have 
	\begin{equation}
		\norm{\tilde{v}}_{N,\beta} \le C\norm{v}_{N,\beta}, \qquad \norm{\hat{v}}_{N,\beta} \le C\norm{v}_{N,\beta}.
		\label{eq:vhattilde-bound}
	\end{equation}
	
	Next, write $\eta_i=\frac{\bar r_{i0}}{\bar r_i}$ and $E_i(u)=\sum_{m=1}^N (\omega_{im}-\bar x_{im})u_m
	=
	-N_\beta\langle t_i,\hat v\rangle_{N,\beta}.$
	Then, by \eqref{eqn:M},
	\begin{equation}
		M(u)
		=
		\sum_{i\in I_N}\Lambda_i(u)E_i(u)
		-
		\sum_{i\in I_N}\eta_i\Lambda_i(u)E_i(u)
		-
		\sum_{i\in I_N}(1-\eta_i)\bar r_{i0}\Lambda_i(u)^2.
		\label{eq:M-decomposition}
	\end{equation}
	
	We first study the leading term of $M(u)$ in \eqref{eq:M-decomposition}. By market clearing,
	$\sum_{i\in I_N}E_i(u)
	=
	\sum_{m=1}^N u_m\sum_{i\in I_N}(\omega_{im}-\bar x_{im})
	=
	0.$ Hence,
	$$
	\sum_{i\in I_N}\Lambda_i(u)E_i(u)
	=
	\sum_{i\in I_N}\bigl(\Lambda_i(u)-\bar\Lambda(u)\bigr)E_i(u)
	=
	-N_\beta\sum_{i\in I_N}\langle \bar\rho_i,\tilde v\rangle_{N,\beta}\langle t_i,\hat v\rangle_{N,\beta}.
	$$
	
	Define an $N\times N$ matrix $B=(B_{mn})$ by
	$B_{mn}:=
	\sum_{i\in I_N}\left(\sqrt{\frac{\beta^m}{N_\beta}}\,t_{im}\right)
	\left(\sqrt{\frac{\beta^n}{N_\beta}}\,\bar\rho_{in}\right).$
	Then
	$$
	\sum_{i\in I_N}\langle \bar\rho_i,\tilde v\rangle_{N,\beta}\langle t_i,\hat v\rangle_{N,\beta}
	=
	\sum_{m,n=1}^N B_{mn}
	\left(\sqrt{\frac{\beta^m}{N_\beta}}\hat v_m\right)
	\left(\sqrt{\frac{\beta^n}{N_\beta}}\tilde v_n\right).
	$$
	Applying Cauchy--Schwarz to the double sum gives
	$$
	\left|
	\sum_{i\in I_N}\langle \bar\rho_i,\tilde v\rangle_{N,\beta}\langle t_i,\hat v\rangle_{N,\beta}
	\right|
	\le
	\left(\sum_{m,n=1}^N B_{mn}^2\right)^{1/2}
	\left( \sum_{m,n=1}^N \frac{\beta^m\beta^n}{N_\beta^2}\hat{v}_m^2\tilde{v}_n^2\right)^{1/2}
	=
	\left(\sum_{m,n=1}^N B_{mn}^2\right)^{1/2}
	\|\hat v\|_{N,\beta}\|\tilde v\|_{N,\beta}.
	$$
	Moreover,
	$$
	\sum_{m,n=1}^N B_{mn}^2
	=
	\sum_{i,j\in I_N}\langle t_i,t_j\rangle_{N,\beta}\langle \bar\rho_i,\bar\rho_j\rangle_{N,\beta} = I_N^2\Delta_{N,\beta} \le C_t^2I_N^2\kappa(N_\beta).
	$$
	Using these two inequalities, with \eqref{eq:vhattilde-bound}, we obtain
	\begin{equation}
		\left|
		\sum_{i\in I_N}\Lambda_i(u)E_i(u)
		\right|
		\le C^2N_\beta\left(\sum_{m,n=1}^N B_{mn}^2\right)^{1/2}\norm{v}_{N,\beta}^2
		\le 
		C^2C_t\sqrt{\kappa(N_\beta)}\,N_\beta I_N\norm{v}_{N,\beta}^2.
		\label{eq:M-main-bound}
	\end{equation}
	
	It remains to bound the last two terms in \eqref{eq:M-decomposition}. 
	First, by \eqref{eq:vhattilde-bound},
	$$
	|E_i(u)|
	=
	N_\beta|\langle t_i,\hat v\rangle_{N,\beta}|
	\le
	N_\beta \norm{t_i}_{N,\beta} \norm{\hat v}_{N,\beta}\le
	C_t N_\beta \norm{\hat v}_{N,\beta}\le C_EN_\beta \norm{v}_{N,\beta},
	$$
	for some $C_E>0$. Moreover, since both $\bar m_{in}$ and $\pi_n^{N,\beta}$ are uniformly comparable to $\beta^n/N_\beta$, one can find $C_{\Lambda}>0$ such that $\bar m_{in}\le C_{\Lambda} \pi_n^{N,\beta}$ for all $i\in I_N$ and $n=1,\dots,N$. Since $(\bar m_{in})_{n=1}^N$ is a probability vector, we obtain
	$$
	|\Lambda_i(u)|
	=
	\left|\sum_{n=1}^N \bar m_{in}v_n\right|
	\le
	\left(\sum_{n=1}^N \bar m_{in}v_n^2\right)^{1/2}
	\le
	C_\Lambda^{1/2}\left(\sum_{n=1}^N \pi_n^{N,\beta} v_n^2\right)^{1/2}=C_\Lambda^{1/2} \norm{v}_{N,\beta}.
	$$
	
	Finally, by Proposition \ref{prop:equilibrium-bounds}, $\bar r_{i0}$ is uniformly bounded and there exists a constant $C_\eta>0$ such that
	$0\le \eta_i\le \frac{C_\eta}{N_\beta}$ for all $i\in I_N$.  Hence, we have
	$$
	\left|
	\sum_{i\in I_N}\eta_i\Lambda_i(u)E_i(u)
	\right|
	+
	\left|
	\sum_{i\in I_N}(1-\eta_i)\bar r_{i0}\Lambda_i(u)^2
	\right|
	\le
	(C_\eta C_E C_\Lambda^{1/2}+C_rC_\Lambda)I_N\norm{v}_{N,\beta}^2.
	$$
	In sum, we have, for some $C_0, C_1>0$, 
	$$
	|M(u)|
	\le
	\left(C_0\sqrt{\kappa(N_\beta)}+\frac{C_1}{N_\beta}\right)
	N_\beta I_N \norm{v}^2_{N,\beta}.
	$$
	Since $\kappa(N_\beta)\to 0$ as $N_\beta\to\infty$, the coefficient $C_0\sqrt{\kappa(N_\beta)}+\frac{C_1}{N_\beta}$ converges to zero as $N_\beta\to\infty$. This proves the proposition.
\end{proof}

\subsection{Proof of the Main Theorem}

We now complete the proof of the main theorem by combining the bounds established in the previous subsections. 

By Lemma \ref{lem:A-positive}, there exist constants $c_A>0$ and $N_1\in\mathbb N$ such that
$$
A\ge c_A I_N>0
\qquad\text{whenever } N_\beta\ge N_1.
$$
Hence, by Proposition \ref{prop:quadratic-decomposition}, the decomposition $q=\alpha \bar p+u$ for $u\in\ker\Psi$ is uniquely defined for every $q\in \mathbb{R}^N$. Write $v_n=u_n/\bar{p}_n$ for each $n=1,\dots,N$.

Next, Proposition \ref{prop:S-lower-bound} yields a constant $c_S>0$ and an integer $N_2\ge N_1$ such that, for every $u\in\ker\Psi$,
$$
S(u)\ge c_S N_\beta I_N \norm{v}_{N,\beta}^2
\qquad\text{whenever } N_\beta\ge N_2.
$$
Proposition \ref{prop:M-upper-bound} implies that there exists an integer $N_3\ge N_2$ such that, for every $u\in\ker\Psi$,
$$
|M(u)|\le \frac{c_S}{2} N_\beta I_N  \norm{v}_{N,\beta}^2
\qquad\text{whenever } N_\beta\ge N_3.
$$
Therefore, whenever $N_\beta\ge N_3$, for any price perturbation $q=\alpha\bar p+u\neq 0$ with $u\in\ker\Psi$,
\begin{equation}
	q^T Dz(\bar{p}) q
	\le
	-A\alpha^2+\alpha R(u)-\frac{c_S}{2}N_\beta I_N\norm{v}_{N,\beta}^2.
	\label{eq:main-proof-reduced-form}
\end{equation}
Finally, by Lemma \ref{lem:R-order-one}, there exists a constant $C_R>0$ such that, for every $u\in\mathbb R^N$, $|R(u)|\le C_R I_N\norm{v}_{N,\beta}$ whenever $N_\beta\ge N_3\ge N_0$. Then \eqref{eq:main-proof-reduced-form} implies
$$
q^T Dz(\bar{p}) q
\le
-c_A I_N\alpha^2+C_R I_N\norm{v}_{N,\beta} |\alpha|-\frac{c_S}{2}N_\beta I_N \norm{v}_{N,\beta}^2.
$$
If $u=0$, then $q=\alpha\bar p$ with $\alpha\neq 0$, so
$$
q^T Dz(\bar{p}) q\le -c_A I_N \alpha^2<0.
$$
If $u\neq 0$, then $v\neq 0$ implies $\norm{v}_{N,\beta}>0$. The right-hand side is a concave quadratic function of $\alpha$, and its maximum is attained at
$|\alpha|=\frac{C_R \norm{v}_{N,\beta}}{2c_A}.$
Hence,
$$
q^T Dz(\bar{p}) q
\le
\left(
\frac{C_R^2}{4c_A}-\frac{c_S}{2}N_\beta
\right)I_N\norm{v}_{N,\beta}^2.
$$
Therefore, $q^T Dz(\bar{p}) q<0$ for every $q\neq 0$ whenever 
$N_\beta>\frac{C_R^2}{2c_Ac_S}.$

Finally, let
$N^*:=\max\left\{N_3,\frac{C_R^2}{2c_Ac_S}\right\}.$
Then, whenever $N_\beta>N^*$, every interior equilibrium price $\bar p$ of $\mathcal E^{N,\beta}$ satisfies $q^T Dz(\bar p)q<0$ for all $q\in\mathbb R^N\setminus\{0\}$. Hence, $\bar p$ is locally t\^atonnement stable. Since every equilibrium is interior by Lemma \ref{lem:equilibrium-interior}, this proves the theorem.

\section{Discussion}\label{section:discussion}

\subsection{Discounting and Commodity Space} \label{subsec:choiceofutilityform}

In our analysis, we assume that agents' utilities take the discounted-sum form
$$
U_i(x_i)=\sum_{n=0}^N \beta^n u_{in}(x_{in}).
$$
This means that all agents disregard large-indexed commodities asymptotically, and do so at a common rate. Moreover, there is a single commodity at each date.

The feature of discarding large-indexed commodities asymptotically is motivated by the classical study of $\ell^\infty$-economies with countably many commodities. In that literature, Mackey continuity, introduced by \citet*{bewley1972existence}, is a standard regularity condition in equilibrium existence theory and captures the idea that tail commodities become asymptotically irrelevant. \citet*{brown1981myopic} interpret this property as a form of myopia, \citet*{stroyan1983myopic} develops the same idea in sequential economies, and \citet*{araujo1985lack} shows that impatience may be essential in such economies. See also \citet*{bastianello2017topological} for a recent reference. Thus, although our analysis would still apply if $\beta=1$, the resulting sequence of economies would no longer approximate the classical countable-commodity framework with tail irrelevance. More generally, the discount factor $\beta^n$ can be replaced by any positive summable sequence $b_n$  with sufficiently large total mass.

Another simplifying feature of the model is that all agents share the same
patience parameter. The role of this assumption is mainly expositional. With
heterogeneous patience parameters, agents whose patience is bounded away from
the maximal patience level discount sufficiently distant commodities more
heavily and therefore allocate relatively more of their consumption to earlier
periods. For such agents, adding further commodities far in the tail has a
vanishing effect on demand. Thus, the demand behavior on tail commodities is governed by the agents with maximal patience. We therefore impose a common patience
parameter in order to study directly the class of agents that determines the
asymptotic effect of enlarging the commodity space.

Finally, it is straightforward to extend the framework from one commodity at each date to additively separable per-period utility with multiple commodities. For instance, with two commodities in each period, one can relabel the goods as a single sequence of commodities and work with a discount factor $\tilde\beta$ satisfying $\tilde\beta^2=\beta$. Thus, the one-good-per-date formulation is adopted only for notational simplicity.

\subsection{A No-Concentration Interpretation of Assumption (A5)}
\label{subsection:assumptiondiscussion}

We now explain how Assumption \textnormal{(A5)} can be viewed as a no-concentration condition. The interpretation is especially transparent in a logarithmic environment, where $u_{in}(x)=\tau_{in}\log x,$ and the taste parameters are normalized so that
$\sum_{n=1}^N \beta^n\tau_{in}=1$ for each $i\in I_N$. In this case, for any allocation $x=(x_i)_{i\in I_N}$,
$$
m_{in}(x_i)
=
\frac{\beta^n\tau_{in}}{\sum_{m=1}^N\beta^m\tau_{im}}
=
\beta^n\tau_{in}, \qquad \rho_{in}(x)=\frac{m_{in}-\bar m_n}{\bar m_n}=\frac{\tau_{in}}{\bar\tau_n}-1,
$$
where $
\bar\tau_n:=\frac{1}{I_N}\sum_{j\in I_N}\tau_{jn}.$ Thus, in the logarithmic benchmark, \textnormal{(A5)} becomes a condition on taste-deviation profiles $\rho_i=
(
\frac{\tau_{in}}{\bar\tau_n}-1
)_{n=1}^N,$ satisfying
\begin{equation}
\frac{1}{I_N^2}\sum_{i,j\in I_N}
\left|
\frac{1}{N_\beta}\sum_{n=1}^N \beta^n
\left(
\frac{\tau_{in}}{\bar\tau_n}-1
\right)
\left(
\frac{\tau_{jn}}{\bar\tau_n}-1
\right)
\right|
\rightarrow 0 \qquad \text{~as~} N_\beta\rightarrow \infty.
\label{eqn: A5-log}
\tag{A5--log}
\end{equation}

One convenient sufficient condition for \eqref{eqn: A5-log} is spectral. Define matrices 
$D=\frac{1}{N_\beta}\operatorname{diag}(\beta,\beta^2,\dots,\beta^N)$ and 
$\Sigma=\frac{1}{I_N}\sum_{i\in I_N}\rho_i(\rho_i)^T.$ Then, by Cauchy--Schwarz,
$$\left(\frac{1}{I_N^2}\sum_{i,j\in I_N}
\left|\langle \rho_i,\rho_j\rangle_{N,\beta}\right|\right)^2\le 
\frac{1}{I_N^2}\sum_{i,j\in I_N}
\langle \rho_i,\rho_j\rangle_{N,\beta}^2
=
\operatorname{tr}\bigl((D\Sigma)^2\bigr)
=
\operatorname{tr}\bigl((D^{1/2}\Sigma D^{1/2})^2\bigr)
=
\sum_{k=1}^N\lambda_k^2,
$$
where $\lambda_1,\dots,\lambda_N$ are the eigenvalues of
$D^{1/2}\Sigma D^{1/2}$. Hence, \eqref{eqn: A5-log} is implied by the spectral condition
$$\lim_{N_\beta\rightarrow \infty}\sum_{k=1}^N\lambda_k^2=0.$$ 

This spectral sufficient condition admits a no-concentration interpretation.  If $v_k$ is a
unit eigenvector associated with $\lambda_k$, then
$$
\lambda_k
=
v_k^TD^{1/2}\Sigma D^{1/2}v_k
=
\frac{1}{I_N}\sum_{i\in I_N}
\bigl(v_k^TD^{1/2}\rho_i\bigr)^2 .
$$
Thus, $\lambda_k$ measures the average squared projection of the weighted
relative taste-deviation profiles $D^{1/2}\rho_i$ onto the direction
$v_k$. Large eigenvalues indicate that these profiles are concentrated along a
small number of common directions. Thus, when agents' taste deviations from the population average do not concentrate around only a few directions,  \textnormal{(A5)} is satisfied.

\subsection{Two Forms of Preference Diversification}
\label{subsection:examples}

This subsection presents two forms of preference diversification under which Assumption \textnormal{(A5)} holds. As a benchmark, consider the degenerate case in which all agents have the same equilibrium marginal expenditure-share profile, so that $\rho_i=0$ for every $i$. This includes Gorman-form aggregation. If each agent's demand for future commodities satisfies $\xi_i(p,w_i)=a_i(p)+b(p)w_i,$
with the wealth slope $b(p)$ common across agents, then the marginal expenditure-share profile is identical across agents, and \textnormal{(A5)} holds trivially. In our setting, common isoelastic utility,
$u_{in}(x)=\tau_n\frac{x^{1-1/\sigma}}{1-1/\sigma}$,
is a homothetic special case. In this benchmark, the common future-wealth income-effect component cancels at equilibrium by market clearing, so the large-commodity argument is not needed for establishing equilibrium stability.

We now provide two non-degenerate scenarios in which \textnormal{(A5)} holds when the number of commodities is large. The first scenario is one in which agents' preferences exhibit sparse or localized deviations as the number of commodities grows. This scenario approximates the degenerate benchmark, in the sense that agents' marginal expenditure-share profiles are similar because their preferences differ on only a limited set of commodities. To illustrate, suppose agents' utilities are given by the logarithmic benchmark $u_{in}(x)=\tau_{in}\log x$ studied in the previous subsection. In this case, by the Cauchy--Schwarz inequality, a convenient sufficient condition for \eqref{eqn: A5-log} is
$$
\sup_{i\in I_N}\frac{1}{N_\beta}\sum_{n=1}^N\beta^n
\left(\frac{\tau_{in}}{\bar\tau_n}-1\right)^2\to 0 \qquad \text{~as~} N_\beta\rightarrow \infty.
$$
This covers environments in which each agent's taste differs from the population average only on a $\beta$-weighted negligible set of commodities. Such a pattern may arise under sparse taste deviations or in Hotelling-type environments, where agents have relatively high taste weights on nearly disjoint blocks of commodities.

The second scenario is one in which agents' preferences exhibit dispersed heterogeneity as the number of commodities grows. This scenario is natural when the set of commodity types expands and agents' preference types become correspondingly more diverse. To illustrate, again suppose agents' utilities are given by the logarithmic benchmark $u_{in}(x)=\tau_{in}\log x$ studied in the previous subsection. Moreover, suppose agents' tastes are parametrized by $\tau_{in}=\frac{1}{N_\beta}(1+\delta\varepsilon_{in})$, where the deviations $\varepsilon_{in}$ are balanced across dates,
$\frac{1}{N_\beta}\sum_{n=1}^N \beta^n\varepsilon_{in}=0$
for each $i$, and have mean zero across agents at each date,
$\frac{1}{I_N}\sum_{i\in I_N} \varepsilon_{in}=0$
for every $n$. Then $\rho_{in}=\delta\varepsilon_{in}$, and \textnormal{(A5)} becomes
$$
\frac{1}{I_N^2}\sum_{i,j\in I_N}
\left|
\frac{1}{N_\beta}\sum_{n=1}^N\beta^n\varepsilon_{in}\varepsilon_{jn}
\right|\to 0 \qquad \text{~as~} N_\beta\rightarrow \infty.
$$
Thus, \textnormal{(A5)} holds when these balanced deviation profiles are sufficiently diffuse across agents.

\subsection{Necessity of Preference Diversification} 
\label{subsection:necessity}

Some form of diversification condition such as \textnormal{(A5)} is needed for our asymptotic comparison between substitution and income effects: without such a condition, the income effect can have the same order of magnitude as the substitution effect. To see this, consider a simple two-type logarithmic economy. Suppose endowments are identical across agents and dates, and half of the agents are of type $A$ while half are of type $B$. Let
$$
u_{An}(x)=(1+\delta\varepsilon_n)\log x,
\qquad
u_{Bn}(x)=(1-\delta\varepsilon_n)\log x,
$$
where $\delta\in(0,1)$, $\varepsilon_n=1$ on odd dates, and
$\varepsilon_n=-1$ on even dates. Since utility is logarithmic, the marginal
expenditure-share profiles are independent of equilibrium allocations. One can show that the two types have opposite odd-even distortions:
$$
\rho_{A n}=\delta\varepsilon_n+o(1),
\qquad
\rho_{B n}=-\delta\varepsilon_n+o(1).
$$
Therefore, the average weighted alignment in Assumption \textnormal{(A5)} remains
bounded away from zero, so the assumption is violated. In this case, the vectors $\rho_i$ are concentrated around only two directions.

Now consider a price distortion whose normalized component is this same
odd-even pattern, with $q_n=\varepsilon_n \bar p_n$. Indeed, since $\Lambda_i(q)=\sum_n m_{in}v_n$, we have
$$
\Lambda_A(q)
=
\frac{E^{N,\beta}+\delta N_\beta}{N_\beta+\delta E^{N,\beta}}
=
\delta+O(N_\beta^{-1}),
\qquad
\Lambda_B(q)
=
\frac{E^{N,\beta}-\delta N_\beta}{N_\beta-\delta E^{N,\beta}}
=
-\delta+O(N_\beta^{-1}),
$$
where $E^{N,\beta}=\sum_{n=1}^N\beta^n\varepsilon_n=O(1)$. Thus
$\Lambda_A(q)$ and $\Lambda_B(q)$ are both of order one and have opposite
signs. Moreover, type $A$ buys the commodities for which
$\varepsilon_n=1$ and sells those for which $\varepsilon_n=-1$, while type
$B$ does the reverse. Hence the net-trade term
$\sum_{n=1}^N(\omega_{in}-\bar x_{in})q_n$ is of order $N_\beta$ for each type. The product of this term with
$\Lambda_i(q)$ therefore remains of order $N_\beta$, and the contributions
of the two types have the same sign rather than canceling. Consequently, the
aggregate income term $M(q)$ is of order $I_NN_\beta$, the same order as the
substitution term.

\subsection{Comparison with Classical Stability Conditions}\label{subsec:comparisons}

We now illustrate, through an isoelastic example, that our stability conditions differ from three types of classical conditions. Consider an economy in which all agents have isoelastic utilities with common elasticity $\sigma>0$, $\sigma\neq1$:
$$
u_{i0}(x)=\frac{x^{1-1/\sigma}}{1-1/\sigma},\qquad
u_{in}(x)=(1+\delta\varepsilon_{in})^{1/\sigma}
\frac{x^{1-1/\sigma}}{1-1/\sigma},\quad n=1,\dots,N,
$$
where $\delta\in(0,1)$, the shock parameters $\varepsilon_{in}$ are uniformly bounded, and $1+\delta\varepsilon_{in}$ is uniformly bounded away from zero. Let the endowment be $\omega_{in}=\bar\omega+\eta_i s_n$ for $n=0,\dots,N,$
where $\bar\omega>0$, $0<s_n<\bar s<\bar\omega$, and $\eta_i\in\{-1,1\}$. Assume
$$
\frac{1}{N_\beta}\sum_{n=1}^N\beta^n\varepsilon_{in}=0,\qquad
\frac{1}{I_N}\sum_{i\in I_N}\varepsilon_{in}=0,\qquad
\frac{1}{I_N}\sum_{i\in I_N}\eta_i=0,\qquad
\frac{1}{I_N}\sum_{i\in I_N}\eta_i\varepsilon_{in}=0
$$
for all $i,n$. Thus, Assumptions \textnormal{(A1)}--\textnormal{(A4)} hold.

Suppose, in addition, that
$$
\lim_{N_\beta\to\infty} \frac{1}{I_N^2}\sum_{i,j\in I_N}
\left|
\frac{1}{N_\beta}\sum_{n=1}^N
\beta^n\varepsilon_{in}\varepsilon_{jn}
\right|= 0.
$$
This is the dispersed-heterogeneity condition from the previous subsection. \ref{appendix:isoelasticexample} shows that, for all sufficiently large $N_\beta$, every equilibrium $(\bar p,\bar x)$ satisfies
$$
\bar p_n=\beta^n,\qquad
\bar x_{i0}=\bar\omega+\eta_i s^{N,\beta},\qquad
\bar x_{in}=(\bar\omega+\eta_i s^{N,\beta})(1+\delta\varepsilon_{in}),
$$
where $s^{N,\beta}=\frac{s_0+\sum_{n=1}^N\beta^n s_n}{1+N_\beta}.$ Therefore,
$$
m_{in}(\bar x_i)
=
\frac{\beta^n}{N_\beta}(1+\delta\varepsilon_{in}),
\qquad
\bar m_n(\bar x)=\frac{\beta^n}{N_\beta},
\qquad
\bar\rho_{in}=\delta\varepsilon_{in}.
$$
Hence, Assumption \textnormal{(A5)} holds.

Now choose $\sigma<1/4$, and choose $(s_n)$ so that $s^{N,\beta}$ does not converge as $N_\beta\to\infty$. Then this example is outside the scope of the following three classical routes to stability.

First, it violates gross substitutability. For additively separable isoelastic utility, gross substitutability is satisfied only when the elasticity is at least one. Hence the example with $\sigma<1/4$ violates the gross-substitutes condition of \citet*{arrow1959stability}.

Second, agents do not behave as if the shadow value of money were asymptotically fixed, a feature implied by the permanent-income structure studied by \citet*{bewley1980permanent1}. In the example, the shadow value of money is
$\lambda_i=(\bar\omega+\eta_i s^{N,\beta})^{-1/\sigma}.$
If $s^{N,\beta}$ does not converge, then the shadow value of money for each agent need not converge as the number of commodities grows.

Third, it can violate the curvature restrictions used to obtain the law of demand in the tradition of \citet*{milleron1974unicite}, \citet*{polterovich1978criteria}, and \citet*{quah2000monotonicity}. For isoelastic utility,
$$
-\frac{x u_{in}''(x)}{u_{in}'(x)}=\frac{1}{\sigma}.
$$
Thus, when $\sigma<1/4$, relative curvature exceeds the canonical Mitjushin--Polterovich bound $4$.

\section{Concluding Remarks}\label{sec:conclude}

This paper studies how a large effective number of commodities can generate local t\^atonnement stability and, through the index theorem, equilibrium uniqueness in a dated-commodity exchange economy. The key mechanism is that the substitution effect accumulates naturally at the rate of the effective number of commodities, while the income effect is controlled to grow at a slower rate under a preference diversification assumption, which seems plausible under commodity differentiation.

One key technical difficulty tackled in this paper comes from the discounting in the utility formulation. This feature is necessary if the finite-horizon truncations are to approximate a countable-commodity economy with Mackey-continuous preferences, but it also implies that equilibrium prices decline along the horizon and that the corresponding substitution terms become small as the number of commodities grows. For this reason, it may be cleaner to develop the same idea on a compact space of commodity characteristics, where equilibrium prices can be kept uniformly bounded. Nevertheless, that approach would require working with differential equations with a continuum of variables, and would therefore lead to a rather different set of issues.

\appendix
\renewcommand{\theequation}{A.\arabic{equation}}
\renewcommand{\thelemma}{A.\arabic{lemma}} 
\setcounter{equation}{0}
\setcounter{lemma}{0}

\section{Omitted Proofs}

\subsection{Marginal Expenditure Shares} \label{appendix:marginal-spending-profile}

We first state a regularity result. It allows us to treat each agent's demand and Lagrange multiplier as continuously differentiable functions of prices and wealth near the optimizer.

\begin{lemma}\label{lem:local-differentiability}
	Suppose Assumption \textnormal{(A1)} holds and $\bar{x}_i$ is the optimal consumption of agent $i$ under price $\bar{p}$. Define the corresponding wealth as $\bar w_i:=\omega_{i0}+\sum_{n=1}^N \bar p_n\omega_{in}.$ Then, there exist an open neighborhood $P_i\subset\mathbb R_{++}^N$ of $\bar p$, an open neighborhood $W_i\subset\mathbb R_{++}$ of $\bar w_i$, and continuously differentiable functions
	$\xi_{i}:P_i\times W_i\to\mathbb R_{++}^{N+1}$ and $\lambda_{i}:P_i\times W_i\to\mathbb R_{++}$ such that, for every $(p,w_i)\in P_i\times W_i$, the pair $(\xi_{i}(p,w_i),\lambda_i(p,w_i))$ satisfies
	\begin{equation}\label{eq:FOC-good0}
		u_{i0}'(\xi_{i0}(p,w_i))=\lambda_{i}(p,w_i),
	\end{equation}
	\begin{equation}\label{eq:FOC-goodn}
		\beta^n u_{in}'(\xi_{in}(p,w_i))=\lambda_{i}(p,w_i)p_n,
		\qquad n=1,\dots,N,
	\end{equation}
	and
	\begin{equation}\label{eq:budget-local}
		\xi_{i0}(p,w_i)+\sum_{n=1}^N p_n\xi_{in}(p,w_i)=w_i.
	\end{equation}
	In particular, the function $\xi_i(p,w_i)=(\xi_{in}(p,w_i))_{n=0}^N$ coincides with agent $i$'s Marshallian demand $\xi_i(p)$, provided $w_i=\omega_{i0}+\sum_{n=1}^N p_n\omega_{in}$.
\end{lemma}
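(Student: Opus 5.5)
The plan is to apply the implicit function theorem to the first-order system of agent $i$'s problem, viewed as equations in the unknowns $(x,\lambda)\in\mathbb R^{N+1}_{++}\times\mathbb R_{++}$ with parameters $(p,w_i)$. Define
$$F(x,\lambda;p,w_i)=\Bigl(u_{i0}'(x_0)-\lambda,\ \bigl(\beta^n u_{in}'(x_n)-\lambda p_n\bigr)_{n=1}^N,\ x_0+\textstyle\sum_{n=1}^N p_nx_n-w_i\Bigr).$$
By (A1), $F$ is continuously differentiable on the open set where $x\gg0$, $\lambda>0$, $p\gg0$.

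The first step is to check that the base point satisfies $F=0$ with everything interior. Because of the Inada condition in (A1), and since $\bar w_i>0$ (the endowment is nonzero, as otherwise the budget would force $x=0$), the optimizer $\bar x_i$ must be strictly positive. Indeed, if some $\bar x_{in}=0$, moving a small amount of wealth into good $n$ would give infinite marginal gain, which contradicts optimality. Strict monotonicity forces the budget constraint to bind. The KKT conditions then hold with a multiplier $\bar\lambda_i=u_{i0}'(\bar x_{i0})>0$, which gives exactly \eqref{eq:FOC-good0}--\eqref{eq:budget-local} at $(\bar p,\bar w_i)$.

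The second step, which is the main point, is to show that the Jacobian of $F$ in $(x,\lambda)$ is invertible at the base point. This Jacobian is the bordered matrix
$$J=\begin{pmatrix} H & -\hat p\\ \hat p^T & 0\end{pmatrix},\qquad H=\operatorname{diag}\bigl(u_{i0}''(\bar x_{i0}),\beta u_{i1}''(\bar x_{i1}),\dots,\beta^N u_{iN}''(\bar x_{iN})\bigr),\quad \hat p=(1,\bar p_1,\dots,\bar p_N).$$
Here $H$ is negative definite because of strict concavity in (A1). Suppose $J(y,\mu)=0$. Then $Hy=\mu\hat p$ and $\hat p^Ty=0$. Multiplying the first equation by $y^T$ gives $y^THy=\mu\,\hat p^Ty=0$, so $y=0$, and then $\mu\hat p=0$ gives $\mu=0$. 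Hence $J$ is nonsingular. Alternatively, one can compute $\det J=\det H\cdot \hat p^T(-H)^{-1}\hat p\ne0$.

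The third step invokes the implicit function theorem. It produces neighborhoods $P_i$ of $\bar p$ and $W_i$ of $\bar w_i$, together with $C^1$ maps $\xi_i(p,w_i)$ and $\lambda_i(p,w_i)$ that solve $F=0$. By shrinking the neighborhoods and using continuity, these maps stay in $\mathbb R^{N+1}_{++}\times\mathbb R_{++}$.

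The last step is to identify the local solution with Marshallian demand. For each $(p,w_i)$ the maximization problem is strictly concave with a linear constraint, so any interior point satisfying the KKT conditions with $\lambda>0$ is the unique global maximizer. Therefore $\xi_i(p,w_i)$ is the demand at wealth $w_i$. Substituting $w_i=\omega_{i0}+\sum_n p_n\omega_{in}$, which lies in $W_i$ for $p$ close to $\bar p$ after shrinking $P_i$ if needed, gives $\xi_i(p)$.

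The only delicate point is the invertibility of the bordered Hessian, and it is handled by the quadratic-form argument in the second step.
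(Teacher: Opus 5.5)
Your proposal is correct and follows essentially the same route as the paper: the implicit function theorem applied to the first-order system, a trivial-kernel argument for the bordered Jacobian, and identification with Marshallian demand via strict concavity. Your step $y^THy=\mu\,\hat p^Ty=0$ is a compact version of the paper's explicit solve for $\zeta$ in terms of $\mu$, and you are more careful than the paper on two points: deriving interiority of $\bar x_i$ and $\bar\lambda_i>0$ from the Inada condition, and shrinking $P_i$ so that $w_i(p)\in W_i$. One trivial slip: the Schur-complement formula is $\det J=\det H\cdot\hat p^{T}H^{-1}\hat p$, which differs from yours by a sign, though nonsingularity is unaffected.
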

\begin{proof}
	Define the map $F_i:\mathbb{R}_{++}^{N+1}\times \mathbb{R}_{++}\times \mathbb{R}_{++}^N\times \mathbb{R}_{++}\to \mathbb{R}^{N+2}$ by
	$$
	F_i(x_i,\lambda_i,p,w_i)
	=
	\begin{pmatrix}
		u_{i0}'(x_{i0})-\lambda_i\\
		\beta u_{i1}'(x_{i1})-\lambda_i p_1\\
		\vdots\\
		\beta^N u_{iN}'(x_{iN})-\lambda_i p_N\\
		x_{i0}+\sum_{n=1}^N p_n x_{in}-w_i
	\end{pmatrix}.
	$$
	By Assumption \textnormal{(A1)}, $u_{in}$ is twice continuously differentiable on $\mathbb R_{++}$, so $F_i$ is a $C^1$ map.
	
	Let $\bar x_i=(\bar x_{in})_{n=0}^N$ be agent $i$'s optimal consumption at $\bar p$, let $\bar w_i=\omega_{i0}+\sum_{n=1}^N \bar p_n\omega_{in}$, and let $\bar\lambda_i>0$ be the corresponding Lagrange multiplier. By the first-order conditions for agent $i$'s utility maximization problem and the budget identity, we have
	$$
	F_i(\bar x_i,\bar\lambda_i,\bar p,\bar w_i)=0.
	$$
	
	We next show that the Jacobian of $F_i$ with respect to $(x_i,\lambda_i)$ is invertible at $(\bar x_i,\bar\lambda_i,\bar p,\bar w_i)$. This Jacobian is
	$$
	D_{(x_i,\lambda_i)}F_i(\bar x_i,\bar\lambda_i,\bar p,\bar w_i)
	=
	\begin{pmatrix}
		u_{i0}''(\bar x_{i0}) & 0 & \cdots & 0 & -1\\
		0 & \beta u_{i1}''(\bar x_{i1}) & \cdots & 0 & -\bar p_1\\
		\vdots & \vdots & \ddots & \vdots & \vdots\\
		0 & 0 & \cdots & \beta^N u_{iN}''(\bar x_{iN}) & -\bar p_N\\
		1 & \bar p_1 & \cdots & \bar p_N & 0
	\end{pmatrix}.
	$$
	Suppose $(\zeta_i,\mu)\in \mathbb R^{N+1}\times \mathbb R$ belongs to its kernel. Then
	$$
	u_{i0}''(\bar x_{i0})\zeta_{i0}-\mu=0,\qquad
	\beta^n u_{in}''(\bar x_{in})\zeta_{in}-\bar p_n\mu=0 \ \text{for } n=1,\dots,N,
	$$
	and
	$$
	\zeta_{i0}+\sum_{n=1}^N \bar p_n\zeta_{in}=0.
	$$
	Hence
	$$
	\zeta_{i0}=\frac{\mu}{u_{i0}''(\bar x_{i0})},
	\qquad
	\zeta_{in}=\frac{\bar p_n\mu}{\beta^n u_{in}''(\bar x_{in})}\quad \text{for } n=1,\dots,N.
	$$
	Substituting these expressions into the budget equation gives
	$$
	\left(
	\frac{1}{u_{i0}''(\bar x_{i0})}
	+
	\sum_{n=1}^N \frac{\bar p_n^2}{\beta^n u_{in}''(\bar x_{in})}
	\right)\mu=0.
	$$
	Since $u_{in}''<0$ on $\mathbb R_{++}$ for every $n$ and $\bar p_n>0$, the coefficient in parentheses is strictly negative. It follows that $\mu=0$, and then $\zeta_{in}=0$ for every $n=0,1,\dots,N$. Thus the kernel is trivial, so $D_{(x_i,\lambda_i)}F_i(\bar x_i,\bar\lambda_i,\bar p,\bar w_i)$ is invertible.
	
	The implicit function theorem therefore yields open neighborhoods $P_i\subset \mathbb R_{++}^N$ of $\bar p$, $W_i\subset \mathbb R_{++}$ of $\bar w_i$, and continuously differentiable functions
	$\xi_i: P_i\times  W_i\to \mathbb R_{++}^{N+1},$ and 
	$\lambda_i: P_i\times  W_i\to \mathbb R_{++},$
	such that $(\xi_i(\bar p,\bar w_i),\lambda_i(\bar p,\bar w_i))=(\bar x_i,\bar\lambda_i)$ and
	$$
	F_i(\xi_i(p,w_i),\lambda_i(p,w_i),p,w_i)=0
	$$
	for every $(p,w_i)\in  P_i\times W_i$. Writing $\xi_i(p,w_i)=(\xi_{in}(p,w_i))_{n=0}^N$, this means that, for every $(p,w_i)\in  P_i\times  W_i$, the pair $(\xi_i(p,w_i),\lambda_i(p,w_i))$ satisfies \eqref{eq:FOC-good0}--\eqref{eq:budget-local}.
	
	Finally, if $w_i=\omega_{i0}+\sum_{n=1}^N p_n\omega_{in}$, then $(\xi_i(p,w_i),\lambda_i(p,w_i))$ satisfies the first-order conditions and budget equation for agent $i$'s utility maximization problem at prices $p$ and wealth $w_i$. By strict concavity of preferences under Assumption \textnormal{(A1)}, this maximization problem has a unique solution. Hence $\xi_i(p,w_i)$ coincides with agent $i$'s Marshallian demand. This proves the lemma.
\end{proof}

Now, we explain why we interpret $m_{in}(x_i)$ as a marginal expenditure share.

\begin{lemma}\label{lem:marginal-spending-profile}
	Suppose \textnormal{(A1)} holds. Fix $N\ge 1$, future prices $p=(p_1,\dots,p_N)\gg 0$, and future wealth $w_i>0$. Consider agent $i$'s future utility-maximization problem
	$$
	\max_{(z_1,\dots,z_N)\in \mathbb R_+^N}
	\sum_{n=1}^N \beta^n u_{in}(z_n)
	\qquad\text{subject to}\qquad
	\sum_{n=1}^N p_n z_n\le w_i.
	$$
	Suppose $x_i=(x_{i1},\dots,x_{iN})$ is an interior solution. Then, for every $n=1,\dots,N$,
	\begin{equation*}
		m_{in}(x_i)=p_n\frac{\partial x_{in}}{\partial w_i}.
	\end{equation*}
\end{lemma}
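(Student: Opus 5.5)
The plan is to differentiate the first-order conditions of the future subproblem with respect to $w_i$ and read off the Engel slopes. This is the same implicit-function argument as Lemma \ref{lem:local-differentiability}, applied to the future-only problem. By (A1) and interiority, the solution $x_i$ together with a multiplier $\mu_i>0$ satisfies
\[
\beta^n u_{in}'(x_{in})=\mu_i p_n \quad (n=1,\dots,N), \qquad \sum_{n=1}^N p_n x_{in}=w_i .
\]
The Jacobian of this system in $(x_i,\mu_i)$ is a bordered diagonal matrix with strictly negative diagonal entries $\beta^n u_{in}''(x_{in})$. Exactly as in the proof of Lemma \ref{lem:local-differentiability}, its kernel is trivial. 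The implicit function theorem then makes $x_i(p,w_i)$ and $\mu_i(p,w_i)$ continuously differentiable near $(p,w_i)$. Strict concavity ensures they coincide with the unique maximizer.

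Next, I differentiate each first-order condition in $w_i$:
\[
\beta^n u_{in}''(x_{in})\,\frac{\partial x_{in}}{\partial w_i}=p_n\frac{\partial \mu_i}{\partial w_i}.
\]
Dividing by $\beta^n u_{in}'(x_{in})=\mu_i p_n$ and using $r_{in}=u_{in}'/(-u_{in}'')$ gives
\[
\frac{\partial x_{in}}{\partial w_i}=-\frac{r_{in}(x_{in})}{\mu_i}\frac{\partial \mu_i}{\partial w_i}=:\kappa_i\, r_{in}(x_{in}),
\]
where $\kappa_i$ is a common factor across $n$. Multiplying by $p_n$ and substituting $p_n=\beta^n u_{in}'(x_{in})/\mu_i$ yields
\[
p_n\frac{\partial x_{in}}{\partial w_i}=\frac{\kappa_i}{\mu_i}\,\beta^n u_{in}'(x_{in})\,r_{in}(x_{in}).
\]

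Finally, I pin down the common factor by differentiating the budget identity, which gives $\sum_n p_n\,\partial x_{in}/\partial w_i=1$. Hence
\[
\frac{\kappa_i}{\mu_i}=\Bigl(\sum_{m=1}^N \beta^m u_{im}'(x_{im})r_{im}(x_{im})\Bigr)^{-1}.
\]
This is well defined, since each summand is positive by (A1). Substituting back gives exactly $m_{in}(x_i)$.

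There is no real obstacle here. The only step needing care is justifying differentiability of demand in $w_i$, which the implicit function argument above handles.
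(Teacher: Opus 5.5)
Your proposal is correct and follows essentially the same route as the paper: you differentiate the first-order conditions in $w_i$, write each Engel slope as a common factor times $r_{in}(x_{in})$, and pin down that factor from the differentiated budget identity. The only difference is minor and in your favor. You rerun the implicit-function argument directly for the future-only problem, whereas the paper cites Lemma~\ref{lem:local-differentiability}, which is stated for the full problem including date $0$.
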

This lemma shows that $m_{in}(x_i)$ is the marginal expenditure share on date-$n$ commodity out of an additional unit of future wealth, provided $x_i$ is an optimal consumption bundle. Moreover, if $(\bar p,\bar x)$ is an equilibrium of $\mathcal E^{N,\beta}$, then $\bar m_{in}=m_{in}(\bar x_i)$ is its equilibrium counterpart.
\begin{proof}
	Let $\lambda_i$ be the multiplier on the budget constraint. Since $x_i$ is an interior optimum, the first-order conditions are
	$$
	\beta^n u_{in}'(x_{in})=\lambda_i p_n,\qquad n=1,\dots,N.
	$$
	Differentiating these first-order conditions with respect to $w_i$ yields 
	$$
	\beta^n u_{in}''(x_{in})\frac{\partial x_{in}}{\partial w_i}
	=
	p_n\frac{\partial \lambda_i}{\partial w_i},\qquad n=1,\dots,N.
	$$
	Here, the relevant differentiabilities are verified in Lemma \ref{lem:local-differentiability}. Using the definition of risk tolerance, we obtain
	$$
	p_n\frac{\partial x_{in}}{\partial w_i}
	=
	-p_n r_{in}(x_{in})\frac{\partial \lambda_i/\partial w_i}{\lambda_i},\qquad n=1,\dots,N.
	$$
	
	Since the budget binds at an interior optimum, differentiating $\sum_{n=1}^N p_n x_{in}=w_i$ with respect to $w_i$ gives
	$$
	\sum_{n=1}^N p_n\frac{\partial x_{in}}{\partial w_i}=1.
	$$
	Substituting the previous expression into this identity, we obtain
	$$
	-\frac{\partial \lambda_i/\partial w_i}{\lambda_i}
	=
	\frac{1}{\sum_{m=1}^N p_m r_{im}(x_{im})}
	=
	\frac{1}{r_i^0(x_i)}.
	$$
	Hence, using the first-order conditions to substitute $p_m=\beta^m u_{im}'(x_{im})/\lambda_i$ for each $m$, we have
	$$
	p_n\frac{\partial x_{in}}{\partial w_i}
	=
	\frac{p_n r_{in}(x_{in})}{r_i^0(x_i)}
	= 	\frac{\beta^n u_{in}'(x_{in})\,r_{in}(x_{in})}
	{\sum_{m=1}^N \beta^m u_{im}'(x_{im})\,r_{im}(x_{im})}=
	{m}_{in}(x_i),
	$$
	which proves the lemma.
\end{proof}

\subsection{Proof of Lemma \ref{lem:equilibrium-interior}} \label{proof:interiorequilibrium}

Let $(\bar p,(\bar x_i)_{i\in I_N})$ be an equilibrium of $\mathcal E^{N,\beta}$, and  suppose instead that $\bar p_n=0$ for some $n\in\{1,\dots,N\}$. Fix any agent $i\in I_N$. Since $\bar x_i$ is optimal for agent $i$ at prices $\bar p$, it must maximize $U_i(x_i)=\sum_{m=0}^N \beta^m u_{in}(x_{im})$ subject to the budget constraint $x_{i0}+\sum_{m=1}^N \bar p_m x_{im}\le \omega_{i0}+\sum_{m=1}^N \bar p_m\omega_{im}$. But because $\bar p_n=0$, for every $t>0$ the bundle $\bar x_i+t e_n$ remains affordable, where $e_n$ is the unit vector for commodity $n$. Moreover, by Assumption \textnormal{(A1)}, $u_{in}$ is strictly increasing on $\mathbb R_+$. Hence $U_i(\bar x_i+t e_n)>U_i(\bar x_i)$ for $t>0$. This contradicts the optimality of $\bar x_i$. 

\subsection{Proof of Lemma \ref{lem:substitution-income-decomposition}}\label{proof:sluskydecom}

When no confusion can arise, we use $\xi_i(p,w_i)$ for the local demand map given by Lemma \ref{lem:local-differentiability}, and we continue to write $\xi_i(p)$ for Marshallian demand in the exchange economy $\mathcal E^{N,\beta}$. We also distinguish the total derivative $d\xi_{in}/dp_m$ from the partial derivative $\partial \xi_{in}/\partial p_m$, where wealth is held fixed.

By Lemma \ref{lem:equilibrium-interior}, we know $\bar{p}\gg 0$. Fix $i\in I_N$. First, hold wealth fixed at $\bar w_i$ and vary only price $p_m$. Consider the path $p(t)=\bar p+t e_m$, where $e_m$ is the $m$-th standard basis vector of $\mathbb R^N$. Write
$$x_{in}(t)=\xi_{in}(p(t),\bar w_i),\qquad \lambda_i(t)=\lambda_i(p(t),\bar w_i).$$
By Lemma \ref{lem:local-differentiability}, these functions are continuously differentiable for $t$ near $0$. They satisfy
$$u_{i0}'(x_{i0}(t))=\lambda_i(t),\qquad
\beta^n u_{in}'(x_{in}(t))=\lambda_i(t)p_n(t)\quad \text{for } n=1,\dots,N,$$
together with the budget equation
$$x_{i0}(t)+\sum_{n=1}^N p_n(t)x_{in}(t)=\bar w_i.$$

Differentiating at $t=0$, we obtain
$$
\frac{d x_{i0}}{dt}(0)=\frac{1}{u_{i0}''(\bar x_{i0})}\frac{d\lambda_i}{dt}(0)
=-\frac{\bar r_{i0}}{\bar\lambda_i}\frac{d\lambda_i}{dt}(0).
$$
For each $n=1,\dots,N$, differentiating the $n$-th first-order condition gives
$$
\beta^n u_{in}''(\bar x_{in})\frac{d x_{in}}{dt}(0)
=
\bar p_n\frac{d\lambda_i}{dt}(0)+\bar\lambda_i \mathbf 1_{\{n=m\}}.
$$
Using $-\frac{\bar p_n}{\beta^n u_{in}''(\bar x_{in})}=\frac{\bar r_{in}}{\bar\lambda_i}$ and
$-\frac{\bar\lambda_i}{\beta^n u_{in}''(\bar x_{in})}=\frac{\bar r_{in}}{\bar p_n}$, this becomes
$$
\frac{d x_{in}}{dt}(0)
=
-\frac{\bar r_{in}}{\bar\lambda_i}\frac{d\lambda_i}{dt}(0)
-\frac{\bar r_{in}}{\bar p_n}\mathbf 1_{\{n=m\}}.
$$

Now differentiate the budget equation at $t=0$. Since $\frac{d p_n}{dt}(0)=\mathbf 1_{\{n=m\}}$, we get
$$
\frac{d x_{i0}}{dt}(0)+\sum_{n=1}^N \bar p_n\frac{d x_{in}}{dt}(0)+\bar x_{im}=0.
$$
Substituting the expressions above yields
$$
-\frac{\bar r_{i0}+\sum_{n=1}^N \bar p_n \bar r_{in}}{\bar\lambda_i}\frac{d\lambda_i}{dt}(0)-\bar r_{im}+\bar x_{im}=0.
$$
Using $\bar r_i=\bar r_{i0}+\sum_{n=1}^N \bar p_n \bar r_{in}$, we obtain
$$
\frac{\bar r_i}{\bar\lambda_i}\frac{d\lambda_i}{dt}(0)=\bar x_{im}-\bar r_{im},
\qquad\text{so}\qquad
\frac{d\lambda_i}{dt}(0)=\frac{\bar\lambda_i}{\bar r_i}(\bar x_{im}-\bar r_{im}).
$$
Substituting this back gives
\begin{equation}\label{eq:proof-fixed-wealth}
	\frac{\partial \xi_{in}}{\partial p_m}(\bar p,\bar w_i)
	=
	\frac{d x_{in}}{dt}(0)
	=
	\frac{\bar r_{in}\bar r_{im}}{\bar r_i}
	-\frac{\bar r_{in}}{\bar r_i}\bar x_{im}
	-\frac{\bar r_{in}}{\bar p_n}\mathbf 1_{\{n=m\}}.
\end{equation}

Next, hold prices fixed at $\bar p$ and vary wealth. Consider the path $w_i(t)=\bar w_i+t$, and write
$$
y_{in}(t)=\xi_{in}(\bar p,w_i(t)),\qquad \mu_i(t)=\lambda_i(\bar p,w_i(t)).
$$
Again by Lemma \ref{lem:local-differentiability}, these functions are continuously differentiable near $0$. They satisfy
$$
u_{i0}'(y_{i0}(t))=\mu_i(t),\qquad
\beta^n u_{in}'(y_{in}(t))=\mu_i(t)\bar p_n\quad \text{for } n=1,\dots,N,
$$
and
$$
y_{i0}(t)+\sum_{n=1}^N \bar p_n y_{in}(t)=\bar w_i+t.
$$

Differentiating at $t=0$, we have
$$
\frac{d y_{i0}}{dt}(0)=\frac{1}{u_{i0}''(\bar x_{i0})}\frac{d\mu_i}{dt}(0)
=-\frac{\bar r_{i0}}{\bar\lambda_i}\frac{d\mu_i}{dt}(0),
$$
and, for each $n=1,\dots,N$,
$$
\frac{d y_{in}}{dt}(0)
=
\frac{\bar p_n}{\beta^n u_{in}''(\bar x_{in})}\frac{d\mu_i}{dt}(0)
=
-\frac{\bar r_{in}}{\bar\lambda_i}\frac{d\mu_i}{dt}(0).
$$
Differentiating the budget equation then yields
$$
-\frac{\bar r_i}{\bar\lambda_i}\frac{d\mu_i}{dt}(0)=1.
$$
Thus
$$
\frac{d\mu_i}{dt}(0)=-\frac{\bar\lambda_i}{\bar r_i},
$$
and hence
\begin{equation}\label{eq:proof-fixed-price}
	\frac{\partial \xi_{in}}{\partial w}(\bar p,\bar w_i)
	=
	\frac{d y_{in}}{dt}(0)
	=
	\frac{\bar r_{in}}{\bar r_i}.
\end{equation}

Finally, since $w_i(p)=\omega_{i0}+\sum_{k=1}^N p_k\omega_{ik}$, the chain rule gives
$$
\frac{d \xi_{in}}{d p_m}(\bar p)
=
\frac{\partial \xi_{in}}{\partial p_m}(\bar p,\bar w_i)
+
\frac{\partial \xi_{in}}{\partial w}(\bar p,\bar w_i)\omega_{im}.
$$
Using \eqref{eq:proof-fixed-wealth} and \eqref{eq:proof-fixed-price}, we obtain
$$
\frac{d \xi_{in}}{d p_m}(\bar p)
=
\left[
\frac{\bar r_{in}\bar r_{im}}{\bar r_i}
-\frac{\bar r_{in}}{\bar p_n}\mathbf 1_{\{n=m\}}
\right]
+
\frac{\bar r_{in}}{\bar r_i}(\omega_{im}-\bar x_{im}),
$$
which proves \eqref{eq:exchange-slutsky}.

\subsection{Proof of Lemma \ref{lem:substitutionandincomeeffect}}\label{proof:substitutionincomeeffect}

We separately study the quadratic form generated by the individual substitution matrix $S_i$ and the individual income-effect matrix $M_i$, defined in \eqref{eq:exchange-slutsky}. First, we have
$$
q^T S_i q
=
\frac{1}{\bar r_i}\left(\sum_{n=1}^N \bar r_{in}q_n\right)^2
-
\sum_{n=1}^N \frac{\bar r_{in}}{\bar p_n}q_n^2
=
\frac{(\bar r_i^0)^2}{\bar r_i}\Lambda_i(q)^2
-
\sum_{n=1}^N \frac{\bar r_{in}}{\bar p_n}q_n^2.
$$
The first equality is by definition, and the second uses $\sum_{n=1}^N \bar r_{in}q_n = \bar r_i^0\Lambda_i(q)$.
Next,
$$
\sum_{n=1}^N \frac{\bar r_{in}}{\bar p_n}\bigl(q_n-\bar p_n\Lambda_i(q)\bigr)^2
=
\sum_{n=1}^N \frac{\bar r_{in}}{\bar p_n}q_n^2
-
2\Lambda_i(q)\sum_{n=1}^N \bar r_{in}q_n
+
\Lambda_i(q)^2\sum_{n=1}^N \bar p_n\bar r_{in}.
$$
Using again $\sum_{n=1}^N \bar r_{in}q_n = \bar r_i^0\Lambda_i(q)$ and $\sum_{n=1}^N \bar p_n\bar r_{in}=\bar r_i^0$, we get
$$
\sum_{n=1}^N \frac{\bar r_{in}}{\bar p_n}\bigl(q_n-\bar p_n\Lambda_i(q)\bigr)^2
=
\sum_{n=1}^N \frac{\bar r_{in}}{\bar p_n}q_n^2
-
\bar r_i^0\Lambda_i(q)^2.
$$
Substituting this into the previous expression yields
$$
q^T S_i q
=
-\sum_{n=1}^N \frac{\bar r_{in}}{\bar p_n}\bigl(q_n-\bar p_n\Lambda_i(q)\bigr)^2
+
\left(\frac{(\bar r_i^0)^2}{\bar r_i}-\bar r_i^0\right)\Lambda_i(q)^2.
$$
Finally, since $\bar r_i=\bar r_{i0}+\bar r_i^0$,
$$
\frac{(\bar r_i^0)^2}{\bar r_i}-\bar r_i^0
=
\bar r_i^0\left(\frac{\bar r_i^0-\bar r_i}{\bar r_i}\right)
=
-\frac{\bar r_{i0}\bar r_i^0}{\bar r_i}.
$$
Therefore,
$$
q^T S_i q
=
-\sum_{n=1}^N \frac{\bar r_{in}}{\bar p_n}\bigl(q_n-\bar p_n\Lambda_i(q)\bigr)^2
-
\frac{\bar r_{i0}\bar r_i^0}{\bar r_i}\Lambda_i(q)^2.
$$

For the income-effect matrix, we have
$$
q^T M_i q
=
\sum_{n=1}^N \sum_{m=1}^N
\frac{\bar r_{in}q_n}{\bar r_i}(\omega_{im}-\bar x_{im})q_m
=
\frac{1}{\bar r_i}
\left(\sum_{n=1}^N \bar r_{in}q_n\right)
\left(\sum_{m=1}^N (\omega_{im}-\bar x_{im})q_m\right).
$$
Using $\sum_{n=1}^N \bar r_{in}q_n=\bar r_i^0\Lambda_i(q)$, we obtain
$$
q^T M_i q
=
\frac{\bar r_i^0}{\bar r_i}\Lambda_i(q)\sum_{m=1}^N (\omega_{im}-\bar x_{im})q_m.
$$
This proves the lemma.

\subsection{Proof of Proposition \ref{prop:quadratic-decomposition}} \label{proof:quadratic-decomposition}

For each agent $i\in I_N$, denote $J_i=S_i+M_i$. By Lemma \ref{lem:substitutionandincomeeffect}, 
$$
q^T J_i q
=
-\sum_{n=1}^N \frac{\bar r_{in}}{\bar p_n}\bigl(q_n-\bar p_n\Lambda_i(q)\bigr)^2
-
\frac{\bar r_{i0}\bar r_i^0}{\bar r_i}\Lambda_i(q)^2
+
\frac{\bar r_i^0}{\bar r_i}\Lambda_i(q)\sum_{m=1}^N (\omega_{im}-\bar x_{im})q_m.
$$
Write $q=\alpha \bar p+u$. Since $\Lambda_i$ is linear and $\Lambda_i(\bar p)=1$, we have
$\Lambda_i(q)=\Lambda_i(\alpha \bar p+u)=\alpha+\Lambda_i(u).$ Also,
$$
q_n-\bar p_n\Lambda_i(q)
=
\alpha \bar p_n+u_n-\bar p_n\bigl(\alpha+\Lambda_i(u)\bigr)
=
u_n-\bar p_n\Lambda_i(u).
$$
Hence, the first term becomes
$$
-\sum_{n=1}^N \frac{\bar r_{in}}{\bar p_n}\bigl(q_n-\bar p_n\Lambda_i(q)\bigr)^2
=
-\sum_{n=1}^N \frac{\bar r_{in}}{\bar p_n}\bigl(u_n-\bar p_n\Lambda_i(u)\bigr)^2.
$$
For the income-effect term,
$$
\sum_{m=1}^N (\omega_{im}-\bar x_{im})q_m
=
\sum_{m=1}^N (\omega_{im}-\bar x_{im})(\alpha \bar p_m+u_m)
=
\alpha \sum_{m=1}^N \bar p_m(\omega_{im}-\bar x_{im})
+
\sum_{m=1}^N (\omega_{im}-\bar x_{im})u_m.
$$
By agent $i$'s budget equality at equilibrium,
$\sum_{m=1}^N \bar p_m(\omega_{im}-\bar x_{im})
=
\bar x_{i0}-\omega_{i0}.$ Therefore,
$$
\sum_{m=1}^N (\omega_{im}-\bar x_{im})q_m
=
\alpha(\bar x_{i0}-\omega_{i0})
+
\sum_{m=1}^N (\omega_{im}-\bar x_{im})u_m.
$$
Substituting these identities into the expression for $q^T J_i q$, we obtain
\begin{align*}
	q^T J_i q
	={}&
	-\sum_{n=1}^N \frac{\bar r_{in}}{\bar p_n}\bigl(u_n-\bar p_n\Lambda_i(u)\bigr)^2
	-
	\frac{\bar r_{i0}\bar r_i^0}{\bar r_i}\bigl(\alpha+\Lambda_i(u)\bigr)^2 \\
	&\quad
	+
	\frac{\bar r_i^0}{\bar r_i}\bigl(\alpha+\Lambda_i(u)\bigr)
	\left[
	\alpha(\bar x_{i0}-\omega_{i0})
	+
	\sum_{m=1}^N (\omega_{im}-\bar x_{im})u_m
	\right].
\end{align*}
Expanding the last two terms gives
\begin{align*}
	q^T J_i q
	={}&
	-\sum_{n=1}^N \frac{\bar r_{in}}{\bar p_n}\bigl(u_n-\bar p_n\Lambda_i(u)\bigr)^2
	-
	\frac{\bar r_i^0}{\bar r_i}\bigl(\bar r_{i0}+\omega_{i0}-\bar x_{i0}\bigr)\alpha^2 \\
	&
	+
	\alpha\,\frac{\bar r_i^0}{\bar r_i}
	\left[
	\sum_{m=1}^N (\omega_{im}-\bar x_{im})u_m
	-
	\bigl(2\bar r_{i0}+\omega_{i0}-\bar x_{i0}\bigr)\Lambda_i(u)
	\right] \\
	&
	+
	\frac{\bar r_i^0}{\bar r_i}\Lambda_i(u)
	\left[
	\sum_{m=1}^N (\omega_{im}-\bar x_{im})u_m-\bar r_{i0}\Lambda_i(u)
	\right].
\end{align*}

Aggregating over $i\in I_N$, the first term becomes $-S(u)$, the second becomes $-A\alpha^2$, and the fourth becomes $M(u)$. The third term can be simplified: because
$u\in\ker \Psi$, 
$$\sum_{i\in I_N}\frac{\bar r_i^0}{\bar r_i}
\bigl(2\bar r_{i0}+\omega_{i0}-\bar x_{i0}\bigr)\Lambda_i(u)
=
\Psi(u)
=
0.$$
Therefore,  after aggregation over $i\in I_N$, the third term becomes,
$$
\alpha\sum_{i\in I_N}\frac{\bar r_i^0}{\bar r_i}\sum_{m=1}^N (\omega_{im}-\bar x_{im})u_m
=
\alpha\sum_{i\in I_N}\sum_{m=1}^N (\omega_{im}-\bar x_{im})u_m
-
\alpha\sum_{i\in I_N}\frac{\bar r_{i0}}{\bar r_i}\sum_{m=1}^N (\omega_{im}-\bar x_{im})u_m,
$$
where the equality here uses $\bar r_i^0=\bar r_i-\bar r_{i0}$.
By market clearing, $\sum_{i\in I_N}(\omega_{im}-\bar x_{im})=0$ for every $m$, so the first term vanishes. Therefore,
$$
\alpha\sum_{i\in I_N}\frac{\bar r_i^0}{\bar r_i}\sum_{m=1}^N (\omega_{im}-\bar x_{im})u_m
=
\alpha\sum_{i\in I_N}\frac{\bar r_{i0}}{\bar r_i}\sum_{m=1}^N (\bar x_{im}-\omega_{im})u_m
=
\alpha R(u).
$$
This proves
$$
q^T Dz(\bar{p}) q=-A\alpha^2+R(u)\alpha-S(u)+M(u).
$$

It remains to prove the last claim. Evaluating $\Psi$ at $\bar p$ and using $\Lambda_i(\bar p)=1$, we get
$$
\Psi(\bar p)
=
\sum_{i\in I_N}\frac{\bar r_i^0}{\bar r_i}\bigl(2\bar r_{i0}+\omega_{i0}-\bar x_{i0}\bigr)
=
A+\sum_{i\in I_N}\frac{\bar r_i^0}{\bar r_i}\bar r_{i0}.
$$
If $A>0$, then the right-hand side is strictly positive, because $\bar r_i^0>0$, $\bar r_i>0$, and $\bar r_{i0}>0$ for every $i\in I_N$. Thus $\Psi(\bar p)>0$, so $\bar p\notin\ker\Psi$. Since $\Psi$ is a linear functional, $\ker\Psi$ is a hyperplane, and it follows that
$\mathbb R^N=\operatorname{span}\{\bar p\}\oplus \ker\Psi.$ Hence, every $q\in\mathbb R^N$ admits a unique decomposition $q=\alpha\bar p+u$ with $\Psi(u)=0$.

\subsection{Proof of Proposition \ref{prop:equilibrium-bounds}}\label{proof:equilibrium-bounds}

Throughout this proof, we fix $\beta\in(0,1)$. Set $q_n=\bar{p}_n/\beta^n$ for each $n=1,\dots,N$ and $q_0=1$. Denote the equilibrium wealth of agent $i$ by $W_i=\omega_{i0}+\sum_{n=1}^N \bar{p}_n\omega_{in}.$ Moreover, define the shadow value of money of agent $i$ by $\lambda_i=u_{i0}'(\bar{x}_{i0}).$ The first-order conditions imply $u_{in}'(\bar{x}_{in})=\lambda_iq_n$ for each $i\in I_N$ and each $n=0,\dots,N$.

First, we argue that $u_{in}'(x)$ is uniformly bounded by two functions as a consequence of Assumption \textnormal{(A2)}. Since multiplying agent $i$'s utility by a positive constant does not change her demand, we normalize $u_{i0}'(1)=1$ for every $i$. Then Assumption \textnormal{(A2)} implies $u_{in}'(1)\in[c_u,C_u]$ for every $i$ and $n$. Note that $-r_{in}(x)^{-1}=\frac{u_{in}''(x)}{u_{in}'(x)}=\frac{d\log u_{in}'(x)}{dx}$, by Assumption \textnormal{(A2)}, we have
$$-\frac{1}{c_ux}\le \frac{d\log u_{in}'(x)}{dx}\le -\frac{1}{C_ux}.$$
For $x>1$, integrating from $1$ to $x$, we have $\frac{-\log x}{c_u}\le \log u_{in}'(x)- \log u_{in}'(1)\le \frac{-\log x}{C_u}$; similarly, for $x<1$, integrating from $1$ to $x$, we have $\frac{-\log x}{c_u}\ge \log u_{in}'(x)- \log u_{in}'(1)\ge  \frac{-\log x}{C_u}$. Therefore, using $u_{in}'(1)\in [c_u, C_u]$, we have
$$L(x)\le u_{in}'(x)\le U(x)$$
for all $x>0$, where $L(x)=c_u\min\{x^{-1/c_u},x^{-1/C_u}\}$ and $U(x)=C_u\max\{x^{-1/c_u},x^{-1/C_u}\}$. One observation we will repeatedly use is that $L$ and $U$ are decreasing, so $u_{in}'(x)\ge U(a)$ implies $x\le a$ and $u_{in}'(x)\le L(a)$ implies $x\ge a$.

Now, we show that $q_n$ is uniformly bounded by positive numbers. Fix $\gamma\in(0,1)$ such that $\gamma C_\Omega<c_\Omega$. By the curvature bound in Assumption \textnormal{(A2)},
$$
\log \frac{u_{i0}'(\gamma x)}{u_{i0}'(x)}
=
\int_{\gamma x}^{x} -\frac{u_{i0}''(t)}{u_{i0}'(t)}dt
\le
\int_{\gamma x}^{x}\frac{1}{c_ut}dt
=
\frac{1}{c_u}\log\frac{1}{\gamma}.
$$
Therefore, $\frac{u_{i0}'(\gamma x)}{u_{i0}'(x)}\le \gamma^{-1/c_u}$ for every $i$ and every $x>0$. If $\bar x_{in}\ge \gamma \bar x_{i0}$, then, by the first-order conditions and the marginal-utility comparability in Assumption \textnormal{(A2)},
$$
q_n
=
\frac{u_{in}'(\bar x_{in})}{u_{i0}'(\bar x_{i0})}
\le
C_u
\frac{u_{i0}'(\bar x_{in})}{u_{i0}'(\bar x_{i0})}
\le
C_u
\frac{u_{i0}'(\gamma \bar x_{i0})}{u_{i0}'(\bar x_{i0})}
\le
C_u\gamma^{-1/c_u}.
$$
Therefore, if $q_n>C_u\gamma^{-1/c_u}$, then $\bar x_{in}<\gamma\bar x_{i0}$ for every $i\in I_N$. Summing over agents and using market clearing,
$$
\Omega_n^N
=
\sum_{i\in I_N}\bar x_{in}
<
\gamma\sum_{i\in I_N}\bar x_{i0}
=
\gamma\Omega_0^N
\le
\gamma C_\Omega I_N
<
c_\Omega I_N,
$$
which contradicts Assumption \textnormal{(A3)}. Hence $q_n\le C_p$ for some $C_p<\infty$.

For the lower bound, choose $\Gamma>1$ such that $\Gamma c_\Omega>C_\Omega$. Again by the curvature bound in Assumption \textnormal{(A2)}, we have $\frac{u_{i0}'(\Gamma x)}{u_{i0}'(x)}\ge \Gamma^{-1/c_u}$ for every $i$ and every $x>0$. If $\bar x_{in}\le \Gamma\bar x_{i0}$, then, by the first-order conditions and the marginal-utility comparability in Assumption \textnormal{(A2)},
$$
q_n
=
\frac{u_{in}'(\bar x_{in})}{u_{i0}'(\bar x_{i0})}
\ge
c_u
\frac{u_{i0}'(\bar x_{in})}{u_{i0}'(\bar x_{i0})}
\ge
c_u
\frac{u_{i0}'(\Gamma\bar x_{i0})}{u_{i0}'(\bar x_{i0})}
\ge
c_u\Gamma^{-1/c_u}.
$$
Therefore, if $q_n<c_u\Gamma^{-1/c_u}$, then $\bar x_{in}>\Gamma\bar x_{i0}$ for every $i\in I_N$. Summing over agents and using market clearing,
$$
\Omega_n^N
=
\sum_{i\in I_N}\bar x_{in}
>
\Gamma\sum_{i\in I_N}\bar x_{i0}
=
\Gamma\Omega_0^N
\ge
\Gamma c_\Omega I_N
>
C_\Omega I_N,
$$
which again contradicts Assumption \textnormal{(A3)}. Thus, for some constants $0<c_p<C_p<\infty$,
$c_p\le q_n=\bar p_n/\beta^n\le C_p$ holds for every $n=1,\dots,N$.

Next, we show that the shadow values of money of all agents are uniformly bounded whenever $N_\beta$ is sufficiently large. By Assumption \textnormal{(A4)} and the price bounds just established, individual wealth is of order $N_\beta$:
$$
c_pc_WN_\beta \le W_i\le C_W+C_pC_WN_\beta.
$$
Indeed, the lower bound follows from $W_i\ge \sum_{n=1}^N \bar p_n\omega_{in}\ge c_p\sum_{n=1}^N\beta^n\omega_{in}\ge c_pc_WN_\beta$, and the upper bound follows from $W_i\le C_W+C_p\sum_{n=1}^N\beta^n C_W=C_W+C_pC_WN_\beta$.

Choose $\delta>0$ and a cutoff $N_0\in\mathbb N$ so that $\delta(1+C_pN_\beta)<c_pc_WN_\beta$ whenever $N_\beta\ge N_0$. Moreover, take $\bar\lambda>0$ such that $\bar\lambda\min\{1,c_p\}\ge U(\delta).$ If $\lambda_i>\bar\lambda$, then, for every $n=0,1,\dots,N$,
$$u_{in}'(\bar x_{in})=\lambda_iq_n\ge \bar\lambda\min\{1,c_p\}\ge U(\delta).$$
Hence, $\bar x_{in}\le\delta$ for every $n=0,1,\dots,N$. In this case, by the budget equality,
$$
W_i
=
\bar x_{i0}+\sum_{n=1}^N \bar p_n\bar x_{in}
\le
\delta\left(1+\sum_{n=1}^N \bar p_n\right)
\le
\delta(1+C_pN_\beta)
<
c_pc_WN_\beta,
$$
contradicting the lower bound on $W_i$. Therefore, $\lambda_i\le\bar\lambda$.

Conversely, choose $B>0$ and increase $N_0$ if necessary, so that $B(1+c_pN_\beta)>C_W+C_pC_WN_\beta$ whenever $N_\beta\ge N_0$. Take $\underline\lambda>0$ such that $\underline\lambda\max\{1,C_p\}\le L(B).$ If $\lambda_i<\underline\lambda$, then, for every $n=0,1,\dots,N$,
$$
u_{in}'(\bar x_{in})=\lambda_iq_n\le \underline\lambda\max\{1,C_p\}\le L(B).
$$ 
Hence, $\bar x_{in}\ge B$ for every $n=0,1,\dots,N$. In this case,
$$
W_i
=
\bar x_{i0}+\sum_{n=1}^N \bar p_n\bar x_{in}
\ge
B\left(1+\sum_{n=1}^N \bar p_n\right)
\ge
B(1+c_pN_\beta)
>
C_W+C_pC_WN_\beta,
$$
contradicting the upper bound on $W_i$. Therefore, after possibly increasing $N_0$, we have 
$\underline\lambda\le \lambda_i\le \bar\lambda$ for every $i\in I_N$ whenever $N_\beta\ge N_0$.

Now, we complete the proof. Choose $c_x,C_x>0$ such that
$$
L(c_x)\ge \bar\lambda\max\{1,C_p\},
\qquad
U(C_x)\le \underline\lambda\min\{1,c_p\}.
$$
Using again
$u_{in}'(\bar x_{in})=\lambda_i q_n
\in
[\underline\lambda\min\{1,c_p\},\bar\lambda\max\{1,C_p\}],$
we know $\bar x_{in}\in[c_x,C_x]$ for all $i\in I_N$ and $n=0,\dots,N$. Finally, Assumption \textnormal{(A2)} gives
$c_u\bar x_{in}\le \bar r_{in}\le C_u\bar x_{in},$ which yields
$c_r\le \bar r_{in}\le C_r$
for all $i\in I_N$ and $n=0,\dots,N$. This completes the proof.

\subsection{Proof of Lemma \ref{lem:A-positive}}\label{proof:A-positive}
We take $N_\beta\ge N_0$. For each agent $i\in I_N$, since $\bar r_i^0=\bar r_i-\bar r_{i0}$ by definition,
$$
A
=
\sum_{i\in I_N}\frac{\bar r_i^0}{\bar r_i}\bigl(\bar r_{i0}+\omega_{i0}-\bar x_{i0}\bigr)
=
\sum_{i\in I_N}\bigl(\bar r_{i0}+\omega_{i0}-\bar x_{i0}\bigr)
-
\sum_{i\in I_N}\frac{\bar r_{i0}}{\bar r_i}\bigl(\bar r_{i0}+\omega_{i0}-\bar x_{i0}\bigr).
$$
By market clearing at date $0$, $\sum_{i\in I_N}(\omega_{i0}-\bar x_{i0})=0$. Hence
$A
=
\sum_{i\in I_N}\bar r_{i0}
-
\sum_{i\in I_N}\frac{\bar r_{i0}}{\bar r_i}\bigl(\bar r_{i0}+\omega_{i0}-\bar x_{i0}\bigr).$ By Proposition \ref{prop:equilibrium-bounds}, we have $\bar r_{i0}\ge c_r$ for every $i\in I_N$. Therefore, $\sum_{i\in I_N}\bar r_{i0}\ge c_r I_N.$

On the other hand, Proposition \ref{prop:equilibrium-bounds} and Assumption \textnormal{(A4)} imply that $\bar r_{i0}$, $\bar x_{i0}$, and $\omega_{i0}$ are uniformly bounded. Hence there exists a constant $C>0$, independent of $i$ and $N$, such that
$\bigl|\bar r_{i0}+\omega_{i0}-\bar x_{i0}\bigr|\le C$ for all $i\in I_N$. Moreover, by Proposition \ref{prop:equilibrium-bounds}, there exists a constant $C'>0$ such that
$\frac{\bar r_{i0}}{\bar r_i}\le \frac{C'}{N_\beta}$ for all $i\in I_N$. It follows that
$\left|
\sum_{i\in I_N}\frac{\bar r_{i0}}{\bar r_i}\bigl(\bar r_{i0}+\omega_{i0}-\bar x_{i0}\bigr)
\right|
\le
\frac{CC'}{N_\beta} I_N.$

Choose $N_1\ge N_0$ so that $CC'/N_\beta\le c_r/2$ whenever $N_\beta\ge N_1$. In this case, $A\ge c_r I_N-\frac{CC'}{N_\beta}I_N\ge \frac{c_r}{2}I_N.$
Thus \eqref{eq:A-positive} holds with $c_A:=c_r/2$.

\subsection{Proof of Lemma \ref{lem:R-order-one}} \label{proof:R-order-one}
We take $N_\beta\ge N_0$. Write $R(u)=\sum_{n=1}^N \alpha_n v_n$ for 
$\alpha_n=\bar p_n\sum_{i\in I_N}\frac{\bar r_{i0}}{\bar r_i}(\bar x_{in}-\omega_{in})$ for each  $n=1,\dots,N.$ By Proposition \ref{prop:equilibrium-bounds}, there exists a constant $C_\eta>0$ such that $\frac{\bar r_{i0}}{\bar r_i}\le \frac{C_\eta}{N_\beta}$ holds for all $i\in I_N$. Hence, using Proposition \ref{prop:equilibrium-bounds},
$$
|\alpha_n|
\le
\frac{C_\eta C_p\beta^n}{N_\beta}\sum_{i\in I_N}|\bar x_{in}-\omega_{in}|\le C_\tau I_N \pi_n^{N,\beta},
$$
for some $C_\tau>0$, where the second inequality follows from Proposition \ref{prop:equilibrium-bounds} and Assumption \textnormal{(A3)}, since
$$
\sum_{i\in I_N}|\bar x_{in}-\omega_{in}|
\le
\sum_{i\in I_N}\bar x_{in}+\sum_{i\in I_N}\omega_{in}
\le
C_x I_N+\Omega_n^N
\le
(C_x+C_\Omega)I_N.
$$ 
Finally, by the Cauchy--Schwarz inequality,
$$
|R(u)|
=
\left|\sum_{n=1}^N \alpha_n v_n\right|
\le
\left(
\sum_{n=1}^N \frac{\alpha_n^2}{\pi_n^{N,\beta}}
\right)^{1/2}
\left(
\sum_{n=1}^N \pi_n^{N,\beta} v_n^2
\right)^{1/2}.
$$
Since $\alpha_n^2/\pi_n^{N,\beta}\le C_\tau^2 I_N^2 \pi_n^{N,\beta}$ and $\sum_{n=1}^N \pi_n^{N,\beta}=1$, we obtain
$|R(u)|
\le
C_\tau I_N
\left(
\sum_{n=1}^N \pi_n^{N,\beta} v_n^2
\right)^{1/2}.$ This proves the lemma with $C_R:=C_\tau$.

\subsection{The Isoelastic Example}
\label{appendix:isoelasticexample}

In this subsection, we compute the equilibrium objects of the isoelastic example in Section \ref{subsec:comparisons}. We write $p_0=1$, $q_n=p_n/\beta^n$, and $\mu_i=\lambda_i^{-\sigma}$. The first-order conditions give
$$
x_{i0}=\mu_i,\qquad
x_{in}=\mu_i(1+\delta\varepsilon_{in})q_n^{-\sigma},
\quad n=1,\dots,N.
$$
Market clearing implies
$$
\frac{1}{\bar\omega I_N}\sum_{i\in I_N} \mu_i=1,\qquad
q_n^\sigma
=
\frac{1}{\bar\omega I_N}\sum_{i\in I_N} \mu_i(1+\delta\varepsilon_{in}),
\quad n=1,\dots,N.
$$
By Proposition \ref{prop:equilibrium-bounds}, the $q_n$'s are uniformly bounded above and away from zero whenever $N_\beta$ is sufficiently large.

Substituting the first-order conditions into the budget equation, we have
$$
\mu_i\left[
1+\sum_{n=1}^N\beta^n(1+\delta\varepsilon_{in})q_n^{1-\sigma}
\right]
=
\bar\omega\left(1+\sum_{n=1}^N\beta^nq_n\right)
+\eta_i\left(s_0+\sum_{n=1}^N\beta^nq_ns_n\right).
$$
Define the remainder $err_i$ by
$$
\mu_i=
\frac{
	\bar\omega\left(1+\sum_{n=1}^N\beta^nq_n\right)
	+\eta_i\left(s_0+\sum_{n=1}^N\beta^nq_ns_n\right)}
{1+\sum_{n=1}^N\beta^nq_n^{1-\sigma}}
+err_i.
$$
Since the numerator and denominator are uniformly comparable to $1+N_\beta$,
one can show that the budget equation implies
\begin{equation}
	|err_i|
	\le
	\frac{C_1}{1+N_\beta}
	\left|
	\sum_{n=1}^N\beta^n\varepsilon_{in}q_n^{1-\sigma}
	\right|
	=
	\frac{C_1}{1+N_\beta}
	\left|
	\sum_{n=1}^N\beta^n\varepsilon_{in}
	(q_n^{1-\sigma}-1)
	\right|
	\le
	\frac{C_2}{1+N_\beta}
	\sum_{n=1}^N\beta^n |q_n^{1-\sigma}-1|,
	\label{eqn:err}
\end{equation}
for some constants $C_1,C_2>0$, where the equality uses
$\sum_{n=1}^N\beta^n\varepsilon_{in}=0$ and the last inequality uses the uniform boundedness of $\varepsilon_{in}$.

The fraction defining $\mu_i-err_i$ is affine in $\eta_i$. Hence, the assumptions on $\varepsilon_{in}$ and orthogonality between $\varepsilon_{in}$ and $\eta_i$ imply 
$$
\frac{1}{I_N}\sum_{i\in I_N} \mu_i\varepsilon_{in}
=
\frac{1}{I_N}\sum_{i\in I_N} err_i\varepsilon_{in}.
$$
Combining this with market clearing gives
$$
q_n^\sigma-1
=
\frac{\delta}{\bar\omega I_N}\sum_{i\in I_N} err_i\varepsilon_{in}.
$$
Since the $q_n$'s are uniformly bounded above and away from zero, and the map $y\mapsto y^{(1-\sigma)/\sigma}$ is Lipschitz on the relevant compact interval bounding $q_n$,
$$
|q_n^{1-\sigma}-1|
=
\left|(q_n^\sigma)^{(1-\sigma)/\sigma}-1\right|
\le C_3|q_n^\sigma-1|
\le
C_4\left|
\frac{1}{I_N}\sum_{i\in I_N} err_i\varepsilon_{in}
\right|, 
$$
for $C_3,C_4>0$. Thus,
$$
\frac{1}{N_\beta}\sum_{n=1}^N
\beta^n(q_n^{1-\sigma}-1)^2
\le
\frac{C_4^2}{I_N^2}\sum_{i,j\in I_N} err_i err_j
\left(
\frac{1}{N_\beta}\sum_{n=1}^N \beta^n \varepsilon_{in}\varepsilon_{jn}
\right)\le
C_4^2 U_{N,\beta}^2 \kappa_{N,\beta},
$$
where
$U_{N,\beta}=\max_{i\in I_N}|err_i|,$ and 
$\kappa_{N,\beta}
=
\frac{1}{I_N^2}\sum_{i,j\in I_N}
\left|
\frac{1}{N_\beta}\sum_{n=1}^N \beta^n \varepsilon_{in}\varepsilon_{jn}
\right|.$
By assumption, $\kappa_{N,\beta}\to0$ as $N_\beta\to\infty$.

Now, by \eqref{eqn:err} and Cauchy--Schwarz, we have
$$
|err_i|
\le
\frac{C_2N_\beta}{1+N_\beta}
\left(
\frac{1}{N_\beta}\sum_{n=1}^N \beta^n (q_n^{1-\sigma}-1)^2
\right)^{1/2}
\le
C_5 U_{N,\beta} \kappa_{N,\beta}^{1/2},
$$
for some $C_5>0$. Taking the maximum over $i\in I_N$, we have
$U_{N,\beta}\le C_5 U_{N,\beta}\kappa_{N,\beta}^{1/2}.$ Since $\kappa_{N,\beta}\to0$, for all sufficiently large $N_\beta$ we have $C_5\kappa_{N,\beta}^{1/2}<1$. Therefore $U_{N,\beta}=0$, which implies $err_i=0$ for every $i\in I_N$. Hence,
$q_n^\sigma-1
=
\frac{\delta}{\bar\omega I_N}\sum_{i\in I_N}err_i\varepsilon_{in}
=
0,$
which implies $q_n=1$, and hence $p_n=\beta^n$, for every $n=1,\dots,N$.

With $q_n=1$, the budget equation and the discounted balance condition yield
$\mu_i=\bar\omega+\eta_i s^{N,\beta},$
where $s^{N,\beta}:=\frac{s_0+\sum_{n=1}^N\beta^n s_n}{1+N_\beta}.$
Thus, the allocation is given by
$$
x_{i0}=\bar\omega+\eta_i s^{N,\beta},\qquad
x_{in}=(\bar\omega+\eta_i s^{N,\beta})(1+\delta\varepsilon_{in}),
\quad n=1,\dots,N,
$$
and the shadow value of money is given by
$$
\lambda_i=(\mu_i)^{-1/\sigma}
=
(\bar\omega+\eta_i s^{N,\beta})^{-1/\sigma}.
$$

\section*{Acknowledgments}

This paper grew out of a chapter of my Yale PhD thesis. I am especially grateful to Truman Bewley and John Geanakoplos for their encouragement and advice. I have also benefited from helpful comments by Alberto Bisin, Andrei Gomberg, Michael Greinecker, Ali Khan, Xiangliang Li, Anna Rubinchik, Larry Samuelson, Eddie Schlee, Keisuke Teeple, Yakun Xi, Michael Zierhut, and workshop participants at the Central European Program in Economic Theory.

\addcontentsline{toc}{section}{References}

\bibliographystyle{abbrvnat}

\bibliography{RS}

@article{bewley1972existence,
 author  = {Bewley, Truman F.},
 title   = {Existence of equilibria in economies with infinitely many commodities},
 journal = {J. Econ. Theory},
 volume  = {4},
 number  = {3},
 pages   = {514--540},
 year    = {1972}
}

@article{stroyan1983myopic,
 author  = {Stroyan, K. D.},
 title   = {Myopic utility functions on sequential economies},
 journal = {J. Math. Econ.},
 volume  = {11},
 number  = {3},
 pages   = {267--276},
 year    = {1983}
}

@article{debreu1974excess,
 author  = {Debreu, Gerard},
 title   = {Excess demand functions},
 journal = {J. Math. Econ.},
 volume  = {1},
 number  = {1},
 pages   = {15--21},
 year    = {1974}
}

@article{vives1987small,
 author  = {Vives, Xavier},
 title   = {Small income effects: A {Marshallian} theory of consumer surplus and downward sloping demand},
 journal = {Rev. Econ. Stud.},
 volume  = {54},
 number  = {1},
 pages   = {87--103},
 year    = {1987}
}

@article{grandmont1992transformations,
 author  = {Grandmont, Jean-Michel},
 title   = {Transformations of the commodity space, behavioral heterogeneity, and the aggregation problem},
 journal = {J. Econ. Theory},
 volume  = {57},
 number  = {1},
 pages   = {1--35},
 year    = {1992}
}

@article{dierker1972number,
 author  = {Dierker, Egbert},
 title   = {Two remarks on the number of equilibria of an economy},
 journal = {Econometrica},
 volume  = {40},
 number  = {5},
 pages   = {951--953},
 year    = {1972}
}

@article{mantel1974characterization,
 author  = {Mantel, Rolf R.},
 title   = {On the characterization of aggregate excess demand},
 journal = {J. Econ. Theory},
 volume  = {7},
 number  = {3},
 pages   = {348--353},
 year    = {1974}
}

@article{quah1997law,
 author  = {Quah, John K.-H.},
 title   = {The law of demand when income is price dependent},
 journal = {Econometrica},
 volume  = {65},
 number  = {6},
 pages   = {1421--1442},
 year    = {1997}
}

@article{GeanakoplosWalsh2018,
 author  = {Geanakoplos, John and Walsh, Kieran James},
 title   = {Uniqueness and stability of equilibrium in economies with two goods},
 journal = {J. Econ. Theory},
 volume  = {174},
 pages   = {261--272},
 year    = {2018}
}

@article{kehoe1991gross,
 author  = {Kehoe, Timothy J. and Levine, David K. and Mas-Colell, Andreu and Woodford, Michael},
 title   = {Gross substitutability in large-square economies},
 journal = {J. Econ. Theory},
 volume  = {54},
 number  = {1},
 pages   = {1--25},
 year    = {1991}
}

@article{hayashi2013smallness,
 author  = {Hayashi, Takashi},
 title   = {Smallness of a commodity and partial equilibrium analysis},
 journal = {J. Econ. Theory},
 volume  = {148},
 number  = {1},
 pages   = {279--305},
 year    = {2013}
}

@article{hildenbrand1983law,
 author  = {Hildenbrand, Werner},
 title   = {On the law of demand},
 journal = {Econometrica},
 volume  = {51},
 number  = {4},
 pages   = {997--1019},
 year    = {1983}
}

@techreport{quah2004aggregate,
 author      = {Quah, John K.-H.},
 title       = {The aggregate weak axiom in a financial economy through dominant substitution effects},
 institution = {Economics Group, Nuffield College, University of Oxford},
 type        = {Economics Papers},
 number      = {2004-W18},
 year        = {2004}
}

@article{scarf1960some,
 author  = {Scarf, Herbert},
 title   = {Some examples of global instability of the competitive equilibrium},
 journal = {Int. Econ. Rev.},
 volume  = {1},
 number  = {3},
 pages   = {157--172},
 year    = {1960}
}

@article{bewley1980permanent1,
 author  = {Bewley, Truman},
 title   = {The permanent income hypothesis and short-run price stability},
 journal = {J. Econ. Theory},
 volume  = {23},
 number  = {3},
 pages   = {323--333},
 year    = {1980}
}

@article{jerison1999dispersed,
 author  = {Jerison, Michael},
 title   = {Dispersed excess demands, the weak axiom and uniqueness of equilibrium},
 journal = {J. Math. Econ.},
 volume  = {31},
 number  = {1},
 pages   = {15--48},
 year    = {1999}
}

@article{brown1981myopic,
 author  = {Brown, Donald J. and Lewis, Lucinda M.},
 title   = {Myopic economic agents},
 journal = {Econometrica},
 volume  = {49},
 number  = {2},
 pages   = {359--368},
 year    = {1981}
}

@article{sonnenschein1972market,
 author  = {Sonnenschein, Hugo},
 title   = {Market excess demand functions},
 journal = {Econometrica},
 volume  = {40},
 number  = {3},
 pages   = {549--563},
 year    = {1972}
}

@article{polterovich1978criteria,
 author  = {Mitjushin, Leonid G. and Polterovich, Victor M.},
 title   = {Criteria for monotonicity of demand functions},
 journal = {Ekonomika i Matematicheskie Metody},
 volume  = {14},
 number  = {1},
 pages   = {122--128},
 year    = {1978},
 note    = {In Russian}
}

@article{arrow1959stability,
 author  = {Arrow, Kenneth J. and Block, H. D. and Hurwicz, Leonid},
 title   = {On the stability of the competitive equilibrium, {II}},
 journal = {Econometrica},
 volume  = {27},
 number  = {1},
 pages   = {82--109},
 year    = {1959}
}

@article{toda2024recent,
 author  = {Toda, Alexis Akira and Walsh, Kieran James},
 title   = {Recent advances on uniqueness of competitive equilibrium},
 journal = {J. Math. Econ.},
 volume  = {113},
 pages   = {103008},
 year    = {2024}
}

@misc{lorenzoniwerning2025tatonnement,
 author       = {Lorenzoni, Guido and Werning, Iv{\'a}n},
 title        = {T{\^a}tonnement and price setting in general equilibrium},
 howpublished = {Slides presented at the {NBER} Summer Institute, Impulse and Propagation Mechanisms},
 note         = {July 9, 2025},
 url          = {https://conference.nber.org/conf_papers/f227101/f227101.slides.pdf},
 year         = {2025}
}

@article{bastianello2017topological,
 author  = {Bastianello, Lorenzo},
 title   = {A topological approach to delay aversion},
 journal = {J. Math. Econ.},
 volume  = {73},
 pages   = {1--12},
 year    = {2017}
}

@article{quah2000monotonicity,
 author  = {Quah, John K.-H.},
 title   = {The monotonicity of individual and market demand},
 journal = {Econometrica},
 volume  = {68},
 number  = {4},
 pages   = {911--930},
 year    = {2000}
}

@article{grandmont1987distributions,
 author  = {Grandmont, Jean-Michel},
 title   = {Distributions of preferences and the law of demand},
 journal = {Econometrica},
 volume  = {55},
 number  = {1},
 pages   = {155--161},
 year    = {1987}
}

@article{hayashi2008note,
 author  = {Hayashi, Takashi},
 title   = {A note on small income effects},
 journal = {J. Econ. Theory},
 volume  = {139},
 number  = {1},
 pages   = {360--379},
 year    = {2008}
}

@techreport{weretka2018quasilinear,
 author = {Weretka, Marek},
 title  = {Quasilinear approximations},
 type   = {Working paper},
 year   = {2018}
}

@article{keenan2025global,
 author  = {Keenan, Donald C. and Kim, Taewon},
 title   = {Global stability in large economies},
 journal = {Econ. Theory Bull.},
 volume  = {13},
 number  = {2},
 pages   = {189--199},
 year    = {2025},
 doi     = {10.1007/s40505-025-00288-y}
}

@article{freixas1987engel,
 author  = {Freixas, Xavier and Mas-Colell, Andreu},
 title   = {Engel curves leading to the weak axiom in the aggregate},
 journal = {Econometrica},
 volume  = {55},
 number  = {3},
 pages   = {515--531},
 year    = {1987}
}

@misc{milleron1974unicite,
 author       = {Milleron, Jean-Claude},
 title        = {Unicit{\'e} et stabilit{\'e} de l'{\'e}quilibre en {\'e}conomie de distribution},
 howpublished = {S{\'e}minaire d'{\'E}conom{\'e}trie Roy-Malinvaud},
 year         = {1974}
}

@article{weretka2026welfare,
 author  = {Weretka, Marek and Dec, Marcin},
 title   = {Welfare measurements with heterogeneous agents},
 journal = {J. Econ. Dyn. Control},
 volume  = {184},
 pages   = {105252},
 year    = {2026},
 doi     = {10.1016/j.jedc.2025.105252}
}

@article{araujo1985lack,
 author  = {Araujo, Aloisio},
 title   = {Lack of {Pareto} optimal allocations in economies with infinitely many commodities: The need for impatience},
 journal = {Econometrica},
 volume  = {53},
 number  = {2},
 pages   = {455--461},
 year    = {1985}
}

@article{dana1995extension,
 author  = {Dana, Rose-Anne},
 title   = {An extension of {Milleron}, {Mitjushin} and {Polterovich}'s result},
 journal = {J. Math. Econ.},
 volume  = {24},
 number  = {3},
 pages   = {259--269},
 year    = {1995}
}

@incollection{mas1991uniqueness,
 author    = {Mas-Colell, Andreu},
 title     = {On the uniqueness of equilibrium once again},
 booktitle = {Equilibrium Theory and Applications},
 editor    = {Barnett, William A. and Cornet, Bernard and D'Aspremont, Claude and Gabszewicz, Jean and Mas-Colell, Andreu},
 publisher = {Cambridge University Press},
 pages     = {275--296},
 year      = {1991}
}

@incollection{arrow1959toward,
 author    = {Arrow, Kenneth J.},
 title     = {Toward a theory of price adjustment},
 booktitle = {The Allocation of Economic Resources: Essays in Honor of Bernard Francis Haley},
 editor    = {Abramovitz, Moses},
 publisher = {Stanford University Press},
 address   = {Stanford, CA},
 pages     = {41--51},
 year      = {1959}
}

\end{document}